\newtheorem{Theorem}{Theorem}[section]
\newtheorem{Example}[Theorem]{Example}
\newtheorem{Definition}[Theorem]{Definition}
\newtheorem{Proposition}[Theorem]{Proposition}
\newtheorem{Lemma}[Theorem]{Lemma}
\newtheorem{Corollary}[Theorem]{Corollary}
\newtheorem{Remark}[Theorem]{Remark}
\newcommand{\pd}{\partial}
\newcommand{\bC}{{\mathbb C}}
\newcommand{\bN}{{\mathbb N}}
\newcommand{\bP}{{\mathbb P}}
\newcommand{\bZ}{{\mathbb Z}}
\newcommand{\cB}{{\mathcal B}}
\newcommand{\cE}{{\mathcal E}}
\newcommand{\cF}{{\mathcal F}}
\newcommand{\cM}{{\mathcal M}}
\newcommand{\cO}{{\mathcal O}}
\newcommand{\cP}{{\mathcal P}}
\newcommand{\half}{\frac{1}{2}}
\newcommand{\wB}{{\widehat B}}
\newcommand{\wC}{{\widehat C}}
\newcommand{\wP}{{\widehat P}}
\newcommand{\tGa}{{\widetilde{\Gamma}}}
\newcommand{\be}{\begin{equation}}
\newcommand{\ee}{\end{equation}}
\newcommand{\bea}{\begin{eqnarray}}
\newcommand{\ben}{\begin{eqnarray*}}
\newcommand{\een}{\end{eqnarray*}}
\newcommand{\eea}{\end{eqnarray}}
\DeclareMathOperator{\Id}{id}
\definecolor{yellow}{rgb}{1,1,0}
\definecolor{orange}{rgb}{1,.7,0}
\definecolor{red}{rgb}{1,0,0}
\definecolor{green}{rgb}{0,1,1}
\definecolor{white}{rgb}{1,1,1}
\definecolor{A}{rgb}{.75,1,.75}
\begin{document}

\newtheorem{myDef}{Definition}
\newtheorem{thm}{Theorem}
\newtheorem{eqn}{equation}

\title[Total partition function of local toric CY threefold]
{Total partition function with fermionic number fluxes\\of local toric Calabi--Yau threefold and\\KP integrability}

\author{Zhiyuan Wang}
\address{Zhiyuan Wang, School of Mathematics and Statistics,
	Huazhong University of Science and Technology,
	Wuhan, China}
\email{wangzy23@hust.edu.cn}

\author{Chenglang Yang}
\address{Chenglang Yang, Institute for Math and AI, Wuhan University, Wuhan, China}
\email{yangcl@pku.edu.cn}

\author{Jian Zhou}
\address{Jian Zhou, Department of Mathematical Sciences\\
	Tsinghua University, Beijing, China}
\email{jianzhou@mail.tsinghua.edu.cn}

\begin{abstract}

Aganagic, Dijkgraaf, Klemm, Mari\~{n}o and Vafa \cite{adkmv} predicted that
the open string partition function on a smooth toric Calabi--Yau threefold
should be a tau-function of multi-component KP hierarchy after considering the contributions
from nonzero fermion number fluxes through loops in the toric diagram.
In this paper,
we prove their prediction in the case of local toric Calabi--Yau threefolds.
More precisely,
we construct the total partition function of local toric Calabi--Yau threefolds using an operator on the fermionic Fock space which we developed in an earlier work \cite{wyz} to represent the topological vertex,
and show that the total partition function is the trace of an operator on the fermionic Fock space.
As an application,
we prove the KP integrability of the total partition function.
\end{abstract}

\maketitle


\section{Introduction}

In recent years,
integrable systems play an important role in the study of geometry of moduli spaces,
especially in Gromov--Witten theory.
The famous Witten Conjecture/Kontsevich Theorem \cite{wit,kon}
states that the generating function of intersection numbers of psi-classes
on the moduli spaces of stable curves is a tau-function of the KdV hierarchy.
Later more examples relating invariants in geometry to certain integrable hierarchies are discovered,
including the Hurwitz numbers \cite{pa, Ok1},
Gromov--Witten theory of $\mathbb{P}^1$ \cite{op},
$r$-spin models \cite{witten,fsz},
the Fan--Jarvis--Ruan--Witten theory \cite{fjr, lrz},
and the main subject of this paper --- the open Gromov--Witten invariants of
smooth toric Calabi--Yau threefolds \cite{adkmv, dz2, wyz},
etc.

In general Gromov--Witten invariants of a symplectic manifold is very difficult to compute.
When it admits some torus action,
one can consider the induced torus action on the moduli spaces and apply localization techniques.
This is especially efficient in  the case of smooth toric Calabi--Yau threefolds.
These are noncompact Calabi--Yau threefolds which admits construction via toric geometry.
In particular,
such spaces can be described by some combinatorial data called toric diagrams.
For examples,
local $\bP^2$ means the total space of the canonical line bundle $\cO_{\bP^2}(-3)$,
and local  $\bP^1\times \bP^1$ means $\cO_{\bP^1\times \bP^1}(-2,-2)$,
and their toric diagrams are listed in Figure \ref{fig1}.
\begin{figure}[H]
\begin{tikzpicture}[scale=0.5]
\draw [thick] (0,0) -- (0,2);
\draw [thick] (0,0) -- (2,0);
\draw [thick] (2,0) -- (0,2);
\draw [thick] (0,0) -- (-0.9,-0.9);
\draw [thick] (0,2) -- (-0.6,3.2);
\draw [thick] (2,0) -- (3.2,-0.6);
\node [align=center,align=center] at (1,-1.4) {local $\bP^2$};
\draw [thick] (8,0) -- (10,0);
\draw [thick] (8,0) -- (8,2);
\draw [thick] (8,2) -- (10,2);
\draw [thick] (10,0) -- (10,2);
\draw [thick] (8,0) -- (7.1,-0.9);
\draw [thick] (10,0) -- (10.9,-0.9);
\draw [thick] (8,2) -- (7.1,2.9);
\draw [thick] (10,2) -- (10.9,2.9);
\node [align=center,align=center] at (9,-1.4) {local $\bP^1\times \bP^1$};
\end{tikzpicture}
\caption{Toric diagrams for local $\bP^2$ and local $\bP^1\times \bP^1$}
\label{fig1}
\end{figure}
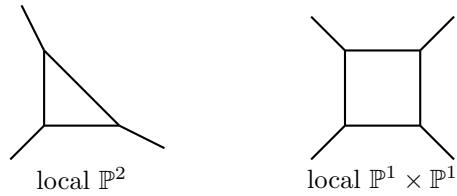

For a physicist,
such diagrams look very similar to the Feynman diagrams in quantum field theory.
Amazingly,
string theorists discovered a method to compute the Gromov--Witten invariants of such spaces
by suitable Feymann rules.
This is the notion of the topological vertex first proposed by physicists
Aganagic, Klemm, Mari\~{n}o, and Vafa in their study of
the connection between open string theory and large $N$ Chern--Simons theory \cite{akmv}.
The topological vertex is the generating series of open Gromov--Witten invariants of $\bC^3$,
and the open Gromov--Witten invariants of a general toric CY threefold can be obtained by certain gluing procedures of the topological vertex
by assigning suitable weights to the internal edges.
This is slightly different from the Feynman rules in quantum field theory.
Even though the toric diagrams may have some loops,
no integrals of loop momenta are involved when we compute the (open) Gromov--Witten invariants.
See \cite{lllz,moop} for mathematical theories of the topological vertex.

The topological vertex not only provides a computational tool,
but also can be used to establish the relationship between the open Gromov--Witten theory of a general toric Calabi--Yau threefold
and the theory of integrable hierarchies.
In \cite{adkmv}, Aganagic--Dijkgraaf--Klemm--Mari\~{n}o--Vafa proposed that one can reconstruct the topological vertex via certain fermionic field on a Riemann surface.
Based on the fermionic picture,
they conjectured that the topological vertex has a fermionic representation
as a Bogoliubov transform
of the fermionic vacuum vector in the $3$-component fermionic Fock space,
where the coefficients are given by some explicit closed formulas.
As a consequence,
the topological vertex provides a tau-function of the $3$-component KP hierarchy,
and it should be able to  generalize  this integrability  to a general Calabi--Yau threefold.
(For the definition of the multi-component KP hierarchy,
see the paper of Kac and van de Leur \cite{kv}.)
This representation of the topological vertex was called the ADKMV Conjecture
in \cite{dz1, dz2, wyz, kas}.
Recently,
this conjecture and its framed version \cite{dz1} were proved by the authors in \cite{wyz},
and a straightforward consequence of this result
is that the topological vertex provides a tau-function of the $3$-component KP hierarchy.
As an application,
one easily sees the multi-component KP integrability
of the open Gromov--Witten theory of a toric Calabi--Yau threefold whose
toric diagram is a tree by the gluing rule \cite{adkmv, dz2, kas, wyz}.

When the toric diagram of a toric Calabi--Yau threefold $X$ contains loops,
the problem is more involved.
In this case the open string amplitudes do not directly correspond to a tau-function of
the multi-component KP hierarchy
since it only contains the contributions of zero fermion number sector,
see \cite[\S 4.7]{adkmv}.
From a physical viewpoint of local mirror symmetry,
the mirror curve is a sort of fatting of the toric diagram.
When the toric diagram has loops,
the mirror curve has handles.
Aganagic {\em et al} introduced $g$ new variables $\theta_1,\dots, \theta_g$ when the toric diagrams has $g$ independent loops,
and considered the total partition function:
\be  \label{eqn:Charge}
Z(\theta_i) = \sum_{N_1,\cdots,N_g\in \bZ} Z_{N_1,\dots,N_g} \cdot \exp(\sum_{i=1}^g iN_i\theta_i)
\ee
 where $Z_{N_1,\dots,N_g}$ should be the contribution of the sector with fermion number flux $N_i$ through the $i$-th handle.
 Their motivation is that the total partition function should  still be represented in the fermionic picture as a Bogoliubov transform,
 and hence it gives rise to a tau-function of the multicomponent KP hierarchy.
 In  \cite{dz2}, the approach of   using the gluing the topological vertex in the fermionic picture was taken to verify \eqref{eqn:Charge}.

 As pointed out in \cite{adkmv},
 $Z_{N_1,\dots,N_g}$ may not contain more information than $Z_{N_1=0, \dots, N_g=0}$.
 It was argued that  $Z_{N_1,\dots,N_g}$ should be determined from $Z_{N_1=0, \dots, N_g=0}$
 by a shift in the K\"ahler parameters, more precisely   \cite[(4.17)]{adkmv}:
 \be  \label{eqn:Shift}
 Z_{N_1, \dots, N_g}(S_1, \dots, S_g) = Z_{N_1=0, \dots, N_g=0}(S_i + N_im_ig_s),
 \ee
 where $m_i$ are some geometrically defined intersection numbers,
 and $g_s$ is the string coupling constant.
It would be very interesting to study \eqref{eqn:Shift}
using the approach of  fermionic gluing principle in \cite{dz2},
but such a treatment has not yet appeared.

In this paper,
we will present a new approach that simultaneously deals with both \eqref{eqn:Charge} and \eqref{eqn:Shift}.
We will focus on the case of one-loop toric diagrams,
in which \eqref{eqn:Charge} takes the form
\be
\label{eq-intro-Ztotal}
Z^{\text{total}} (\mathbf{t};Q_i,\gamma_i;\Xi) =
\sum_{N\in \bZ} Z^{(N)}(\mathbf{t}; Q_i,\gamma_i)\cdot \Xi^{-N}.
\ee
Surprisingly, we find that \eqref{eqn:Shift} needs some modification to be true.
The modified formula turns to be
\begin{equation}  \label{eqn-N-0-0}
Z^{(N)} (\bm 0; Q_i,\gamma_i) =
(Q_1\cdots Q_M)^{\frac{N^2}{2}} \cdot (-1)^{\gamma \frac{N^2}{2}} \cdot q^{(\gamma+2M)\frac{N(4N^2 -1)}{24}}
\cdot Z (\bm 0; q^{(\gamma_i+2)N}Q_i,\gamma_i),
\end{equation}
and more generally,
\be  \label{eqn-N-0-t}
 Z^{(N)} (\mathbf{t};Q_i,\gamma_i)
= q^{-NL_0} \cdot (Q_1\cdots Q_M)^{\frac{N^2}{2}} (-1)^{\gamma \frac{ N^2}{2}}
q^{(\gamma +2M) \frac{N(4N^2 -1)}{24}}
Z(\mathbf{t}; q^{(\gamma_i+2)N}Q_i,\gamma_i),
\ee
See \S \ref{sec:def Ztotal} for detailed explanations of the notations.
Note that
even though $Z^{(N)} (\bm 0; Q_i,\gamma_i) $ and $Z^{(N)} (\mathbf{t};Q_i,\gamma_i)$  come out of a fermionic approach
and they do not seem to have direct geometric interpretations in terms of Gromov--Witten invariants,
the equations \eqref{eqn-N-0-0} and  \eqref{eqn-N-0-t} give them such interpretations in an indirect way.
The factor on the right-hand sides of these equation do not seem to have been predicted by physicists,
and they deserve further studies.

Here we are considering a formal toric Calabi--Yau threefold  $X$  whose toric diagram is a polygon:
\begin{equation*}
\begin{tikzpicture}[scale=0.35]
\draw [thick] (-1,0) -- (1,0);
\draw [thick] (1,0) -- (2.414,-1.414);
\draw [thick] (1,0) -- (1,1.5);
\draw [thick] (2.414,-1.414) -- (2.414,-3.414);
\draw [thick] (1,-4.828) -- (2.414,-3.414);
\draw [thick] (1,-4.828) -- (-1,-4.828);
\draw [thick] (1,-4.828) -- (1,-6.328);
\draw [thick] (2.414,-1.414) -- (4.414-0.5,-1.414);
\draw [thick] (2.414,-3.414) -- (4.414-0.5,-3.414);
\draw [thick] (-1,0) -- (-2,-1);
\draw [thick] (-1,-4.828) -- (-2,-3.828);
\draw [thick] (-1,0) -- (-1,1.5);
\draw [thick] (-1,-4.828) -- (-1,-6.328);
\node [align=center,align=center] at (-2.25,-2.2) {{\bf $\vdots$}};
\end{tikzpicture}
\end{equation*}
The combinatorial data needed to apply the gluing rule of topological vertices
are some K\"ahler parameters $Q_i$ and framing factors $\gamma_i$ for $i=1,\cdots,M$,
where $M$ is the number of vertices in the toric diagram.
In particular,
if $X$ is the total space of the canonical bundle on a non-singular toric Fano surface,
then the toric diagram associated with $X$ is of this form,
and the K\"ahler parameters correspond the size of invariant curves
while the framing factors correspond to the self-intersection numbers of invariant rational curves in the toric surface.
We use \eqref{eqn-N-0-t} to construct a total partition function of the open string theory on $X$.
For simplicity we will restrict to the case of one outer brane in this work,
and the result can be directly generalized to the general case of several outer branes and
multi-component KP hierarchy.
More precisely, we have:
\begin{Theorem}
\label{thm-intro-1}
Denote by $Z(\mathbf{t};Q_i,\gamma_i)$ the partition function of open string amplitudes of $X$ with one outer brane,
where $X$ is a formal Calabi--Yau manifold whose toric diagram consists of a polygon together with
an external leg attached to each vertex,
and $\gamma_i$, $Q_i$ are the framing factors and K\"ahler parameters respectively.
Then the total partition function
\be
\label{eq-intro-Ztotal}
Z^{\text{total}} (\mathbf{t};Q_i,\gamma_i;\Xi) =
\sum_{N\in \bZ} Z^{(N)}(\mathbf{t}; Q_i,\gamma_i)\cdot \Xi^{-N},
\ee
a tau-function of the KP hierarchy, and
$$Z^{(0)}(\mathbf{t};Q_i,\gamma_i)=Z(\mathbf{t};Q_i,\gamma_i).$$
\end{Theorem}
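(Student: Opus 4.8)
The plan is to exploit the fermionic (Bogoliubov) representation of the framed topological vertex established in \cite{wyz}, convert the gluing of topological vertices around the polygon into a single \emph{trace} over the fermionic Fock space, and then deduce KP integrability from a general property of such traces. The equality $Z^{(0)}(\mathbf t;Q_i,\gamma_i)=Z(\mathbf t;Q_i,\gamma_i)$ will be immediate: setting $N=0$ in \eqref{eqn-N-0-t} makes every prefactor equal to $1$ and leaves $Z(\mathbf t;q^{0}Q_i,\gamma_i)=Z(\mathbf t;Q_i,\gamma_i)$. So the real content is the claim that $Z^{\text{total}}$ is a KP tau-function.

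First I would set up the trace representation. Recall from \cite{wyz} that the framed open topological vertex is realised by an explicit operator on the charge-graded fermionic Fock space $\cF=\bigoplus_{N\in\bZ}\cF^{(N)}$, and that gluing two vertices along an internal edge with K\"ahler parameter $Q$ and framing $\gamma$ amounts to composing the corresponding operators after inserting a propagator $\hat{P}(Q,\gamma)$, namely $Q^{L_0}$ twisted by the framing factor $q^{\gamma\kappa(\lambda)/2}$ (with the standard sign of the gluing rule). Placing the outer brane at a vertex $v_1$ and cutting the polygon along one internal edge, the resulting linear chain of $M$ vertices has open amplitude, for prescribed partitions on the two cut ends, equal to a matrix element of the composite operator
\[
\hat{\cA}(\mathbf t)\;=\;\hat{C}_1(\mathbf t)\,\hat{P}(Q_1,\gamma_1)\,\hat{C}_2\,\hat{P}(Q_2,\gamma_2)\cdots\hat{C}_M\,\hat{P}(Q_M,\gamma_M),
\]
where the $\hat{C}_i$ are the vertex operators and $\hat{C}_1$ carries in addition the half-vertex-operator $\Gamma_+(\mathbf t)$ encoding the outer brane. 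Re-gluing the cut edge is the partial trace over the Fock space on that edge, which after composition is $\cF$ itself; introducing $\Xi$ as a fugacity $\Xi^{-\hat{N}}$ for the charge operator $\hat{N}$ gives
\[
Z^{\text{total}}(\mathbf t;Q_i,\gamma_i;\Xi)\;=\;\operatorname{Tr}_{\cF}\bigl(\Xi^{-\hat{N}}\,\hat{\cA}(\mathbf t)\bigr),
\]
which, working formally in the $Q_i$, is a well-defined formal series since $L_0$ is bounded below with finite-dimensional graded pieces on each $\cF^{(N)}$.

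Next I would split this by fermion charge. Each $\hat{C}_i$, each $\hat{P}(Q_i,\gamma_i)$ and $\Gamma_+(\mathbf t)$ commutes with $\hat{N}$ (charge conservation of the gluing, and of the Heisenberg modes), so $\hat{\cA}(\mathbf t)$ is block diagonal with respect to $\cF=\bigoplus_N\cF^{(N)}$ and
\[
\operatorname{Tr}_{\cF}\bigl(\Xi^{-\hat{N}}\hat{\cA}(\mathbf t)\bigr)=\sum_{N\in\bZ}\Xi^{-N}\operatorname{Tr}_{\cF^{(N)}}\bigl(\hat{\cA}(\mathbf t)\bigr),
\]
which matches \eqref{eq-intro-Ztotal}. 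I would then identify $\operatorname{Tr}_{\cF^{(N)}}(\hat{\cA}(\mathbf t))$ with the right side of \eqref{eqn-N-0-t} by conjugating $\hat{\cA}$ with the shift operator intertwining $\cF^{(N)}$ and $\cF^{(0)}$ and extracting the zero-mode contributions of the $Q_i^{L_0}$, of the framing twists, and of the overall $q^{-NL_0}$; this is precisely what produces the prefactors $(Q_1\cdots Q_M)^{N^2/2}$, $(-1)^{\gamma N^2/2}$ and the $q$-powers there, i.e.\ the modification of the physicists' shift \eqref{eqn:Shift}. At $N=0$ all prefactors are $1$ and the intermediate partitions carry zero charge, so $\operatorname{Tr}_{\cF^{(0)}}(\hat{\cA}(\mathbf t))$ is, essentially by construction, the ordinary gluing of the open topological-vertex amplitude around the loop, consistent with $Z^{(0)}=Z$ (cf.\ \cite{lllz,moop,dz2}).

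Finally I would prove KP integrability. The key point is that $\hat{\cA}(\mathbf 0)$ lies in a completion of the Clifford group $GL(\infty)$: each vertex operator of \cite{wyz} is a Clifford-group (Bogoliubov) element --- essentially the content of the ADKMV conjecture proved there --- each $\hat{P}(Q_i,\gamma_i)$ is a diagonal Clifford-group element, the Clifford group is closed under composition, and $\Gamma_+(\mathbf t)$ is again such an element. The theorem then reduces to the general fact that for $\hat{g}$ in a completion of the Clifford group, $\mathbf t\mapsto\operatorname{Tr}_{\cF}\bigl(\Xi^{-\hat{N}}\Gamma_+(\mathbf t)\hat{g}\bigr)$ is a KP tau-function; I would prove this by purifying the Gibbs-type operator $\Xi^{-\hat{N}}$ into a Bogoliubov vector in the doubled Fock space $\cF\otimes\cF$ (on which $\Gamma_+(\mathbf t)\hat{g}$ still acts as a group-like element) and appealing to the standard theorem that a vertex-operator matrix element of a group-like vector solves the KP hierarchy --- equivalently, by evaluating the trace via Wick's theorem as a determinant whose $\mathbf t$-dependence is manifestly of tau-function form, or by checking Hirota's bilinear identity directly from the facts that $\hat{g}\otimes\hat{g}$ commutes with $\sum_{r\in\bZ+1/2}\psi_r\otimes\psi_r^*$ and that the trace is cyclic. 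This last step is the main obstacle, and it is exactly what forces the sum over $N$ in \eqref{eq-intro-Ztotal}: the individual $Z^{(N)}=\operatorname{Tr}_{\cF^{(N)}}(\hat{\cA})$ are traces over a \emph{single} charge component and are not tau-functions, whereas the $\Xi$-weighted total is; making this rigorous (and controlling the trace of the Clifford-group element together with the chemical potential $\Xi^{-\hat{N}}$) is the heart of the proof. The several-brane, multi-component generalisation then follows by inserting one half-vertex-operator per brane and invoking the multi-component boson--fermion correspondence of \cite{kv}.
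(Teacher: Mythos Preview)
Your overall architecture is correct and matches the paper's: represent the gluing around the polygon as a trace on the full fermionic Fock space with a charge fugacity $\Xi^{-C}$, decompose by charge to get \eqref{eq-intro-Ztotal}, identify the $N=0$ piece with the ordinary open-string partition function, and then prove KP integrability of the full trace. The paper carries this out via exactly the trace expression \eqref{eq-ferm-Ztotal} and the shift-operator computation you sketch.

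One point needs sharpening. You assert that $\hat{\cA}(\mathbf 0)$ lies in the Clifford group, but in the paper's single-copy representation the outer-brane vertex is encoded not by a group-like operator but by the operator $\Psi_\lambda(q)$, which for $\lambda=(m_1,\dots,m_k\mid n_1,\dots,n_k)$ contains $k$ pairs of genuine fermion insertions $\psi,\psi^*$ (through the modified vertex operators $\tGa_\pm$) interspersed among honest $\Gamma_\pm$'s. So the object whose trace one must analyse is a product of group-like elements \emph{and} fermion fields, not a single group element. The paper handles this with its key technical lemma (Lemma~\ref{lem-det}): for group-like $G_i,G_i^*$ and linear combinations $\varphi_i,\varphi_i^*$ of fermions, the normalised trace
\[
\frac{\operatorname{Tr}_\cF\bigl(\varphi_1^*G_1^*\cdots\varphi_k^*G_k^*\,\varphi_kG_k\cdots\varphi_1G_1\bigr)}
{\operatorname{Tr}_\cF\bigl(G_1^*\cdots G_k^*G_k\cdots G_1\bigr)}
\]
is the $k\times k$ determinant of the single-insertion traces. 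The proof uses precisely the two ingredients you name last --- the bilinear relation $[G\otimes G,\sum_r\psi_r^*\otimes\psi_{-r}]=0$ and cyclicity of the trace --- and is a generalisation of the finite-temperature Wick lemma. Applied to \eqref{eq-ferm-Ztotal}, this gives the Schur expansion $Z^{\text{total}}=c_0\sum_\lambda\det(A_{m_i,n_j})s_\lambda$, hence KP integrability and an explicit formula for the affine coordinates. Your ``purification in $\cF\otimes\cF$'' route would be a genuine alternative, but the paper does not take it; committing to the Wick/determinant argument and making the fermion-insertion structure explicit is what turns your outline into a proof.
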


The details of the construction of these formal functions $\{Z^{(N)}(\mathbf{t};Q_i,\gamma_i)\}_{N\in\mathbb{Z}}$
will be given in \S \ref{sec:def Ztotal}.
It is well-known that a formal power series tau-function of the KP hierarchy corresponds to
a point on an semi-infinite dimensional Grassmannian called Sato Grassmannian,
due to the famous Sato theory of integrable systems \cite{djm, sa}.
We will write down a formula for the affine coordinates of the point corresponding to
the tau-function $Z^{\text{total}} (\mathbf{t};Q_i,\gamma_i;\Xi)$
on the big cell of the Sato Grassmannian,
see \S \ref{sec-KPaffine} for details.

We apply the above results to the case of the total space $X$ of the canonical line bundle on a smooth toric Fano surface $S$.
In this case $s_i = -\log Q_i$ and $\gamma_i$ are
the K\"ahler parameters and self-intersection numbers of invariant rational curves in $S$ respectively.
We have:
\begin{Corollary}\label{cor:main ZS}
Let $X$ be the total space of the canonical bundle on a non-singular  toric Fano surface $S$,
and denote by $Z_S(\mathbf{t})$ the generating series of the open Gromov--Witten invariants of $X$
with one outer brane.
Then $Z_S(\mathbf{t})$ can be extended into a tau-function
$$Z_S^{\text{total}}(\mathbf{t};\Xi)
=\sum_{N\in \bZ} Z^{(N)}_S(\mathbf{t}) \cdot \Xi^{-N}$$ of the KP hierarchy such that $Z^{(0)}_S(\mathbf{t})=Z_S(\mathbf{t})$.
\end{Corollary}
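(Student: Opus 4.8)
\emph{Strategy of the proof.} The plan is to realise $Z^{\mathrm{total}}$ as the trace of a Clifford-group operator on a one-component fermionic Fock space and then appeal to Sato theory. The input is the fermionic operator representation of the topological vertex from \cite{wyz}: the topological vertex is a Bogoliubov transform of the vacuum in the three-component fermionic Fock space, so that building the open amplitude of $X$ by the gluing rule of \cite{adkmv,dz2} amounts to composing, along the edges of the toric diagram, the corresponding Clifford-group operators together with the propagator operators $Q_i^{L_0}$ (K\"ahler parameters) and the framing operators attached to $\gamma_i$. For a tree the composition produces a vacuum-to-vacuum amplitude --- this is why the tree case already gives a multicomponent KP tau-function --- whereas the presence of the polygon forces one Fock-space factor to be closed up on itself, turning that vacuum expectation into a trace over a single copy $\cF$. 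The fermion number running around the loop is exactly the charge label of $\cF$, so the flux-$N$ sector $Z^{(N)}$ is the restriction of the trace to the charge-$N$ subspace $\cF^{(N)}$.

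First I would establish the zero-flux identity: by \cite{wyz} and the gluing rule, $Z(\mathbf t;Q_i,\gamma_i)=\mathrm{Tr}_{\cF^{(0)}}\bigl(\Gamma(\mathbf t)\,\cG(Q_i,\gamma_i)\bigr)$, where $\Gamma(\mathbf t)$ is the half-vertex operator carrying the outer brane and $\cG$ is the cyclically ordered product, around the polygon, of the fermionic vertex operators of \cite{wyz} interlaced with the propagators $Q_i^{L_0}$ and the framing operators. To reach the flux-$N$ sector I would rewrite the trace over $\cF^{(N)}$ as the trace over $\cF^{(0)}$ of the conjugate by the charge-shift operator. This conjugation is anomalous: it multiplies each $Q_i^{L_0}$ by $Q_i^{N^2/2}$ (a quadratic anomaly, since $L_0$ is bilinear in the modes), it multiplies the framing operators by a power of $q$ of degree three in $N$ (the framing operator being bilinear in the fermion modes but weighted by the quadratic Casimir, so that its charge-shift anomaly is cubic --- this is the source of the $N(4N^2-1)$ term), it produces the sign $(-1)^{\gamma N^2/2}$, and it leaves a residual operator $q^{-NL_0}$ acting on $\Gamma(\mathbf t)$ that rescales the times. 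Collecting these reproduces precisely the defining formula \eqref{eqn-N-0-t}; this matching is the main computational lemma, and it explains why the correction factor depends on the number $M$ of vertices and on the $\gamma_i$, and why it was invisible in the naive shift \eqref{eqn:Shift}. Summing over $N$ then gives $Z^{\mathrm{total}}(\mathbf t;Q_i,\gamma_i;\Xi)=\mathrm{Tr}_{\cF}\bigl(\Xi^{-\hat c}\,\Gamma(\mathbf t)\,\cG(Q_i,\gamma_i)\bigr)$ with $\hat c$ the charge operator, and the $\Xi^{0}$-coefficient is, by construction, $Z^{(0)}=Z(\mathbf t;Q_i,\gamma_i)$.

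For the integrability I would purify the trace. Using the standard identification of a Fock-space trace with a vacuum expectation on the doubled space $\cF\otimes\cF^{\vee}$ against the Bogoliubov state that implements the trace, $\mathrm{Tr}_{\cF}\bigl(\Xi^{-\hat c}\,\Gamma(\mathbf t)\,\cG\bigr)$ becomes a matrix element of the form $\llangle\,\Gamma(\mathbf t)\,(\cG\otimes 1)\,\rrangle$ evaluated on a state that is itself a Bogoliubov transform of the vacuum. Since $\cG$ lies in the Clifford group, the entire operator acting on this state is again a Clifford-group element, so under the boson--fermion correspondence $Z^{\mathrm{total}}$ corresponds to a decomposable vector in the semi-infinite wedge space, equivalently a point of the Sato Grassmannian; it therefore obeys the Pl\"ucker/Hirota bilinear relations and is a tau-function of the KP hierarchy in the single time-set $\mathbf t$ (the version with several outer branes and the multicomponent KP hierarchy being parallel, cf. \S\ref{sec:def Ztotal}, \cite{djm,sa}). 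Reading off the matrix of this Clifford-group element then supplies the affine coordinates on the big cell of \S\ref{sec-KPaffine}. Finally Corollary \ref{cor:main ZS} is immediate: for $X$ the total space of the canonical bundle over a nonsingular toric Fano surface $S$ the toric diagram is exactly a polygon with one external leg at each vertex, with $s_i=-\log Q_i$ and $\gamma_i$ the K\"ahler parameters and self-intersection numbers of the invariant rational curves in $S$, and $Z_S(\mathbf t)=Z(\mathbf t;Q_i,\gamma_i)$, so Theorem \ref{thm-intro-1} applies verbatim.

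The step I expect to be the genuine obstacle is twofold. On the one hand, the bookkeeping in the charge-$N$ trace: one must control infinitely many normal-ordering constants so as to land \emph{exactly} on the degree-three-in-$N$ prefactor of \eqref{eqn-N-0-t}, and one must verify that $\cG$ really lies in the restricted general linear group --- i.e. that the relevant off-diagonal blocks are Hilbert--Schmidt --- so that the trace converges on each charge sector and Sato's dictionary applies. On the other hand, one must pin down that loop-gluing of the fermionic vertices of \cite{wyz} produces an honest trace over a single Fock factor (rather than a twisted or half-trace), with the framing operators correctly placed; once that combinatorial identity is secured, the remainder is the routine translation into the power-sum variables $\mathbf t$.
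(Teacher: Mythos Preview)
Your trace representation of $Z^{(0)}$ on $\cF^{(0)}$ and the charge-shift computation producing the cubic-in-$N$ prefactor of \eqref{eqn-N-0-t} are exactly what the paper does in \S\ref{sec-TV}--\S\ref{sec:def Ztotal}: the anomaly identities you anticipate are Lemmas \ref{lem-comm-RN}--\ref{lem-comm-RN-2}, and the outcome is Theorem \ref{thm-Ztotal-reformulate}. One point to correct, though: in the paper's setup the outer brane is \emph{not} carried by a half-vertex operator $\Gamma(\mathbf t)$. The outer partition $\lambda$ sits on the middle leg of the vertex, so by \eqref{eq-fermrep-wyz} it enters as the operator $\Psi_\lambda(q)$ of \eqref{eq-def-Psi}; for $\lambda=(m_1,\dots,m_k\,|\,n_1,\dots,n_k)$ this is obtained from the Clifford-group element $\Psi_{(\emptyset)}(q)$ by replacing $2k$ of its $\Gamma_\pm$-factors with the fermion-dressed operators $\tGa_\pm$ of \eqref{eq-tGa-reform}, i.e.\ by inserting $k$ copies of $\psi$ and $k$ of $\psi^*$ at prescribed spectral points inside a fixed product of group elements. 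The $\mathbf t$-dependence of $Z^{\mathrm{total}}$ is through $\sum_\lambda s_\lambda(\mathbf t)\,\Psi_\lambda(q)$, not through a bare $\Gamma(\mathbf t)$, and it is this structure the integrability argument must exploit.

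For the integrability itself the paper takes a different route from your doubling/thermofield argument. It proves directly a finite-temperature Wick formula (Lemma \ref{lem-det}, generalising \cite{gau,bb}): for Clifford-group elements $G_i,G_i^*$ and fermion fields $\varphi_i,\varphi_i^*$, the ratio of $\mathrm{Tr}_\cF\bigl(\varphi_1^*G_1^*\cdots\varphi_k^*G_k^*\,\varphi_kG_k\cdots\varphi_1G_1\bigr)$ to the trace without the $\varphi$'s is a $k\times k$ determinant of one-insertion ratios. The proof is an induction that moves $\sum_r\psi_r\otimes\psi_{-r}^*$ through the $G$'s via the Hirota relation \eqref{eq-Hirota-G} and then closes up using the cyclicity of the trace --- this is precisely where the loop in the toric diagram is felt. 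Applied with the $\varphi$'s the fermion insertions hidden in $\Psi_\lambda$ and the $G$'s the surrounding $\Gamma_\pm$, $q^K$, $Q_i^L$, $\Xi^C$ factors, this yields the Schur expansion of $Z^{\mathrm{total}}$ with determinantal coefficients (Theorem \ref{thm:Determinantal}), hence a KP tau-function, with affine coordinates given by the explicit one-point trace ratios \eqref{eq-def-Amn}. Your doubling approach should also succeed once the boundary state is regularised by the $Q_i^L$, and it makes the Grassmannian picture more transparent; the paper's route, by contrast, stays in a single copy of $\cF$ and delivers the affine coordinates in a form that can then be evaluated in closed form, as in \S\ref{sec-eg}.
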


A tau-function $\tau(\mathbf{t})$ of the KP hierarchy
corresponding to a point on the big cell of the Sato Grassmannian
is uniquely determined by its affine coordinates and the constant term $\tau(\bm 0)$.
In a special case,
we are able to carry out the concrete computations and write down
the explicit formulas for the constant term and affine coordinates
of the total partition function,
see \S \ref{sec-eg} for details.
Take $\gamma_1=\cdots =\gamma_M =-2$,
and denote by $Z^{\text{total}} (\mathbf{t}; Q_i,-2; \Xi)$ the total partition function of this special model.
(We are inspired by Kulikov elliptic surface singularity of type $I_{M-1}$ whose exceptional divisor of the minimal resolution
is  a cycle of $(-2)$-curves \cite[p. 294]{nem}.)
In this case we have:
\begin{Theorem} \label{thm-Modular}
The constant term of the total partition function $Z^{\text{total}} (\mathbf{t}; Q_i,-2; \Xi)$ is:
\begin{equation*}
\begin{split}
Z^{\text{total}} (\bm 0; Q_i,-2; \Xi)= &
\Theta_3 \big( (-1)^M \cdot \Xi^{-1}; Q^{1/2} \big) \cdot
\prod_{i=1}^\infty \frac{1}{1-Q^i} \\
&\cdot
\prod_{1\leq k <l \leq M}  \frac{1}{\cM (\prod_{i=k}^{l-1}Q_i ; q)} \cdot
\prod_{k,l= 1}^M \prod_{j=0}^\infty
\frac{1}{\cM(Q^j \cdot \prod_{a=1}^{k-1} Q_a \cdot \prod_{b=l}^M Q_i ;q)},
\end{split}
\end{equation*}
where $Q = Q_1Q_2\cdots Q_M$, and
\begin{equation*}
\Theta_3 (t;q) = \sum_{n\in \bZ} q^{n^2}  t^n,
\qquad\qquad
\cM(z;q) = \prod_{n=1}^\infty (1-z q^{-n})^n.
\end{equation*}
Moreover,
the affine coordinates of the tau-function $Z^{\text{total}} (\mathbf{t}; Q_i,-2; \Xi)$ on the big cell of
the Sato Grassmannian are:
\begin{equation*}
\begin{split}
a_{n,m}^{(-2)} =
& \frac{\Theta_3 \big( (-1)^M (q^{m+n+1}  \Xi)^{-1} ; Q^{\half} \big)}
{\Theta_3 \big( (-1)^M  \Xi^{-1}; Q^{\half} \big) }
\cdot
\frac{(-1)^{m+1} q^{\half m^2 +m +\half n +\half} }
{(1-q^{m+n+1}) \prod_{j=0}^{m-1} (1-q^{m-j}) \cdot \prod_{i=0}^{n-1}(1-q^{n-i}) } \\
&
\cdot \prod_{j= 0}^\infty \prod_{l=2}^M
\frac{  1-\prod_{a=1}^{l-1}Q_a \cdot q^{-n-j-1}}
{1-\prod_{a=1}^{l-1}Q_a \cdot q^{m-j}}
 \cdot \prod_{j=1}^\infty \frac{(1-Q^j)^2}{(1-Q^j q^{m+n+1})(1-Q^j q^{-m-n-1})}
\\
& \cdot
\prod_{j=0}^\infty \prod_{l=1}^M \prod_{i\geq 0} \frac{1-Q^j \prod_{b=l}^M Q_b \cdot q^{-m-i-1}}
{1-Q^j \prod_{b=l}^M Q_b \cdot q^{n-i}}
\cdot
\prod_{j=1}^\infty\prod_{k=1}^M \prod_{i\geq 0} \frac{1-Q^j  \prod_{a=1}^{k-1} Q_a \cdot q^{-n-i-1}}
{1-Q^j \prod_{a=1}^{k-1} Q_a \cdot q^{m-i}}.
\end{split}
\end{equation*}
\end{Theorem}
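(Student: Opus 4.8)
\emph{Strategy.} The plan is to derive both assertions from the drastic simplification that \eqref{eqn-N-0-t} undergoes at $\gamma_1=\cdots=\gamma_M=-2$. Put $\gamma:=\sum_i\gamma_i=-2M$ and $Q:=Q_1\cdots Q_M$; then $\gamma+2M=0$ and each $\gamma_i+2=0$, so the entire prefactor in \eqref{eqn-N-0-t} collapses and $Z^{(N)}(\mathbf{t};Q_i,-2)=\bigl((-1)^MQ^{1/2}\bigr)^{N^2}\,q^{-NL_0}Z(\mathbf{t};Q_i,-2)$ for every $N$. Since $q^{-NL_0}$ fixes the constant term, putting $\mathbf{t}=\bm 0$ gives $Z^{(N)}(\bm 0;Q_i,-2)=\bigl((-1)^MQ^{1/2}\bigr)^{N^2}Z(\bm 0;Q_i,-2)$, and because $N^2\equiv N\bmod 2$ the sum $\sum_{N\in\bZ}\bigl((-1)^MQ^{1/2}\bigr)^{N^2}\Xi^{-N}$ is exactly $\Theta_3\bigl((-1)^M\Xi^{-1};Q^{1/2}\bigr)$. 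Thus the first assertion reduces to evaluating the single amplitude $Z(\bm 0;Q_i,-2)$.

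To evaluate $Z(\bm 0;Q_i,-2)$ I would run the gluing rule of \S\ref{sec:def Ztotal} on the polygon with one outer brane: this writes it as a sum over the Young diagrams on the $M$ internal edges, weighted by powers of the $Q_i$, framing-$(-2)$ factors, and principally specialized skew Schur functions, and the partition sums are then carried out by the Cauchy identity together with the hook–content evaluation $\sum_\lambda(\cdots)=\prod\cM(\cdots)$. Reorganizing the outcome by which pair of vertices a fermion line connects and how many times it winds the loop produces the three groups of factors in the statement: $\prod_{i\ge1}(1-Q^i)^{-1}$ from the closed loop, $\prod_{1\le k<l\le M}\cM(\prod_{i=k}^{l-1}Q_i;q)^{-1}$ from pairs on one arc, and $\prod_{k,l}\prod_{j\ge0}\cM(\cdots)^{-1}$ from pairs going the long way around $j$ times.

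For the affine coordinates, recall that a KP tau-function, normalized by its constant term, has up to a fixed sign the coefficient of the hook Schur function $s_{(n|m)}$ (arm $n$, leg $m$, size $m+n+1$) as its affine coordinate $a_{n,m}$; the general formula for $Z^{\mathrm{total}}$ is in \S\ref{sec-KPaffine}. By the collapse above, $Z^{\mathrm{total}}(\mathbf{t};Q_i,-2;\Xi)=\sum_{N}\bigl((-1)^MQ^{1/2}\bigr)^{N^2}\Xi^{-N}\,Z(q^{-N}t_1,q^{-2N}t_2,\dots;Q_i,-2)$, and since $s_{(n|m)}$ is homogeneous of degree $m+n+1$ this replaces the $s_{(n|m)}$-coefficient of the $N$-th summand by $q^{-N(m+n+1)}$ times that of $Z(\mathbf{t};Q_i,-2)$. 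Dividing by the constant term already computed gives $a^{(-2)}_{n,m}=\dfrac{\Theta_3\bigl((-1)^M(q^{m+n+1}\Xi)^{-1};Q^{1/2}\bigr)}{\Theta_3\bigl((-1)^M\Xi^{-1};Q^{1/2}\bigr)}\cdot b_{n,m}$, where $b_{n,m}$ is the (normalized, signed) $s_{(n|m)}$-coefficient of $Z(\mathbf{t};Q_i,-2)$, i.e. the regularized two-point function $\langle\psi_{-m-1}\psi^*_{-n-1}\rangle$ of the one-loop geometry at framing $-2$. I would then compute $b_{n,m}$ from the fermionic (Bogoliubov) form of the glued vertex operator of \cite{wyz}: summing the winding of the fermion line around the loop supplies the products over powers of $Q$, summing over which vertex it stretches between supplies the products over $l$, and the leading term of the vertex operator supplies the rational–in–$q$ prefactor, including the sign $(-1)^{m+1}$ and the power $q^{\frac12 m^2+m+\frac12 n+\frac12}$.

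\emph{Main obstacle.} The difficulty is concentrated in closing the loop. Unlike a tree toric diagram, where gluing is literal composition of operators and \cite{wyz} applies directly, a polygon forces a trace over the Fock space on a cut internal edge, and both $Z(\bm 0;Q_i,-2)$ and the two-point function $b_{n,m}$ are such traces of a product of \cite{wyz} vertex operators. Evaluating them in closed form requires a careful Wick-type analysis of that operator under conjugation together with several nonobvious $q$-series identities to collapse the resulting double sums into the $\cM$– and $(1-\cdots)$–products, and pinning down the exact normalization — overall power of $q$, signs, and the precise index ranges of the products — is the delicate point. I would first verify everything for $M=1$ and $M=2$, where the constant term reduces to a Dedekind-$\eta$/$\Theta_3$-type expression and the affine coordinates can be checked against a direct Sato-Grassmannian computation, and then establish the general case by induction on $M$ or by a uniform manipulation of the trace.
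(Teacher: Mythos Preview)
Your strategy matches the paper's. The collapse of \eqref{eqn-N-0-t} at $\gamma_i=-2$ to extract the $\Theta_3$ factor is exactly what the paper does in Theorems \ref{prop-Z-2-const} and \ref{prop:affine coord} (though there the $R^N$-conjugation is recomputed inside the trace rather than read off from \eqref{eqn-N-0-t}; your shortcut is slightly cleaner), and your extraction of the ratio $\Theta_3\bigl((-1)^M(q^{m+n+1}\Xi)^{-1};Q^{1/2}\bigr)/\Theta_3\bigl((-1)^M\Xi^{-1};Q^{1/2}\bigr)$ for the affine coordinates via homogeneity of the hook Schur function is also what the paper finds. One minor correction: in the paper's convention $a_{n,m}=(-1)^n$ times the coefficient of $s_{(m|n)}$ (arm $m$, leg $n$), so your description of the indexing is transposed.

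Where you are schematic---evaluating the $N=0$ trace $Z(\bm 0;Q_i,-2)$ and the hook coefficient $b_{n,m}$---the paper takes a more direct route than the Cauchy-identity or induction-on-$M$ arguments you sketch. It stays entirely in the operator picture: by Corollary \ref{cor-comm-LPsi} the product $\prod_i\bigl(\Psi_{(\emptyset)}(q)\,Q_i^{L}\bigr)$ becomes an ordered product of $\Gamma_\pm$'s sandwiching a single $Q^{L}$; the commutation rule $\Gamma_+(w)\Gamma_-(z)=(1-wz)^{-1}\Gamma_-(z)\Gamma_+(w)$ then normal-orders everything and produces the $\cM$-factors and the rational $q$-prefactor explicitly, with no auxiliary $q$-series identities. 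The remaining normal-ordered trace $\sum_{\mu}\langle\mu|\,\Gamma_-(\cdots)\,Q^{L}\,\Gamma_+(\cdots)\,|\mu\rangle$ is evaluated in closed form by \cite[Lemma~3.2]{ya}, which yields $\prod_{i\ge1}(1-Q^i)^{-1}$ and the triple product over $k,l,j$ in one step. For $a_{n,m}^{(-2)}$ the only change is that $\Psi_{(m|n)}$ contributes two extra vertex operators $\Gamma_+(Qq^{m+1/2})$ and $\Gamma_-(Q^{-1}q^{n+1/2})$ (the $R^{\pm1}$ and $z^{C}$ pieces of $\tGa_\pm$ act trivially on the charge-zero sector and can be dropped), and commuting these through gives precisely the additional ratios of infinite products in the statement. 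So the obstacle you identify dissolves once you normal-order and invoke that single external lemma; no case checks for small $M$ are needed.
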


We also derive a difference equation for the fermionic one-point function of the total partition function,
which is the first basis vector
of the corresponding point on the Sato Grassmannian.
The fermionic one-point function is also a principal specialization of the tau-function,
and we understand the difference equation as the quantum spectral curve associated with this tau-function
following the ideas in \cite{gs, al}.
\begin{Proposition}
Let
$\Psi (z) =
1+ \sum_{m\geq 0}  a_{0,m}^{(-2)} \cdot z^{-m-1}$
be the first basis vector
of the point on the Sato Grassmannian corresponding to the tau-function $Z^{\text{total}} (\mathbf{t}; Q_i,-2; \Xi)$,
then:
\begin{equation*}
\wP \Psi(z) =0,
\end{equation*}
where $\wP$ is the $q$-difference operator given by \eqref{eq-def-oprP}.
\end{Proposition}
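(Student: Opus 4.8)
The plan is to derive the $q$-difference equation for $\Psi(z)$ directly from the explicit formula for the affine coordinates $a^{(-2)}_{0,m}$ obtained in Theorem \ref{thm-Modular}, specialized to $n=0$. Writing $\Psi(z) = 1 + \sum_{m\geq 0} a^{(-2)}_{0,m} z^{-m-1}$, the key observation is that the closed product formula for $a^{(-2)}_{0,m}$ is essentially a ratio of $q$-shifted products, so the ratio $a^{(-2)}_{0,m+1}/a^{(-2)}_{0,m}$ is a rational function of $q^m$, together with a controlled quotient of theta factors $\Theta_3\big((-1)^M (q^{m+1}\Xi)^{-1}; Q^{1/2}\big)/\Theta_3\big((-1)^M(q^m\Xi)^{-1};Q^{1/2}\big)$. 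The first step is therefore to compute this ratio explicitly, clearing denominators so that one obtains a relation of the form $P_1(q^m)\, a^{(-2)}_{0,m+1} = P_2(q^m)\, a^{(-2)}_{0,m}$ with $P_1, P_2$ built out of finitely many $q$-linear factors (the infinite products telescope under the shift $m\mapsto m+1$, leaving only boundary terms).

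Second, I would translate such a three-term-like recursion on coefficients into a $q$-difference operator annihilating $\Psi(z)$. The standard dictionary is: multiplication of the coefficient sequence by $q^{cm}$ corresponds to the substitution $z\mapsto q^{-c}z$ acting on $\Psi$, and the index shift $m\mapsto m+1$ corresponds to multiplication by $z$ (up to the constant term $1$, which must be tracked separately). Assembling the factors $P_1(q^m), P_2(q^m)$ this way produces a $q$-difference operator; matching it with the operator $\wP$ defined in \eqref{eq-def-oprP} is then a bookkeeping check. The theta quotient is the one genuinely new ingredient: because $\Theta_3(t;Q^{1/2})$ satisfies its own quasi-periodicity $\Theta_3(Q^{1/2}t;Q^{1/2}) = Q^{-1/2} t^{-1}\Theta_3(t;Q^{1/2})$ (equivalently a simple functional equation under $t\mapsto q^{\pm 1} t$ combined with the defining series $\sum_n q^{n^2} t^n$), the ratio of consecutive theta values is again a rational-type expression in $q^m$ and $\Xi$, so it can be absorbed into $\wP$'s coefficients rather than obstructing the difference structure.

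The main obstacle I anticipate is precisely the interaction between the theta factor and the rational factors: one must check that after clearing denominators the theta quasi-periodicity conspires with the polynomial factors $P_1, P_2$ so that a single operator $\wP$ of the expected (finite) order annihilates $\Psi$, rather than an operator whose coefficients still involve $\Theta_3$ evaluated at incommensurable arguments. Concretely, the theta identity I need is that $\Theta_3\big((-1)^M(q^{m+1}\Xi)^{-1};Q^{1/2}\big)$ and $\Theta_3\big((-1)^M(q^m\Xi)^{-1};Q^{1/2}\big)$ are related by a factor that is \emph{independent of $m$} after accounting for a monomial in $q^m$ and $\Xi$ — this follows from the Jacobi triple product / quasi-periodicity, and is the reason the $\gamma_i=-2$ specialization is singled out. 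A secondary, more routine, obstacle is handling the inhomogeneous term coming from the constant $1$ in $\Psi(z)$: one verifies that $\wP$ kills the constant term as well, which amounts to evaluating the operator's symbol at the appropriate point, or equivalently checking the $m=0$ case of the recursion against $a^{(-2)}_{0,0}$. Once these are in place, the proposition follows by comparing coefficients of each power $z^{-m-1}$ in $\wP\Psi(z)$.
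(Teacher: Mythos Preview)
Your overall strategy is exactly the one the paper uses: compute the ratio $a^{(-2)}_{0,m+1}/a^{(-2)}_{0,m}$ from the closed product formula (the infinite products telescope, leaving only boundary factors and a theta quotient), translate the resulting two-term recursion into a $q$-difference operator via $q^{z\partial_z}\, z^{-m-1} = q^{-m-1} z^{-m-1}$, and then check the inhomogeneous contribution from the constant term $1$ separately. The paper packages this as two diagonal operators $\wB,\wC$ with $\wC^{-1}\wB(z^{-m-1}) = -\,(a^{(-2)}_{0,m+1}/a^{(-2)}_{0,m})\,z^{-m-1}$ and $\wC^{-1}\wB(1) = -a^{(-2)}_{0,0}$, so that $\wC^{-1}\wB\Psi = z - z\Psi$ and hence $(\wB + \wC\circ z)\Psi = 0$ with $\wP = \wB + \wC\circ z$.

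There is, however, one concrete misconception in your plan. You write that the needed theta identity is that $\Theta_3\big((-1)^M(q^{m+1}\Xi)^{-1};Q^{1/2}\big)$ and $\Theta_3\big((-1)^M(q^{m}\Xi)^{-1};Q^{1/2}\big)$ are related by a factor independent of $m$ via quasi-periodicity. This is false: the quasi-periodicity of $\Theta_3(t;Q^{1/2})$ is under $t\mapsto Q\,t$, not $t\mapsto q\,t$, and $q$ and $Q=Q_1\cdots Q_M$ are independent parameters, so no such simplification exists. The resolution is not to simplify the theta quotient but to \emph{leave it in the operator}: since $q^{z\partial_z}$ acts diagonally on each $z^{-m-1}$, the expression $\Theta_3\big((-1)^M\Xi^{-1} q^{z\partial_z + c};Q^{1/2}\big)$ is a perfectly good coefficient of a $q$-difference operator, and this is exactly what appears in $\wP$. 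So the ``obstacle'' you anticipated is not an obstacle at all, and no Jacobi-type identity is invoked. Once you drop the quasi-periodicity step and simply carry the two theta values into $\wB$ and $\wC$ respectively, your proof coincides with the paper's.
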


We will also present some data for the total partition function of the local $\mathbb{P}^2$ model in \S \ref{sec:P2}.
We compute the leading terms of the corresponding total partition function,
verify the Hirota bilinear relation,
and compare with the data in the literature.

Let us say a few words about the method we use to establish these results.
In the literature the topological vertex in the fermionic picture is usually represented as
a state in $\cF \otimes \cF \otimes \cF$,
where $\cF$ is the fermionic Fock space.
In this work we are based on our new approach to the fermionic representation of topological vertex developed in our earlier work \cite{wyz}.
To each partition $\mu$,
we assign an operator $\Psi_\mu$  on $\cF$,
such that for any partitions $\lambda$ and $\nu$
we have
\ben
&&	W_{\lambda,\mu,\nu} (q) =
	\langle \lambda |  \Psi_{\mu}(q) q^{-K /2} |\nu^t \rangle,
	\een
 where $W_{\lambda,\mu,\nu}$ is the topological vertex in the bosonic picture,
$\nu^t$ is the conjugate partition of $\nu$, and $K$ is the cut-and-join operator.
Note the usual fermionic representation of the topological vertex needs three copies of the fermionic Fock space,
but working with $\Psi_\mu$ one only needs one copy.
As a consequence, in the usual approach, for a toric one-loop diagram with $M$ vertices,
one needs to work with $3M$ copies of fermionic Fock spaces, $3$ copies for each vertex;
but in our approach,
we only need to work with $1$ copy of $\cF$.
More precisely,  we have:
\be
\label{eq-ferm-Ztotal}
\begin{split}
&Z^{\text{total}} (\mathbf{t}; Q_i,\gamma_i; \Xi)
= \sum_{N\in \bZ} \sum_{\lambda,\mu \in \cP}  s_\lambda(\mathbf{t})\cdot
\Big\langle \mu \Big|
R^N \Xi^C
\Big( \Psi_\lambda (q) q^{-\half (\gamma_1+2)K} (-1)^{\gamma_1  L} Q_1^L \Big)\cdot  \\
& \quad
\Big( \Psi_{(\emptyset)}(q) q^{-\half (\gamma_2+2) K}
(-1)^{\gamma_2 L} Q_2^L \Big) \cdots
\Big( \Psi_{(\emptyset)}(q) q^{-\half (\gamma_M+2) K}
(-1)^{\gamma_M L} Q_M^L \Big)
R^{-N} \Big| \mu \Big\rangle .
\end{split}
\ee
See \S \ref{sec:def Ztotal} for notations.
This is what we use to prove Theorem \ref{thm-intro-1}.
For the KP integrability,
we prove a determinantal formula in Theorem \ref{thm:Determinantal}.
This formula is proved by a generalization of the Wick's formula at finite temperature
in the literature (c.f. Lemma \ref{lem-det}).
By the determinantal formula,
we express the constant term and the affine coordinates of $Z^{\text{total}} (\mathbf{t}; Q_i,\gamma_i; \Xi)$
in terms of traces on $\cF$ similar to the right-hand side of \eqref{eq-ferm-Ztotal}.
These traces can be explicitly evaluated in the case of $\gamma_1 = \cdots = \gamma_M = -2$
by commuting the operators, thus proving Theorem \ref{thm-Modular}.

The rest of this paper is organized as follows.
In \S \ref{sec:pre},
we review the Schur functions, free fermions,
and basic materials for the boson-fermion correspondence and KP hierarchy.
In \S \ref{sec-TV},
we review the topological vertex formalism for
the open GW theory of toric CY threefolds.
In \S \ref{sec:def Ztotal} we give a construction of the contributions of nonzero fermion number fluxes
on local toric CY threefolds,
and use them to define the total partition function of open string amplitudes.
Later in \S \ref{sec:KP integrability},
we prove the KP integrability of the total partition function
(Theorem \ref{thm-intro-1} and Corollary \ref{cor:main ZS}).
Finally in \S \ref{sec-eg} and \S \ref{sec:P2},
we study two special models mentioned above,
i.e., the $(-2)$-model and the local $\bP^2$ model.

\section{Preliminaries}
\label{sec:pre}
In this section we recall some preliminaries of the boson-fermion correspondence
and the KP hierarchy \cite{djm, sa}.

\subsection{Schur functions and skew Schur functions}

First we recall the  Schur functions and skew Schur functions.
See e.g. \cite[\S I]{mac} for details.

Let $\lambda=(\lambda_1,\lambda_2,\cdots,\lambda_{l})$
(where $\lambda_1\geq \cdots\geq \lambda_{l}> 0$) be a partition of an integer.
We denote
\be
|\lambda| = \lambda_1+ \lambda_2 + \cdots +\lambda_l.
\ee
The number $l = l(\lambda)$ is called the length of this partition.
Here we allow the empty partition $(\emptyset)$ of $0$,
and we denote by $\mathcal{P}$ the set of all partitions.
The transpose $\lambda^t=(\lambda_1^t,\cdots,\lambda_m^t)$ of $\lambda$ is the partition of $n$ defined by
$m=\lambda_1$ and $\lambda_j^t = |\{i| \lambda_i\geq j\}|$.
The Young diagram associated with $\lambda^t$
is obtained by flipping the Young diagram associated with $\lambda$ along the diagonal.
The Frobenius notation of a partition $\lambda$ is defined to be:
\begin{equation*}
\lambda= (m_1,m_2,\cdots,m_k | n_1,n_2,\cdots,n_k),
\end{equation*}
where $k$ is the number of boxes on the diagonal of the corresponding Young diagram,
and $m_i = \lambda_i - i$, $n_i=\lambda_i^t - i$ for $i=1,2,\cdots,k$.
Let $\lambda = (m_1,\cdots, m_k|n_1,\cdots, n_k)$ be a partition of integer,
and we denote:
\be
\label{eq-def-kappamu}
\kappa_\mu = \sum_{i=1}^{l(\mu)} \mu_i(\mu_i -2i +1)
=\sum_{i=1}^k (m_i+\half)^2 - \sum_{i=1}^k (n_i+\half)^2.
\ee
In particular, one has $\kappa_{(\emptyset)} =0$.

Consider the partition $\lambda=(m|n) = (m+1,1^n)$ whose Young diagram is a hook,
and in this case the Schur function $s_\lambda = s_{(m|n)}$ is defined by:
\begin{equation*}
s_{(m|n)}= h_{m+1}e_n - h_{m+2}e_{n-1} + \cdots
+ (-1)^n h_{m+n+1},
\end{equation*}
where $h_n$ and $e_n$ are the
complete symmetric function and elementary symmetric function of degree $n$ respectively.
For a general $\lambda=(m_1,\cdots,m_k|n_1,\cdots,n_k)$,
the Schur function $s_\lambda$ is defined by:
\begin{equation*}
s_\lambda  = \det (s_{(m_i|n_j)} )_{1\leq i,j\leq k}.
\end{equation*}
The Schur function indexed by the empty partition is defined to be $s_{(\emptyset)} = 1$.
Denote by $\Lambda$ the infinite-dimensional linear space of all symmetric functions,
then Schur functions $\{s_\lambda\}_{\lambda\in\mathcal{P}}$ form a basis of  $\Lambda$.
One can define a natural inner product $(\cdot,\cdot)$ on $\Lambda$ by
taking $\{s_\lambda\}_{\lambda\in\mathcal{P}}$ to be an orthonormal basis.

Let $p_n\in \Lambda$ be the Newton symmetric function of degree $n$.
It is well-known that $\{p_\lambda\}$ gives another basis for $\Lambda$,
where $p_\lambda = p_{\lambda_1}\cdots p_{\lambda_l}$ for a partition $\lambda = (\lambda_1,\cdots,\lambda_l)$.
The above two bases for $\Lambda$ are related by (see e.g. \cite{mac}):
\be
\label{eq-newton-schur}
s_\mu = \sum_{\lambda} \frac{\chi^\mu (C_\lambda)}{z_\lambda} p_\lambda,
\ee
where $\chi^\mu$ is the character of the symmetric group $S_{|\mu|}$ indexed by the partition $\mu$,
and $C_\lambda \subset S_{|\mu|}$ is the conjugacy class whose cycle type is $\lambda$.
The summations on the right-hand sides of \eqref{eq-newton-schur} are over all partitions $\lambda$
with $|\lambda|=|\mu|$.
Sometimes we regard $s_\mu$ as a function in the variables $\mathbf{t} = (t_1,t_2,\cdots)$ via \eqref{eq-newton-schur}
where $t_n = \frac{p_n}{n}$,
and call it the Schur polynomial indexed by $\mu$.
In fact,
if one defines a degree by setting $\deg (t_n) = n$,
then it is easy to see that $s_\mu (\mathbf{t})$ is a weighted homogeneous polynomial of degree $|\mu|$ in
the variables $t_1,t_2,\cdots,t_{|\mu|}$.

Finally we recall the definition of the skew Schur functions.
Let $\lambda,\mu$ be two partitions,
and then the skew Schur function $s_{\lambda/\mu} \in \Lambda$ is determined by:
\begin{equation*}
(s_{\lambda/\mu},s_\nu ) = (s_\lambda, s_\mu s_\nu),
\qquad \forall \text{ partition $\nu$},
\end{equation*}
where $(\cdot,\cdot)$ is the inner product defined above.
Or equivalently,
\begin{equation*}
s_{\lambda/\mu} = \sum_\nu c_{\mu\nu}^\lambda s_\nu
\end{equation*}
where $\{c_{\mu\nu}^\lambda\}$ are the Littlewood-Richardson coefficients:
\begin{equation*}
s_\mu s_\nu = \sum_{\lambda} c_{\mu\nu}^\lambda s_\lambda,
\end{equation*}

\subsection{Free fermions and the fermionic Fock space}
\label{sec-freeferm}

In this subsection we recall the action of free fermions on the fermionic Fock space.
See \cite{djm, kac, sa} for details.

Let $\bm{a}=(a_1,a_2,\cdots)$ be a sequence of half-integers $ a_i \in \bZ+\half$ satisfying $a_1<a_2<a_3<\cdots$.
The sequence $\bm{a}$ is said to be admissible if:
\begin{equation*}
\big|(\bZ_{\geq 0}+\half)-\{a_1,a_2\cdots\}\big|<\infty,
\qquad
\big|\{a_1,a_2\cdots\}-(\bZ_{\geq 0}+\half)\big|<\infty.
\end{equation*}
One can define a semi-infinite wedge product $|\bm a\rangle$ for every admissible $\bm{a}$:
\be
\label{eq-basiscF-a}
| \bm a\rangle =
z^{a_1} \wedge z^{a_2} \wedge z^{a_3} \wedge \cdots
\ee
The fermionic Fock space $\cF$ is the infinite-dimensional linear space consisting of
all formal (infinite) summations of the following form:
\begin{equation*}
\sum_{\bm a:\text{ admissible}} c_{\bm a} |\bm a\rangle,
\end{equation*}
where $c_{\bm a}$ are some complex numbers.

Let $\bm a$ be admissible,
then the charge of the vector $|\bm a\rangle \in \cF$ is defined to be:
\begin{equation*}
\text{charge}(|\bm a\rangle)=
\big|\{a_1,a_2\cdots\}-(\bZ_{\geq 0}+\half)\big|
-\big|(\bZ_{\geq 0}+\half)-\{a_1,a_2\cdots\}\big|,
\end{equation*}
and this gives a decomposition of the fermionic Fock space:
\begin{equation*}
\cF=\bigoplus_{n\in \bZ} \cF^{(n)},
\end{equation*}
where $\cF^{(n)}$ is the subspace spanned by homogeneous vectors of charge $n$.
We will denote:
\begin{equation*}
|n\rangle = z^{n+\half}\wedge z^{n+\frac{3}{2}}
\wedge z^{n+\frac{5}{2}} \wedge \cdots \in \cF^{(-n)}.
\end{equation*}
In particular,
$|0\rangle = z^{\half}\wedge z^{\frac{3}{2}}
\wedge z^{\frac{5}{2}} \wedge \cdots \in \cF^{(0)}$
is called the fermionic vacuum vector.

The subspace $\cF^{(0)} \subset \cF$ of charge zero has a natural basis indexed by partitions of integers.
For a partition $\mu=(\mu_1,\cdots,\mu_l)$,
we use the notation $\mu_{l+1}=\mu_{l+2}=\cdots=0$
and denote:
\be
\label{eq-basisF-mu}
|\mu\rangle =
z^{\frac{1}{2}-\mu_1}\wedge z^{\frac{3}{2}-\mu_2}\wedge
z^{\frac{5}{2}-\mu_3}\wedge \cdots \in\cF^{(0)}.
\ee
Then the vectors $\{|\mu\rangle\}_{\mu\in\mathcal{P}}$ form a basis for $\cF^{(0)}$.
Here the empty partition $(\emptyset)$ of $0\in \bZ$
corresponds to the vacuum vector $|0\rangle\in\cF^{(0)}$.

The free fermions $\{\psi_r, \psi_r^*\}_{r\in \bZ+\half}$ are a family of operators
acting on $\cF$ by:
\begin{equation*}
\psi_r |\bm a\rangle = z^r \wedge |\bm a\rangle,
\qquad \forall r\in \bZ+\half,
\end{equation*}
and
\begin{equation*}
\psi_r^* | \bm a\rangle=
\begin{cases}
(-1)^{k+1} \cdot z^{a_1}\wedge z^{a_2}\wedge \cdots \wedge \widehat{z^{a_k}} \wedge \cdots,
&\text{if $a_k = -r$ for some $k$;}\\
0, &\text{otherwise.}
\end{cases}
\end{equation*}
It is clear that the operators $\{\psi_r\}$ all have charge $1$,
and $\{\psi_r^*\}$ all have charge $(-1)$.
They satisfy the following anti-commutation relations:
\be
\label{eq-anticomm-psi}
[\psi_r,\psi_s]_+=0,
\qquad
[\psi_r^*,\psi_s^*]_+=0,
\qquad
[\psi_r,\psi_s^*]_+= \delta_{r+s,0}\cdot \Id_\cF,
\ee
where the bracket is defined by $[a,b]_+ =ab+ba$.
In other words,
these fermions generate a Clifford algebra.

The operators $\{\psi_r,\psi_r^*\}_{r<0}$ are called the fermionic creators,
and $\{\psi_r,\psi_r^*\}_{r>0}$ are called the fermionic annihilators,
since one can easily check that:
\begin{equation*}
\psi_r |0\rangle=0,
\qquad
\psi_r^* |0\rangle=0,
\qquad \forall r>0.
\end{equation*}
Moreover,
every element of the form $|\mu\rangle$ (where $\mu$ is a partition)
can be obtained by applying fermionic creators successively to the vacuum vector $|0\rangle$:
\be
\label{eq-ferm-basismu}
|\mu\rangle=(-1)^{n_1+\cdots+n_k}\cdot
\psi_{-m_1-\half} \psi_{-n_1-\half}^* \cdots
\psi_{-m_k-\half} \psi_{-n_k-\half}^* |0\rangle,
\ee
and here $\mu=(m_1,\cdots, m_k | n_1,\cdots,n_k)$ in Frobenius notation.

One can define an inner product $(\cdot,\cdot)_{\cF}$ on $\cF$ by taking
$\{|\bm a\rangle | \text{$\bm a$ is admissible}\}$ to be an orthonormal basis,
and then $\{|\mu  \rangle \}_\mu$ is an orthonormal basis for $\cF^{(0)}$.
It is clear that
$\psi_r$ and $\psi_{-r}^*$ are adjoint to each other with respect to this inner product.
Given two admissible sequences $\bm a$ and $\bm b$,
we will denote $\langle \bm b | \bm a\rangle =(|\bm a\rangle,|\bm b\rangle)_{\cF}$.

Let $A$ be an operator on the Fock space $\cF$,
and we will denote by
$\langle A \rangle = \langle 0| A |0\rangle$
the vacuum expectation value of $A$.
Let $w_1, w_2,\cdots,w_k$ be some linear combinations of the free fermions $\{\psi_r,\psi_r^*\}$,
then the vacuum expectation value of $w_1w_2\cdots w_k$ can be computed using Wick's Theorem,
see e.g. \cite[\S 4.5]{djm} for details.

\subsection{Tau-functions of the KP hierarchy and affine coordinates}
\label{pre-KPtau}

Now we recall the boson-fermion correspondence
and Sato's theory of the KP hierarchy \cite{djm, sa},
including the notion of affine coordinates of a tau-function.

For $n\in \bZ$,
denote by $\alpha_n$ the following operator on $\cF$:
\be
\alpha_n = \sum_{s\in \bZ+\half} :\psi_{-s} \psi_{s+n}^*:.
\ee
Here $: \psi_{-s} \psi_{s+n}^* :$ denotes the normal-ordered product of fermions:
\begin{equation*}
:\phi_{r_1}\phi_{r_2}\cdots \phi_{r_n}:
=(-1)^\sigma \phi_{r_{\sigma(1)}}\phi_{r_{\sigma(2)}}\cdots\phi_{r_{\sigma(n)}},
\end{equation*}
where $\phi_k$ is either $\psi_k$ or $\psi_k^*$,
and $\sigma\in S_n$ is a permutation such that $r_{\sigma(1)}\leq\cdots\leq r_{\sigma(n)}$.
Denote by $\psi(\xi),\psi^*(\xi)$ the generating series of free fermions:
\be
\label{eq-gen-fermi}
\psi(\xi)= \sum_{s\in\bZ+\half} \psi_s \xi^{-s-\half},
\qquad\qquad
\psi^*(\xi)= \sum_{s\in\bZ+\half} \psi_s^* \xi^{-s-\half},
\ee
and then the generating series of the operators $\{\alpha_n\}$ is:
\be
\label{eq-gen-bos}
\alpha(\xi)= \sum_{n\in\bZ} \alpha_n \xi^{-n-1} = :\psi(\xi)\psi^*(\xi): .
\ee
The operators $\{\alpha_n\}_{n\in \bZ}$ are called the bosons,
and they satisfy:
\be
\label{eq-comm-boson}
[\alpha_m,\alpha_n]= m\delta_{m+n,0} \cdot \Id,
\ee
i.e.,
they generate a Heisenberg algebra.
One can also check that:
\be
\label{eq-comm-alphapsi}
[\alpha_n, \psi_r] = \psi_{n+r},
\qquad\qquad
[\alpha_n, \psi_r^*] = -\psi_{n+r}^*.
\ee

Let $\Lambda$ be the infinite-dimensional linear space of symmetric functions in some formal variables
$\bm x=(x_1,x_2,\cdots)$,
and let $w$ be a formal variable.
The space $\cB=\Lambda[[w,w^{-1}]]$ is called the bosonic Fock space.
The boson-fermion correspondence is a linear isomorphism $\Phi:\cF \to \cB$ of linear spaces
given by (see e.g. \cite[\S 5]{djm}):
\be
\Phi:\quad
|\bm a\rangle \in \cF^{(m)}
\quad\mapsto\quad
w^m\cdot \langle m |
e^{\sum_{n=1}^\infty \frac{p_n}{n} \alpha_n}
| \bm a \rangle,
\ee
where $p_n \in \Lambda$ is the Newton symmetric function of degree $n$.
In particular,
by restricting to $\cF^{(0)}$ one obtains an isomorphism:
\be
\label{eq-b-f-corresp}
\cF^{(0)}\to \Lambda,
\qquad
|\mu\rangle \mapsto s_\mu =
\langle 0 | e^{\sum_{n=1}^\infty \frac{p_n}{n} \alpha_n} | \mu \rangle,
\ee
where $|\mu\rangle$ is the basis vector \eqref{eq-basisF-mu}
and  $s_\mu$ is the Schur function indexed by $\mu$.
Moreover,
one has:
\be
\label{eq-boson}
\Phi (\alpha_n | \bm a\rangle)
=\begin{cases}
n\frac{\pd}{\pd p_n} \Phi(|\bm a\rangle), & n>0;\\
p_{-n}\cdot \Phi(|\bm a\rangle), & n<0,
\end{cases}
\ee
and:
\begin{equation*}
\Phi(\psi(\xi)|\bm a\rangle)=
\Psi(\xi) \Phi(|\bm a\rangle),
\qquad\quad
\Phi(\psi^*(\xi)|\bm a\rangle)=
\Psi^*(\xi) \Phi(|\bm a\rangle),
\end{equation*}
where $\Psi(\xi),\Psi^*(\xi)$ are the vertex operators:
\begin{equation*}
\begin{split}
&\Psi(\xi)=
\exp\bigg( \sum_{n=1}^\infty \frac{p_n}{n} \xi^n \bigg)
\exp\bigg( -\sum_{n=1}^\infty \xi^{-n}\frac{\pd}{\pd p_n} \bigg)
e^K \xi^{\alpha_0},\\
&\Psi^*(\xi)=
\exp\bigg( -\sum_{n=1}^\infty \frac{p_n}{n} \xi^n \bigg)
\exp\bigg( \sum_{n=1}^\infty \xi^{-n}\frac{\pd}{\pd p_n} \bigg)
e^{-K} \xi^{-\alpha_0},
\end{split}
\end{equation*}
and the actions of $e^K$ and $\xi^{\alpha_0}$ are defined by:
\begin{equation*}
(e^K f) (z,T)=z\cdot f(z,T),
\qquad\quad
(\xi^{\alpha_0} f)(z,T) = f(\xi z,T).
\end{equation*}

Now we recall Sato's theory for the KP hierarchy.
A tau-function of the KP hierarchy can be regarded as either a vector in the bosonic Fock space
or a vector in the fermionic Fock space,
satisfying the Hirota bilinear relation.
In Sato's theory,
the space of all formal power series tau-functions $\tau(\mathbf{t}) = \tau(t_1,t_2,t_3,\cdots)$
of the KP hierarchy
is a semi-infinite dimensional Grassmannian called the Sato Grassmannian,
which is the orbit of the trivial tau-function $\tau (\mathbf{t}) \equiv 1$
under the action of the infinite-dimensional Lie group $\widehat{GL}(\infty)$.
The Hirota bilinear relation for a group element $G$ is:
\be
\label{eq-Hirota-G}
[G\otimes G, \sum_{r\in \bZ+\half} \psi_r^* \otimes \psi_{-r}] = 0.
\ee

Let $\tau(\mathbf{t})$ be a tau-function of the KP hierarchy
with time variables $\mathbf{t} =(t_1,t_2,\cdots)$ satisfying the initial condition $\tau(\bm 0)\not= 0$,
then it lies on the big cell of the Sato Grassmannian.
There is a natural coordinate system $\{a_{n,m}\in \bC\}_{n,m\geq 0}$ on the big cell of the Sato Grassmannian
called the affine coordinates,
and a tau-function $\tau(\mathbf{t})$ satisfying $\tau(\bm 0)\not= 0$
can be completely described in terms of its affine coordinates $\{a_{n,m}\}_{n,m\geq 0}$
together with the constant $\tau(\bm 0)$,
see e.g. \cite{djm,zhou1, by} for details.
In the fermionic Fock space,
the tau-function corresponds to a Bogoliubov transform of the fermionic vacuum:
\be
\tau = c\cdot \exp\Big(
\sum_{n,m\geq 0} a_{n,m} \psi_{-m-\half} \psi_{-n-\half}^*
\Big) |0\rangle,
\ee
where $c = \tau(\bm 0)\not= 0$.
And in the bosonic Fock space,
the tau-function admits an expansion of Schur functions
whose coefficients are determinants of its affine coordinates:
\be
\tau(\mathbf{t}) = c\cdot
 \sum_{\mu\in \cP} (-1)^{n_1+\cdots+n_k}
\cdot \det \big(a_{n_i,m_j}\big)_{1\leq i,j\leq k} \cdot s_\mu(\mathbf{t}),
\ee
where $\mu = (m_1,\cdots,m_k|n_1,\cdots,n_k)$ is the Frobenius notation
of the partition $\mu$.
In particular,
$a_{n,m}$ is $(-1)^n$ times the coefficient of $s_{(m|n)} = s_{(m+1,1^n)}$.

\subsection{Some operators on the fermionic Fock space}

In this subsection we recall the properties of some operators on
the fermionic Fock space $\cF$.
We focus on the following three operators:
\be
\label{eq-def-C&Jopr}
\begin{split}
&C=\sum_{s\in \bZ+\half} :\psi_{-s} \psi_{s}^*:,\\
&L = \sum_{s\in \bZ+\half} s :\psi_{-s} \psi_{s}^*:
= \sum_{s>0} s \big( \psi_{-s}\psi_s^* + \psi_{-s}^*\psi_s \big),\\
&K = \sum_{s\in \bZ+\half} s^2
:\psi_{-s} \psi_{s}^*:
=\sum_{s>0} s^2 \big( \psi_{-s}\psi_s^* -\psi_{-s}^* \psi_{s} \big).
\end{split}
\ee
Here $C = \alpha_0$ is called the charge operator,
and $K$ is called the cut-and-join operator.
It is well-known that $\cF^{(n)}$ is the eigenspace of $C$
with respect to the eigenvalue $n\in \bZ$,
and for every basis vector $|\mu\rangle \in \cF^{(0)}$ one has:
\be
\label{eq-eigen-C&J}
\begin{split}
L |\mu\rangle = |\mu| \cdot |\mu\rangle, \qquad\qquad
K |\mu \rangle  =  \kappa_\mu |\mu\rangle,
\end{split}
\ee
where the eigenvalue $\kappa_\mu$ is given by \eqref{eq-def-kappamu}.

Now let $\mathbf{t} =(t_1,t_2,t_3,\cdots)$ be a sequence of formal variables,
and the vertex operators $\Gamma_\pm$ are defined to be the following operators on $\cF$:
\be
\label{eq-def-gammat}
\Gamma_\pm (\mathbf{t}) = \exp\Big( \sum_{n=1}^\infty t_n\alpha_{\pm n} \Big).
\ee
Let $z$ be a formal variable,
and let $\{z\}$ be the sequence $\{z\} = (z,\frac{z^2}{2},\frac{z^3}{3},\cdots)$.
For simplicity we will denote by $\Gamma_\pm (z)$ the following:
\be
\label{def-vert-Gamma}
\Gamma_\pm (z) = \Gamma_\pm ( \{z\} )
= \exp\Big( \sum_{n=1}^\infty \frac{z^n}{n} \alpha_{\pm n} \Big).
\ee
They satisfy the following commutation relation (see e.g. \cite[(A.11)]{Ok2}):
\be
\label{eq-commGamma}
\Gamma_+ (w) \Gamma_- (z) = \frac{1}{1-wz}
\Gamma_-(z) \Gamma_+(w).
\ee
The fermionic fields $\psi(z),\psi^*(z)$ can be recovered from $\Gamma_{\pm}$ by
(c.f. \cite[Ch.14]{kac}):
\be
\label{eq-psi-inGamma}
\begin{split}
\psi(z) = z^{C-1} R \Gamma_- (\{z\}) \Gamma_+ (-\{z^{-1}\}),\qquad
\psi^*(z) = R^{-1} z^{-C} \Gamma_- (-\{z\}) \Gamma_+ (\{z^{-1}\}),
\end{split}
\ee
where $C$ is the charge operator,
and $R$ is the shift operator on $\cF$ defined by:
\be
\label{eq-def-shiftop}
R (z^{a_1} \wedge z^{a_2} \wedge z^{a_3} \wedge \cdots ) =
z^{a_1-1} \wedge z^{a_2-1} \wedge z^{a_3-1} \wedge \cdots.
\ee

\section{Open Gromov--Witten Theory of Local Toric Calabi--Yau Threefolds}
\label{sec-TV}

In this section
we consider the open string amplitudes on a local toric Calabi--Yau threefold with one outer brane.
We first recall the topological vertex formalism for the Gromov--Witten theory
of toric Calabi--Yau threefolds \cite{adkmv, akmv, lllz, moop}.
For a local toric Calabi-Yau threefold $X$,
we use a fermionic representation of the topological vertex developed in \cite{wyz}
to represent the partition function of the open string amplitudes on $X$
as the trace of some operators on the fermionic Fock space.
We follow the notations in \cite{wyz}.

\subsection{The topological vertex formalism}

First we briefly recall the  topological vertex formalism
for the open Gromov--Witten theory of toric Calabi--Yau threefolds \cite{akmv, lllz,moop}.

The topological vertex $W_{\mu^1,\mu^2,\mu^3}(q)$ is the generating series of open Gromov--Witten invariants of $\bC^3$
with three special D-branes.
The notion of the topological vertex was first introduced by Aganagic {\em et al} \cite{akmv} using a string theoretical approach.
A mathematical theory for the topological vertex was developed in the setting of
the open GW theory of $\mathbb{C}^3$ by Li {\em et al} in \cite{lllz},
and in that work the localization technique was used in
the study of the local relative GW theory of toric Calabi--Yau threefolds.
The topological vertex constructed in \cite{lllz}
looks slightly different from the original version $W_{\mu^1,\mu^2,\mu^3}(q)$ proposed by physicists in \cite{akmv,adkmv},
and the equivalence of these two expressions was first proved in \cite{moop}  using GW/DT correspondence,
and then by a pure combinatorial approach in \cite{nt}.

It is known that the toric diagram of $\bC^3$ has a single trivalent vertex.
It serves as the basic building block.
For an arbitrary toric Calabi--Yau threefold $X$,
the toric diagram of $X$ is a trivalent planar graph which can be constructed by gluing such blocks,
and geometrically $X$ can be constructed accordingly by gluing some copies of $\bC^3$  together.
The open Gromov--Witten invariants of $X$ can be obtained by certain gluing procedure of the topological vertices
\cite{akmv}.
In such a procedure,
one needs to multiply the topological vertex by some extra factors along the edges
to get the framed topological vertex.
This is a sort of Feynman rules as in quantum field theory.

In the rest of this subsection,
we recall the combinatorial expressions for the topological vertex $W_{\mu^1,\mu^2,\mu^3}(q)$ given in \cite{akmv}.
Let $\mu^1$, $\mu^2$, $\mu^3$ be three partitions of integers,
and then the topological vertex $W_{\mu^1,\mu^2,\mu^3} (q)$ is defined by:
\begin{equation*}
W_{\mu^1,\mu^2,\mu^3}(q) = q^{\kappa_{\mu^2}/2 + \kappa_{\mu^3}/2}
\sum_{\rho^1,\rho^3} c_{\rho^1(\rho^3)^t}^{\mu^1(\mu^3)^t}\cdot
\frac{W_{(\mu^2)^t \rho^1}(q) W_{\mu^2 (\rho^3)^t}(q)}{W_{\mu^2}(q)},
\end{equation*}
where $c_{\rho^1(\rho^3)^t}^{\mu^1(\mu^3)^t} = \sum_\eta
c_{\eta \rho^1}^{\mu^1} c_{\eta (\rho^3)^t}^{(\mu^3)^t}$
and $c^\lambda_{\mu\nu}$ are the Littlewood-Richardson coefficients, and:
\begin{equation*}
\begin{split}
& W_{\mu}(q) =q^{\kappa_\mu /4}
\prod_{1\leq i<j \leq l(\mu)} \frac{[\mu_i-\mu_j+j-i]}{[j-i]}
\prod_{i=1}^{l(\mu)} \prod_{v=1}^{\mu_i} \frac{1}{[v-i+l(\mu)]},\\
& W_{\mu,\nu} (q) = q^{|\nu|/2} W_\mu (q) \cdot s_\nu (\cE_\mu(q,t)).
\end{split}
\end{equation*}
Here $s_\nu (\cE_\mu(q,t))$ is given by (see \cite[\S 5.2]{zhou5}):
\begin{equation*}
s_\nu (\cE_\mu(q,t)) = s_\nu (q^{\mu_1 -1},q^{\mu_2-2},\cdots).
\end{equation*}

In general,
one also need to consider the following framed topological vertex with framing $(\bm a)=(a_1,a_2,a_3) \in \bZ^3$
introduced in \cite{akmv}:
\be
W_{\mu^1,\mu^2,\mu^3}^{(\bm a)}(q)
= q^{a_1\kappa_{\mu^1}/2 + a_2\kappa_{\mu^2}/2 + a_3\kappa_{\mu^3}/2} \cdot
W_{\mu^1,\mu^2,\mu^3}(q),
\ee
or more explicitly,
\be
\label{eq-framedTV-def}
\begin{split}
W_{\mu^1,\mu^2,\mu^3}^{(\bm a)}(q)
=& (-1)^{|\mu^2|} \cdot
q^{a_1\kappa_{\mu^1}/2 + a_2\kappa_{\mu^2}/2 + (a_3+1)\kappa_{\mu^3}/2} \\
& \cdot s_{(\mu^2)^t} (q^{-\rho})
\sum_\eta s_{\mu^1 /\eta} (q^{(\mu^2)^t + \rho}) s_{(\mu^3)^t /\eta} (q^{\mu^2 + \rho}).
\end{split}
\ee
The framed topological vertex encodes the local relative Gromov--Witten invariants of $\bC^3$ with torus action specified by the framing $(a_1,a_2,a_3)$.
The framing plays an important role in the gluing of topological vertices.
The original topological vertex is recovered from $W_{\mu^1,\mu^2,\mu^3}^{(\bm a)}(q)$ by simply taking the framing $a_1=a_2=a_3=0$.

\subsection{Bogoliubov transform representation of the topological vertex}
In this subsection we recall the Bogoliubov transform representations of the topological vertex
conjectured by Aganagic {\em et al} \cite{adkmv} (the ADKMV Conjecture)
and proved by the authors \cite{wyz}.

The $N$-component fermionic Fock space is the tensor product of $N$ copies of the fermionic Fock space $\cF$,
i.e., $\cF_1\otimes \cF_2 \otimes \cdots \otimes \cF_N$.
One can define the $N$-component fermions $\{\psi_r^i,\psi_r^{i*}\}$ for $r\in \bZ+\half$ and $i=1,2,\cdots,N$ by introducing their actions on the $N$-component fermionic Fock space,
such that $\psi_r^i,\psi_r^{i*}$ act on the $i$-th component $\cF_i$
in the same way as the action of $\psi_r,\psi_r^{*}$ on $\cF$.
One thus can verify the following anti-commutation relations:
\be
\label{eqn:def multi fermions}
\begin{split}
	 [\psi_r^i,\psi_s^j]  =  [\psi_r^{i*},\psi_s^{j*}] =0,
	\qquad \qquad
	 [\psi_r^i,\psi_s^{j*}] = \delta_{r+s,0} \cdot \delta_{i,j} \cdot \Id.
\end{split}
\ee
for every $r,s\in\bZ+\half$ and $1\leq i,j\leq N$.

In \cite{akmv, adkmv},
Aganagic {\em et al} presented a conjectural fermionic representation
of the topological vertex $W_{\mu^1,\mu^2,\mu^3}(q)$ as a Bogoliubov transform of the three-component fermionic vacuum
in $\cF\otimes \cF\otimes \cF$,
which is called the ADKMV Conjecture.
They gave the explicit expressions for the coefficients of this Bogoliubov transform.
A straightforward consequence of their conjecture is the $3$-component KP integrability
of the topological vertex,
since it is well-known that a Bogoliubov transform in the $N$-component fermionic Fock space
corresponds to a tau-functions of the $N$-component KP hierarchy via boson-fermion correspondence \cite{kv}.
In \cite{dz1},
Deng and third-named author proposed a generalization of this conjecture
which takes the framing $\bm a = (a_1,a_2,a_3) \in \bZ^3$ into consideration.
This generalization is called the framed ADKMV conjecture.
More precisely,
the framed ADKMV Conjecture is the following representation of the framed topological vertex
as a vector in $3$-component fermionic Fock space:
\be
\label{eq-framedADKMV}
\begin{split}
	 W_{\mu^1,\mu^2,\mu^3}^{(\bm a)}(q)
	= \langle \mu^1,\mu^2,\mu^3|
	\exp\Big( \sum_{i,j =1,2,3} \sum_{m,n\geq 0} A_{mn}^{ij}(q; \bm a)
	\psi_{-m-\half}^{i} \psi_{-n-\half}^{j*} \Big)
	|0\rangle \otimes |0\rangle \otimes |0\rangle,
\end{split}
\ee
where the coefficient $A_{mn}^{ij}(q;\bm a)$ are given by the following combinatorial formulas:
\be
\label{eq-def-Aqa}
\begin{split}
	&A_{mn}^{ii}(q;\bm a) = (-1)^n q^{\frac{(2a_i+1) (m(m+1) - n(n+1))}{4}} \cdot
	\frac{1}{[m+n+1] \cdot [m]![n]!},\\
	&A_{mn}^{i(i+1)} (q;\bm a) = (-1)^n q^{\frac{(2a_i+1)m(m+1) - (2a_{i+1}+1) n(n+1)}{4} +\frac{1}{6}}
	\sum_{l=0}^{\min(m,n)} \frac{q^{\half (l+1)(m+n-l)}}{[m-l]![n-l]!},\\
	&A_{mn}^{i(i-1)} (q;\bm a) = (-1)^{n+1} q^{\frac{(2a_i+1)m(m+1) - (2a_{i-1}+1) n(n+1)}{4} -\frac{1}{6}}
	\sum_{l=0}^{\min(m,n)} \frac{q^{- \half  (l+1)(m+n-l)}}{[m-l]![n-l]!}.\\
\end{split}
\ee
We have used the conventions $A_{mn}^{34} = A_{m,n}^{31}$, $A_{mn}^{10}=A_{mn}^{13}$, and
\be
[m]! = \prod_{k=1}^m [k] = \prod_{k=1}^m (q^{k/2} - q^{-k/2}).
\ee
for an integer $m\geq 1$, and $[0]! =1$.
The original ADKMV Conjecture in \cite{adkmv} is the special case $a_1=a_2=a_3=0$ of \eqref{eq-framedADKMV}.

In \cite{dz1},
Deng and third-named author proved the one-legged case (i.e., $\mu^2 = \mu^3 =(\emptyset)$ and $a_2=a_3=0$)
and the two-legged case (i.e., $\mu^3 = (\emptyset)$ and $a_3=0$)
of the framed ADKMV Conjecture.
The general case has been proved by the authors in \cite{wyz} recently.

\subsection{Gluing of the topological vertices}
\label{sec:gluing prop}
In this subsection,
we review the gluing rule for the topological vertex,
see \cite{adkmv, dz2}.

As a consequence of the framed ADKMV conjecture,
the framed topological vertex in the fermionic vector
is a state in $\cF \otimes \cF \otimes \cF$ given by a Bogoliubov transform,
therefore,
it gives  a tau-function of the $3$-component KP hierarchy.
Now let $X$ be a toric CY threefold whose toric diagram contains $N$ external legs.
If the toric diagram of $X$ does not contain a loop,
i.e., when the toric diagram is a tree,
one can use the fermionic gluing procedure of topological vertex \cite{dz2}
to conclude that the generating series of open GW invariants of $X$ gives a tau-function
of the $N$-component KP hierarchy.

When the toric diagram of $X$ contains loops
(e.g. the total space of the line bundle $\cO(-3)$ over the complex projective plane $\bP^2$),
the situation  becomes more involved.
In this case the generating series of the open GW invariants of $X$
do not correspond to a tau-function of the $N$-component KP hierarchy
since it only involves the zero fermion number sector,
see \cite[\S 4.7; \S 5.13]{adkmv} for details.
In that work,
Aganagic--Dijkgraaf--Klemm--Mari\~{n}o--Vafa proposed to consider the fermionic number fluxes through the loops
in the toric diagram.
(This is similar to the integral over loop momenta in quantum field theory computations
for Feynman graphs with loops.)

\subsection{A new fermionic representation of the topological vertex}

The above approach is conceptually simple,  but is not easy to explicitly carry out in practice.
In this work we use a different fermionic representation of the topological vertex developed
in the proof of the ADKMV Conjecture  in \cite{wyz}.
A great advantage of this approach is  that it enables us to apply Wick's Theorem in computations.
In this subsection we recall this new representation.
For simplicity we only state the results for the case $a_1 =a_2 =a_3 = 0$.

Define the modified vertex operators $\tGa_\pm(z)$ by:
\be
\label{eq-def-tGamma}
\begin{split}
\tGa_+ (z) = f_1(z) \Gamma_+(z) R^{-1} z^C,\qquad\qquad
\tGa_- (z) = f_2(z) z^C R \Gamma_-(z),
\end{split}
\ee
where $\Gamma_\pm$ are the vertex operators \eqref{def-vert-Gamma}
and $R$ is the shift operator \eqref{eq-def-shiftop},
and the functions $f_1,f_2$ are given by:
\be
\label{eq-def-f1f2}
\begin{split}
f_1(q^k) = (-1)^{k-\half} \cdot q^{\frac{k^2}{2} - \frac{k}{2} -\frac{1}{16}},\qquad\qquad
f_2(q^k) = (-1)^{k+\half} \cdot q^{ - \frac{k}{2} -\frac{1}{16}}.
\end{split}
\ee
In this paper the variable $z$ in $f_i(z)$ is always of the form $q^k$ for some $k$.
The operators $\tGa_\pm (z)$
can be rewritten as follows
(see \cite[\S 4.1]{wyz}):
\be
\label{eq-tGa-reform}
\begin{split}
\tGa_+ (z) = f_1(z) \Gamma_-(z^{-1}) \psi^* (z^{-1}), \qquad\qquad
\tGa_- (z) = z\cdot f_2(z) \psi(z) \Gamma_+ (z^{-1}).
\end{split}
\ee
For  a partition $\mu = (m_1,\cdots,m_k|n_1,\cdots,n_k)$,
define the operator $\Psi_\mu (q)$ by:
\be
\label{eq-def-Psi}
\begin{split}
	\Psi_{\mu} (q) =& \prod_{j\geq 0}^{\longleftarrow} \Gamma_{-,\{j,\mu\}}(q^{-j-\half})
	\cdot \prod_{i\geq 0}^{\longrightarrow} \Gamma_{+,\{i,\mu\}}(q^{-i-\half})\\
	=& \cdots \Gamma_{-,\{2,\mu\}}(q^{-\frac{5}{2}}) \Gamma_{-,\{1,\mu\}}(q^{-\frac{3}{2}})
	\Gamma_{-,\{0,\mu\}}(q^{-\frac{1}{2}})
	\Gamma_{+,\{0,\mu\}}(q^{-\frac{1}{2}})
	\Gamma_{+,\{1,\mu\}}(q^{-\frac{3}{2}}) \cdots,
\end{split}
\ee
where
\begin{equation*}
	\begin{split}
		\Gamma_{-,\{j,\mu\}}(z) =
		\begin{cases}
			\Gamma_-(z), & \text{ if $j\notin \{m_1,\cdots,m_k\}$};\\
			\tGa_+(z^{-1}), & \text{ if $j\in \{m_1,\cdots,m_k\}$},
		\end{cases}
	\end{split}
\end{equation*}
and
\begin{equation*}
	\begin{split}
		\Gamma_{+,\{i,\mu\}}(z) =
		\begin{cases}
			\Gamma_+(z), & \text{ if $i\notin \{n_1,\cdots,n_k\}$};\\
			\tGa_-(z^{-1}), & \text{ if $i\in \{n_1,\cdots,n_k\}$.}
		\end{cases}
	\end{split}
\end{equation*}
One of the main results of \cite{wyz} is the following fermionic representation of the topological vertex $W_{\mu^1,\mu^2,\mu^3} (q)$
as an expectation value on the (one-component) fermionic Fock space $\cF$:
\begin{Theorem}
	[\cite{wyz}]
	Let $\mu^1,\mu^2,\mu^3$ be three partitions.
	Then:
	\be
	\label{eq-fermrep-wyz}
	W_{\mu^1,\mu^2,\mu^3} (q) =
	\langle \mu^1 |  \Psi_{\mu^2}(q) q^{-K /2} |(\mu^3)^t \rangle,
	\ee
	where $K$ is the cut-and-join operator \eqref{eq-def-C&Jopr}.
\end{Theorem}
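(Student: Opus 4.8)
The plan is to evaluate the right-hand side of \eqref{eq-fermrep-wyz} by vertex-operator calculus and to match it against the combinatorial formula \eqref{eq-framedTV-def}. By \eqref{eq-framedTV-def} at zero framing, together with $q^{-K/2}|(\mu^3)^t\rangle=q^{\kappa_{\mu^3}/2}|(\mu^3)^t\rangle$ (which follows from $K|\nu\rangle=\kappa_\nu|\nu\rangle$ and $\kappa_{\lambda^t}=-\kappa_\lambda$), the identity \eqref{eq-fermrep-wyz} is equivalent to the skew-Schur identity
\[
\langle\mu^1|\,\Psi_{\mu^2}(q)\,|(\mu^3)^t\rangle=(-1)^{|\mu^2|}\,s_{(\mu^2)^t}(q^{-\rho})\sum_{\eta}s_{\mu^1/\eta}(q^{(\mu^2)^t+\rho})\,s_{(\mu^3)^t/\eta}(q^{\mu^2+\rho}).
\]
I would first dispose of the base case $\mu^2=(\emptyset)$: there \eqref{eq-def-Psi} collapses to $\Psi_{(\emptyset)}(q)=\prod_{j\geq 0}^{\longleftarrow}\Gamma_-(q^{-j-\half})\cdot\prod_{i\geq 0}^{\longrightarrow}\Gamma_+(q^{-i-\half})$, a string of ordinary vertex operators with every $\Gamma_-$ already standing to the left of every $\Gamma_+$; letting the $\Gamma_-$'s act on $\langle\mu^1|$ and the $\Gamma_+$'s on $|(\mu^3)^t\rangle$ and inserting a complete set of partition states between the two blocks gives $\sum_\eta s_{\mu^1/\eta}(q^\rho)\,s_{(\mu^3)^t/\eta}(q^\rho)$, which is exactly the displayed right-hand side when $\mu^2=(\emptyset)$.

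For a general $\mu^2=(m_1,\dots,m_k\mid n_1,\dots,n_k)$ the idea is to turn $\Psi_{\mu^2}(q)$ into a pure product of vertex operators $\Gamma_\pm$, shift operators $R^{\pm1}$ and scalings $q^{\pm C}$ times a scalar, and then to commute everything into canonical order. Using \eqref{eq-tGa-reform} one rewrites the modified vertex operators occurring in \eqref{eq-def-Psi} --- $\tGa_+(q^{m_i+\half})$ at position $m_i$ of the left factor and $\tGa_-(q^{n_l+\half})$ at position $n_l$ of the right factor --- as an explicit scalar (built from $f_1,f_2$ of \eqref{eq-def-f1f2} and the extra $z$-factor of $\tGa_-$) times an ordinary $\Gamma_\mp$ times one free fermion, $\psi^*(q^{-m_i-\half})$ resp.\ $\psi(q^{n_l+\half})$; thus the $\Gamma$-content of the string is unchanged and only $2k$ free fermions and an overall scalar $C_{\mu^2}(q)$ are introduced. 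Expanding each of these $2k$ fermions by \eqref{eq-psi-inGamma} then exhibits $\Psi_{\mu^2}(q)$ as a scalar times a product of $\Gamma_\pm$'s, $R^{\pm1}$'s and $q^{\pm C}$'s. The key point is how the hidden vertex operators migrate: the factor $\Gamma_-(\{q^{n_l+\half}\})$ inside $\psi(q^{n_l+\half})$, pushed left, augments the principal specialization seen by $\langle\mu^1|$ by the variable $q^{n_l+\half}$, while its companion $\Gamma_+(-\{q^{-n_l-\half}\})$, pushed right, deletes the variable $q^{-n_l-\half}$ from the specialization seen by $|(\mu^3)^t\rangle$; dually $\psi^*(q^{-m_i-\half})$ contributes a $\Gamma_+(\{q^{m_i+\half}\})$ adding $q^{m_i+\half}$ on the $|(\mu^3)^t\rangle$ side and a $\Gamma_-(-\{q^{-m_i-\half}\})$ deleting $q^{-m_i-\half}$ on the $\langle\mu^1|$ side. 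By the Maya-diagram description of Frobenius coordinates, the $k$ moves ``delete $q^{-m_i-\half}$, insert $q^{n_i+\half}$'' turn the base specialization $q^\rho$ into precisely $q^{(\mu^2)^t+\rho}$, and the dual $k$ moves turn it into $q^{\mu^2+\rho}$.

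Commuting all $\Gamma_+$'s to the right of all $\Gamma_-$'s by \eqref{eq-commGamma} and moving the $R^{\pm1}$'s and $q^{\pm C}$'s to the outside (where they are harmless, since $\langle\mu^1|$ and $|(\mu^3)^t\rangle$ have charge zero and the net power of $R$ is zero) reduces the left-hand side to a scalar times $\langle\mu^1|\Gamma_-(q^{(\mu^2)^t+\rho})\Gamma_+(q^{\mu^2+\rho})|(\mu^3)^t\rangle=\sum_\eta s_{\mu^1/\eta}(q^{(\mu^2)^t+\rho})\,s_{(\mu^3)^t/\eta}(q^{\mu^2+\rho})$. It then remains to show that the leftover scalar --- the product of $C_{\mu^2}(q)$, of all the rational factors $(1-q^{\ast})^{\pm1}$ produced when the hidden $\Gamma_+$'s were dragged past the $\Gamma_-$'s and when the $R^{\pm1}$'s and $q^{\pm C}$'s were commuted, and of the signs and half-integer powers of $q$ coming from the $z^{C-1}$ resp.\ $z^{-C}$ prefactors in \eqref{eq-psi-inGamma} --- equals exactly $(-1)^{|\mu^2|}\,s_{(\mu^2)^t}(q^{-\rho})$. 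Here one invokes Wick's theorem (which is precisely what makes working with $\Psi_{\mu^2}$ convenient) to package the $k$ particle/hole contractions into a $k\times k$ determinant, and then recognizes that determinant, via the Giambelli (Frobenius-coordinate) expansion of a Schur function and the hook-content formula for the principal specialization, as $(-1)^{|\mu^2|}s_{(\mu^2)^t}(q^{-\rho})$. Together with the base case this proves the displayed identity, hence \eqref{eq-fermrep-wyz}.

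The conceptual skeleton above --- the fermion/skew-Schur dictionary, the migration of the hidden $\Gamma$'s, and the Wick evaluation --- is routine, and I expect essentially all the work, hence the main obstacle, to lie in the exact bookkeeping. One must keep precise track of the many half-integer powers of $q$ and of all the signs (from the $f_i$, from the minus signs inside $\Gamma_\pm(-\{z\})$, and from reordering the $2k$ fermions), verify that after the $\Gamma_+$-past-$\Gamma_-$ and the $R$- and $q^C$-conjugations the two variable sets emerge as $q^{(\mu^2)^t+\rho}$ and $q^{\mu^2+\rho}$ on the nose --- this hinges on the Maya-diagram bookkeeping of the Frobenius coordinates and is where an error is most likely to hide --- and confirm that the residual scalar is \emph{exactly} $(-1)^{|\mu^2|}s_{(\mu^2)^t}(q^{-\rho})$ with nothing unaccounted for. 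In short, the difficulty is matching two moderately intricate closed-form expressions term by term, not any single clever step.
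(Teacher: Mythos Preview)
The paper does not actually prove this theorem: it is quoted from the authors' earlier work \cite{wyz} (``One of the main results of \cite{wyz} is the following fermionic representation\ldots''), so there is no in-paper proof to compare your proposal against. What you have written is a plausible sketch of the argument that must appear in \cite{wyz}, and the overall strategy --- reduce to the skew-Schur identity via $q^{-K/2}|(\mu^3)^t\rangle=q^{\kappa_{\mu^3}/2}|(\mu^3)^t\rangle$, verify the $\mu^2=(\emptyset)$ case directly, and for general $\mu^2$ unpack $\tGa_\pm$ via \eqref{eq-tGa-reform} and \eqref{eq-psi-inGamma}, commute all $\Gamma_\pm$'s into normal order, and identify the residual scalar with $(-1)^{|\mu^2|}s_{(\mu^2)^t}(q^{-\rho})$ --- is exactly the kind of vertex-operator calculus one expects here.

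That said, your proposal is a sketch, not a proof, and you are candid about this: the step ``it then remains to show that the leftover scalar \ldots\ equals exactly $(-1)^{|\mu^2|}s_{(\mu^2)^t}(q^{-\rho})$'' is the entire content of the theorem beyond the trivial case, and you have not carried it out. In particular, the claim that the $R^{\pm1}$'s and $q^{\pm C}$'s are ``harmless'' because the outer states have charge zero and the net $R$-power vanishes glosses over the fact that these operators do not simply commute past the intervening $\Gamma_\pm$'s --- each conjugation by $R$ or $q^C$ shifts variables and introduces further factors (cf.\ Lemmas \ref{lem-comm-RN}--\ref{lem-comm-RL} and Corollary \ref{cor-comm-LPsi}), and tracking those shifts is precisely what produces the specializations $q^{(\mu^2)^t+\rho}$ and $q^{\mu^2+\rho}$ rather than something else. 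Likewise, the invocation of Wick's theorem to package the scalar as a $k\times k$ determinant and then ``recognize'' it as the principally specialized Schur function is asserted but not shown. None of this is wrong in spirit, but the actual proof in \cite{wyz} presumably does this bookkeeping in full, and that is where the substance lies.
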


Later we will see that
this fermionic representation of $W_{\mu^1,\mu^2,\mu^3} (q)$ plays an important role in the study
of the integrability of the open GW invariants of local toric CY threefolds.

\subsection{Open string amplitudes of local toric Calabi--Yau geometry}

Let $S$ be a nonsingular toric Fano surface,
and let $X$ be the total space of the canonical bundle on $S$.
Such a manifold $X$ is called a local toric Calabi--Yau threefold,
and the toric diagram associated with $X$ consists of
a polygon together with an external leg attached to each vertex of the polygon.
See Figure \ref{fig1} for the toric diagrams of $K_{\bP^2}$ and $K_{\bP^1\times \bP^1}$.
The topological vertex formalism developed in \cite{akmv}
gives an algorithm to compute the open and closed string amplitudes on $X$ using certain gluing procedures
similar to Feynman rules in quantum field theory.
The topological closed string theory of local toric Calabi--Yau geometries have been studied in \cite{ek, gz, pe, iq} and references therein.
For the computations of the closed string amplitudes,
since there are no outer D-branes and so the external legs
are all given  the empty partition $(\emptyset)$,
we only need to take summations over the partitions, one for each internal edge,
see e.g. \cite[(2.4), (2.5)]{ek}:
\begin{equation*}
\begin{split}
Z_{\text{closed}}^{(S)} =  \sum_{\mu^1,\cdots, \mu^M \in \cP} &
(-1)^{\sum_{i=1}^M \gamma_i |\mu^i|}
q^{\half \sum_{i=1}^M (\gamma_i+1) \kappa_{\mu^i}}
\cdot \prod_{i=1}^M Q_i^{|\mu^i|} \\
& \quad \cdot
W_{(\emptyset),\mu^{M,t},\mu^1} W_{(\emptyset),\mu^{1,t},\mu^2}
\cdots W_{(\emptyset),\mu^{M-1,t},\mu^M},
\end{split}
\end{equation*}
where $\gamma_i$ is the self-intersection number of the rational curve associated with the $i$-th edge,
and $Q_i= e^{-s_i}$ are the K\"ahler parameters.
Here $\mu^{i,t}$ denotes the transpose of $\mu^i$.

Now consider the open string amplitudes on the local toric CY threefold $X$
which allow nontrivial input data from the outer branes.
In general,
the gluing procedure of the topological vertex gives the following open string amplitudes:
\begin{equation}
\begin{split}
Z_{\lambda^1,\cdots,\lambda^M}^{(S)} (Q_i,\gamma_i) =  \sum_{\mu^1,\cdots, \mu^M \in \cP} &
(-1)^{\sum_{i=1}^M \gamma_i |\mu^i|}
q^{\half \sum_{i=1}^M (\gamma_i+1) \kappa_{\mu^i}}
\cdot \prod_{i=1}^M Q_i^{|\mu^i|} \\
& \cdot
W_{\mu^{M,t},\lambda^1,\mu^1} W_{\mu^{1,t},\lambda^2,\mu^2}
\cdots W_{\mu^{M-1,t},\lambda^M,\mu^M},
\end{split}
\end{equation}
where the partitions $\lambda^1,\cdots,\lambda^M\in \cP$ are the input on external legs in the toric diagram.
For simplicity we will focus on the special case of one outer brane,
i.e., the case $\lambda^2 = \cdots = \lambda^M = (\emptyset)$,
and denote:
\be
\label{eq-def-Zlambda}
Z_\lambda (Q_i,\gamma_i) = Z_{\lambda,(\emptyset),\cdots,(\emptyset)}^{(S)} (Q_i,\gamma_i).
\ee
 To illustrate the idea,  let $S= \bP^2$,
 then $\gamma_1=\gamma_2=\gamma_3 =1$,
 and
 \begin{equation*}
\begin{split}
Z_{\lambda}^{(\bP^2)} (Q_i) =  \sum_{\mu^1,\mu^2, \mu^3 \in \cP}
(-1)^{\sum_{i=1}^3  |\mu^i|}
q^{ \sum_{i=1}^3  \kappa_{\mu^i}}
\prod_{i=1}^3 Q^{|\mu^i|}
 \cdot
W_{\mu^{3,t},\lambda,\mu^1} W_{\mu^{1,t},(\emptyset),\mu^2}
W_{\mu^{2,t},(\emptyset),\mu^3},
\end{split}
\end{equation*}
as can be visually depicted in the following picture:
\begin{equation*}
\begin{tikzpicture}[scale=0.7]
\draw [thick] (0,0) -- (0,2);
\draw [thick] (0,0) -- (2,0);
\draw [thick] (2,0) -- (0,2);
\draw [thick] (0,0) -- (-0.9,-0.9);
\draw [thick] (0,2) -- (-0.6,3.2);
\draw [thick] (2,0) -- (3.2,-0.6);
\node [align=center,align=center] at (-1.1,-1.05) {$\lambda$};
\node [align=center,align=center] at (0.25,-0.35) {$\mu^1$};
\node [align=center,align=center] at (1.6,-0.35) {$\mu^{1,t}$};
\node [align=center,align=center] at (3.5,-0.9) {$(\emptyset)$};
\node [align=center,align=center] at (-1,3.2) {$(\emptyset)$};
\node [align=center,align=center] at (1.85,0.8) {$\mu^2$};
\node [align=center,align=center] at (0.75,2) {$\mu^{2,t}$};
\node [align=center,align=center] at (-0.35,1.5) {$\mu^3$};
\node [align=center,align=center] at (-0.45,0.4) {$\mu^{3,t}$};
\end{tikzpicture}
\end{equation*}

\subsection{Partition function from fermionic point of view}

The summations over partitions can be understood by assigning a bosonic Fock space $\Lambda$ to each internal edge.
If we cyclically label the vertices by $i=1, \cdots, n$,
and denote by $\Lambda_{(i)}$ the copy of bosonic Fock space associated with the edge from $i$ to $i+1$,
then each $(-1)^{  \gamma_i |\mu^i|}
q^{\half   (\gamma_i+1) \kappa_{\mu^i}}
 \  Q_i^{|\mu^i|}
W_{\mu^{i-1}, \lambda^i, \mu^i}$ can be understood as
the matrix coefficient of a linear operator $\Phi_{\lambda^i}$ form $\Lambda_{(i-1)}$ to $\Lambda_{(i)}$,
where $\Lambda_{(0)} = \Lambda_{(M)}$.
Then the partition function can be understood as a trace on $\Lambda_{(0)}$
of the product $\Phi_{\lambda^1} \cdots \Phi_{\lambda^M}$.

Now we rewrite the above open string amplitudes $Z_\lambda(Q_i,\gamma_i)$
as a trace in the fermionic picture.
Using the fermionic representation \eqref{eq-fermrep-wyz}
we can rewrite $Z_\lambda(Q_i,\gamma_i)$ as follows:
\be
\begin{split}
Z_\lambda   ( Q_i,\gamma_i) =  &
\sum_{ \mu^1,\cdots, \mu^M \in \cP}
(-1)^{\sum_{i=1}^M \gamma_i |\mu^i|}
q^{\half \sum_{i=1}^M (\gamma_i+1) \kappa_{\mu^i}}
\cdot \Big( \prod_{i=1}^M Q_i^{|\mu^i|} \Big) \\
& \qquad\qquad
\cdot
W_{\mu^{M,t},\lambda,\mu^1} W_{\mu^{1,t},(\emptyset),\mu^2}
\cdots W_{\mu^{M-1,t},(\emptyset),\mu^M} \\
=&
\sum_{ \mu^1,\cdots, \mu^M \in \cP}
(-1)^{\sum_{i=1}^M \gamma_i |\mu^i|}
q^{\half \sum_{i=1}^M (\gamma_i+1) \kappa_{\mu^i}}
\Big( \prod_{i=1}^M Q_i^{|\mu^i|} \Big)
\big\langle \mu^{M,t} \big| \Psi_\lambda(q) q^{-\half K} \big|\mu^{1,t} \big\rangle\\
& \qquad\qquad
\cdot
\big\langle \mu^{1,t} \big| \Psi_{(\emptyset)}(q) q^{-\half K} \big|\mu^{2,t} \big\rangle \cdots
\big\langle \mu^{M-1,t} \big| \Psi_{(\emptyset)}(q) q^{-\half K} \big|\mu^{M,t} \big\rangle.
\end{split}
\ee
Notice that $\kappa_{\mu^t} = -\kappa_\mu$ for every partition $\mu$,
and thus by \eqref{eq-eigen-C&J} we have:
\begin{equation*}
(-1)^{ \gamma_i |\mu^i|}\cdot
q^{\half (\gamma_i+1) \kappa_{\mu^i}} \cdot Q_i^{|\mu^i|}
 \big|\mu^{i,t} \big\rangle
= q^{-\half (\gamma_i+1)K} (-1)^{\gamma_i\cdot L} Q_i^L \big|\mu^{i,t} \big\rangle,
\end{equation*}
and then
\begin{equation*}
\begin{split}
Z_\lambda  ( Q_i,\gamma_i)
=&
\sum_{ \mu^1,\cdots, \mu^M \in \cP}
\big\langle \mu^{M,t} \big| \Psi_\lambda(q) q^{-\half (\gamma_1+2) K}
(-1)^{\gamma_1\cdot L} Q_1^L \big|\mu^{1,t} \big\rangle \\
& \qquad\qquad
 \cdot \big\langle \mu^{1,t} \big| \Psi_{(\emptyset)}(q) q^{-\half (\gamma_2+2) K}
(-1)^{\gamma_2\cdot L} Q_2^L \big|\mu^{2,t} \big\rangle \cdots \\
& \qquad\qquad
\cdot
\big\langle \mu^{M-1,t} \big| \Psi_{(\emptyset)}(q) q^{-\half (\gamma_M+2) K}
(-1)^{\gamma_M\cdot L} Q_M^L  \big|\mu^{M,t} \big\rangle.
\end{split}
\end{equation*}
Then we use the property
$\sum_{\mu\in \cP} |\mu\rangle \langle\mu| = \Id_{\cF^{(0)}}$ on $\cF^{(0)}$, and get:
\be
\label{eq-fermpf-Zlambda}
\begin{split}
&Z_\lambda   ( Q_i,\gamma_i)
= \sum_{\mu\in \cP}
\Big\langle \mu \Big|
\Big( \Psi_\lambda(q) q^{-\half (\gamma_1+2) K}
(-1)^{\gamma_1\cdot L} Q_1^L \Big)  \\
&\quad \cdot
\Big( \Psi_{(\emptyset)}(q) q^{-\half (\gamma_2+2) K}
(-1)^{\gamma_2\cdot L} Q_2^L \Big)
 \cdots
\Big( \Psi_{(\emptyset)}(q) q^{-\half (\gamma_M+2) K}
(-1)^{\gamma_M\cdot L} Q_M^L \Big)  \Big|\mu \Big\rangle.
\end{split}
\ee
Recall that $\{|\mu\rangle\}$ is a basis for the fermionic Fock space $\cF^{(0)}$
of charge zero,
and now it is clear that $Z_\lambda(Q_i,\gamma_i)$ is a trace of a product of operators
on $\cF^{(0)}$.
It is natural to consider the trace of such an operator on the whole fermionic Fock space $\cF$,
and this is exactly the way we construct the total partition function
which contains the contributions of all fermion number fluxes.
We will give the details of this construction in the next section.

\section{Total Partition Function for Local Toric Calabi--Yau Threefolds}
\label{sec:def Ztotal}

In this section we introduce the total partition function
for a local Calabi-Yau threefold $X$
which contains contributions of all fermion number fluxes
by extending the trace \eqref{eq-fermpf-Zlambda}
to a trace on the whole fermionic Fock space $\cF$.
We also give an explicit formula representing the total partition function
in terms of the open string amplitudes of $X$ in the bosonic picture.

\subsection{Construction of the contribution of fermion number flux $N$}

In this subsection we propose a mathematical construction of the contribution of fermion number flux $N$
for every integer $N\in \bZ$,
and define the total partition function for a general local toric CY threefold.

Let $X$ be a (formal) local CY threefold
with K\"ahler parameters $Q_i$ and framing factors $\gamma_i$ for $i=1,\cdots,M$,
and denote by
\be
Z(\mathbf{t};Q_i,\gamma_i) = \sum_{\lambda \in \cP}
Z_\lambda (Q_i,\gamma_i) \cdot s_\lambda(\mathbf{t})
\ee
the partition function of the open string amplitudes on $X$,
where $s_\lambda(\mathbf{t})$ is the Schur polynomial in $\mathbf{t}$ indexed by the partition $\lambda$.
Let $R$ be the shift operator \eqref{eq-def-shiftop} on the fermionic Fock space,
then $\{R^{-N}|\mu \rangle\}_{N\in \bZ,\mu\in \cP}$ is a basis of the fermionic Fock space $\cF$.
We define the contribution of fermion number flux $N$ and type $\lambda$ by
conjugating $R^{N}$ to the operator in the right-hand side of \eqref{eq-fermpf-Zlambda}:
\be
\label{eq-def-ZlambdaN}
\begin{split}
&Z_\lambda^{(N)}   ( Q_i,\gamma_i ; \Xi)
= \sum_{\mu\in \cP}
\Big\langle \mu \Big|R^N \Xi^C
\Big( \Psi_\lambda(q) q^{-\half (\gamma_1+2) K}
(-1)^{\gamma_1\cdot L} Q_1^L \Big) \\
&\quad \cdot
\Big( \Psi_{(\emptyset)}(q) q^{-\half (\gamma_2+2) K}
(-1)^{\gamma_2\cdot L} Q_2^L \Big)
 \cdots
\Big( \Psi_{(\emptyset)}(q) q^{-\half (\gamma_M+2) K}
(-1)^{\gamma_M\cdot L} Q_M^L \Big) R^{-N} \Big|\mu \Big\rangle,
\end{split}
\ee
where $\Xi$ is a formal variable which tracks the fermion number flux $N$.
In particular, for $N=0$ one has:
\be
Z^{(0)}_\lambda(Q_i,\gamma_i ;\Xi) = Z_\lambda(Q_i,\gamma_i).
\ee
As in the derivation of \eqref{eq-fermpf-Zlambda},
if we use the property
$\sum_{n\in \bZ} \sum_{\mu\in \cP} R^n|\mu\rangle \langle\mu|R^{-n} = \Id_{\cF}$ on $\cF$ instead,
the right-hand side of \eqref{eq-def-ZlambdaN} can be rewritten in the form of a Feynman rule as in last subsection.
The difference is now we associate a copy of fermionic Fock space $\cF_{(i)}$ with the edge from vertex $i$ to vertex $i+1$,
and regard \eqref{eq-def-ZlambdaN} as a trace on $\cF_{(0)}$.

\begin{Definition}
Let $X$ be a (formal) local CY threefold with K\"ahler parameters $Q_1,\cdots,Q_M$
and framing factors $\gamma_1,\cdots,\gamma_M$,
and let $\mathbf{t} =(t_1,t_2,\cdots)$ be a sequence of formal variables.
Define the total partition function for $X$ by:
\be
\label{eq-Ztotaldef}
\begin{split}
&Z^{\text{total}} (\mathbf{t}; Q_i,\gamma_i; \Xi)
= \sum_{N\in \bZ} \sum_{\lambda \in \cP}
Z_\lambda^{(N)}   ( Q_i,\gamma_i ; \Xi) \cdot s_\lambda(\mathbf{t}).
\end{split}
\ee
\end{Definition}

Then our main theorem in this section is the following:
\begin{Theorem}
\label{thm-Ztotal-reformulate}
The total partition function $Z^{\text{total}}(\mathbf{t};Q_i,\gamma_i;\Xi)$ for the local toric CY threefold $X$
is of the following form:
\be
Z^{\text{total}} (\mathbf{t};Q_i,\gamma_i;\Xi) =
\sum_{N\in \bZ} Z^{(N)}(\mathbf{t}; Q_i,\gamma_i)\cdot \Xi^{-N}.
\ee
The $N$-th partition function $Z^{(N)}$ is given by:
\be
\label{eq-def-zN}
 Z^{(N)} (\mathbf{t};Q_i,\gamma_i)
= q^{-NL_0} \cdot Q^{\frac{N^2}{2}} (-1)^{\gamma \frac{ N^2}{2}}
q^{(\gamma +2M) \frac{N(4N^2 -1)}{24}}
Z(\mathbf{t}; q^{(\gamma_i+2)N}Q_i,\gamma_i),
\ee
where $L_0$ is the following degree operator acting on formal power series in $\mathbf{t}$:
\be
L_0 = \sum_{n=1}^\infty nt_n \frac{\pd}{\pd t_n},
\ee
and we use the notations:
\be
Q = \prod_{i=1}^M Q_i,
\qquad\qquad
\gamma = \sum_{i=1}^M \gamma_i.
\ee
In particular,
for $N=0$ we simply have
\be
Z^{(0)} (\mathbf{t};Q_i,\gamma_i) = Z(\mathbf{t};Q_i,\gamma_i).
\ee
\end{Theorem}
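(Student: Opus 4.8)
The plan is to evaluate the trace in \eqref{eq-def-ZlambdaN} defining $Z_\lambda^{(N)}(Q_i,\gamma_i;\Xi)$ by conjugating the entire product of operators past the shift operator $R^{-N}$. The point is that $R$ has a controlled commutation relation with each of the basic operators appearing in the product: the charge operator $C$, the degree operator $L$, the cut-and-join operator $K$, and the vertex-type operator $\Psi_\lambda(q)$. First I would record how $R$ conjugates each ingredient. Since $R$ shifts all indices by $1$, one has $R^N C R^{-N} = C - N$, and more importantly $R^N L R^{-N}$ and $R^N K R^{-N}$ pick up correction terms linear and quadratic in $N$ respectively (this is essentially the statement that $L$ and $K$ restricted to $\cF^{(-N)}$ differ from the partition eigenvalues on $\cF^{(0)}$ by shifts $|\mu|\mapsto |\mu| + (\text{linear in }N)$ and $\kappa_\mu \mapsto \kappa_\mu + (\text{terms in }N)$). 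The cubic term $\tfrac{N(4N^2-1)}{24}$ in \eqref{eq-def-zN} is exactly the kind of constant produced by resumming $\sum_{j} $ over the shifted half-integer lattice when one computes $R^N K R^{-N}$ as an operator identity plus a scalar; this is where the factor $q^{(\gamma+2M)\frac{N(4N^2-1)}{24}}$ will come from, with $\gamma+2M = \sum_i(\gamma_i+2)$ being precisely the total coefficient of $K$ across all $M$ vertices.

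Next I would use the key structural fact from \eqref{eq-def-Psi}--\eqref{eq-tGa-reform}, namely that $\Psi_\mu(q)$ is built from the shift operator $R$ and vertex operators $\Gamma_\pm$, so that $R^N \Psi_\lambda(q) R^{-N}$ is again expressible in terms of $\Psi_\lambda$ at rescaled arguments, or can be absorbed into a rescaling of the $\Gamma_\pm$ arguments together with scalar prefactors coming from the functions $f_1,f_2$ in \eqref{eq-def-f1f2}. In practice I expect the cleanest route is: write the trace in \eqref{eq-def-ZlambdaN} as $\mathrm{Tr}_{\cF^{(0)}}\big(\Xi^C\, \Omega_N\big)$ where $\Omega_N = R^N(\cdots)R^{-N}$, observe that $\Xi^C$ acting on $\cF^{(0)}$ is trivial but $\Xi^C$ on $\cF^{(-N)}$ contributes $\Xi^{-N}$, so after re-summing $\sum_\mu \langle\mu| R^N(\cdots)R^{-N}|\mu\rangle = \mathrm{Tr}_{\cF^{(-N)}}(\cdots) = \sum_{\mu}\langle\mu|(\cdots)|\mu\rangle$ evaluated with the $R$-conjugated operators, and the $\Xi$-dependence collapses to the single power $\Xi^{-N}$, giving the claimed form $\sum_N Z^{(N)}\Xi^{-N}$. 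The operator $Q_i^L$ conjugates to $Q_i^{L + cN + c'N^2}$; collecting the linear-in-$L$ shift across all vertices produces the overall $q^{-NL_0}$ once one passes back through the boson--fermion correspondence (since $L$ corresponds to the degree operator $L_0=\sum n t_n\partial_{t_n}$ on Schur polynomials), while the $N$-independent-of-$L$ part produces $Q^{N^2/2}$ after using $\sum_i$ of the quadratic shifts; the sign $(-1)^{\gamma N^2/2}$ comes analogously from conjugating the factors $(-1)^{\gamma_i L}$.

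Then I would handle the rescaling of the K\"ahler parameters. Commuting $R^N$ through $q^{-\frac12(\gamma_i+2)K}$ on each factor introduces, besides the scalar already discussed, a conjugation that effectively shifts $L$ by a multiple of $N$ inside the next $Q_i^L$; tracking this carefully across the cyclic product shows that the net effect is to replace each $Q_i$ by $q^{(\gamma_i+2)N}Q_i$, which is exactly the argument rescaling $Z(\mathbf t; q^{(\gamma_i+2)N}Q_i,\gamma_i)$ in \eqref{eq-def-zN}. I would organize this as: (i) an operator lemma computing $R^N q^{aK}R^{-N}$ and $R^N q^{bL}R^{-N}$ and $R^N Q^L R^{-N}$ explicitly, with the scalar corrections; (ii) substitution of these into the cyclic product in \eqref{eq-def-ZlambdaN}, using that the $\Psi_{(\emptyset)}(q)$ factors also conjugate nicely (in fact $R^N\Psi_{(\emptyset)}(q)R^{-N}$ contributes only a rescaling since $\Psi_{(\emptyset)}$ is a product of plain $\Gamma_\pm$'s and $R$'s); (iii) re-identification of the resulting trace with $Z(\mathbf t; q^{(\gamma_i+2)N}Q_i,\gamma_i)$ via \eqref{eq-fermpf-Zlambda} and the boson--fermion dictionary, with $q^{-NL_0}$ emerging from the $L$-shift acting through \eqref{eq-newton-schur} on $s_\lambda(\mathbf t)$.

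The main obstacle I expect is the bookkeeping of the scalar prefactors — in particular getting the cubic term $\tfrac{N(4N^2-1)}{24}$ and the precise combination $\gamma+2M$ and the sign $(-1)^{\gamma N^2/2}$ all correct simultaneously. These arise from regularized sums over the shifted half-integer lattice ($\sum_{s\in\bZ+\frac12, 0<s\le N}$ type sums of $s^0$, $s^1$, $s^2$) when conjugating $C$, $L$, $K$ by $R^N$, and from the accumulation of $f_1,f_2$ factors \eqref{eq-def-f1f2} inside the $\Psi_\lambda$ operators; the delicate point is that $R^N$ does \emph{not} commute with the modified vertex operators $\tGa_\pm$ appearing in $\Psi_\lambda$ for nonempty $\lambda$, so one must verify that the net scalar from $R^N\Psi_\lambda(q)R^{-N}$ is independent of $\lambda$ (so that it can be pulled out of the sum $\sum_\lambda\cdots s_\lambda(\mathbf t)$) and combines correctly with the contributions from the other $M-1$ empty vertices. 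Once that uniformity-in-$\lambda$ claim is established, the rest is a matter of assembling the scalars; I would isolate the uniformity claim as a separate lemma proved by inspecting the explicit form \eqref{eq-def-Psi}.
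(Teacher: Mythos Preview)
Your overall strategy is the paper's strategy: conjugate each factor in \eqref{eq-def-ZlambdaN} past $R^{\pm N}$ using operator identities for $R^N C R^{-N}$, $R^N L R^{-N}$, $R^N K R^{-N}$, and $R^N \Psi_\mu(q) R^{-N}$, then read off the scalar prefactors and recognize the remaining trace as $Z_\lambda$ with shifted K\"ahler parameters. The paper records these identities as two short lemmas (your step (i)) and then substitutes, exactly as you outline.

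There is, however, one concrete misattribution in your bookkeeping that you should fix before executing. The factor $q^{-NL_0}$ does \emph{not} come from conjugating $Q_i^L$. One has $R^N Q_i^L R^{-N} = Q_i^{L-NC+N^2/2}$, and on the charge-zero trace the $C$ term drops, leaving only $Q_i^{N^2/2}$ as a scalar and $Q_i^L$ unchanged; this accounts for the $Q^{N^2/2}$ (and analogously $(-1)^{\gamma N^2/2}$) but contributes no $q$-power tied to $L$. The $q^{-NL_0}$ instead comes from the $\Psi_\lambda$ factor: the identity $R^N \Psi_\mu(q) R^{-N} = q^{-N|\mu|}\,\Psi_\mu(q)$ (proved by checking $R^N \tGa_\pm(z) R^{-N} = z^{-N}\tGa_\pm(z)$ directly from \eqref{eq-def-tGamma}) produces the scalar $q^{-N|\lambda|}$, and since $L_0 s_\lambda = |\lambda|\, s_\lambda$ this becomes $q^{-NL_0}$ after summing against $s_\lambda(\mathbf t)$. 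So your final worry is inverted: the scalar from $R^N\Psi_\lambda R^{-N}$ is \emph{not} independent of $\lambda$, it is exactly $q^{-N|\lambda|}$, and that $\lambda$-dependence is the mechanism producing the $q^{-NL_0}$ operator, not an obstacle to be eliminated. The $f_1,f_2$ factors in \eqref{eq-def-f1f2} play no role in this conjugation beyond what is already packaged into the clean identity $R^N\Psi_\mu R^{-N}=q^{-N|\mu|}\Psi_\mu$. Once you correct this one point the rest of your plan goes through verbatim; the K\"ahler shift $Q_i\mapsto q^{(\gamma_i+2)N}Q_i$ you correctly traced to the $-2NL$ term in $R^N K R^{-N}$, and the cubic constant you correctly traced to the scalar in that same conjugation.
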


The proof of this theorem
will be given in \S \ref{sec-pf-Ztotal}.
Recall that $s_\lambda (\mathbf{t})$ is a weighted homogeneous polynomial of degree $|\lambda|$
in $t_1,\cdots, t_{|\lambda|}$ if we assign $\deg (t_i) = i$,
and then:
\begin{equation*}
L_0 \big( s_\lambda (\mathbf{t}) \big) = |\lambda| \cdot s_\lambda (\mathbf{t})
\end{equation*}
for every $\lambda$.
Therefore the $N$-th partition function can be rewritten as:
\be
\label{eq-defZN-2}
 Z^{(N)} (\mathbf{t};Q_i,\gamma_i)
= \sum_{\lambda\in \cP}
q^{-N| \lambda |}  Q^{\frac{N^2}{2}} (-1)^{\gamma \frac{ N^2}{2}}
q^{(\gamma +2M) \frac{N(4N^2 -1)}{24}} \cdot
Z_\lambda(q^{(\gamma_i+2)N}Q_i,\gamma_i)s_\lambda(\mathbf{t}).
\ee

\begin{Remark}
The constant term of the $N$-th partition function $Z^{(N)}$ is:
\be
Z^{(N)} (\bm 0; Q_i,\gamma_i) =
Q^{\frac{N^2}{2}} \cdot (-1)^{\gamma \frac{N^2}{2}} \cdot q^{(\gamma+2M)\frac{N(4N^2 -1)}{24}}
\cdot Z (\bm 0; q^{(\gamma_i+2)N}Q_i,\gamma_i).
\ee
In particular,
in the local $\bP^2$ case one has $M=3$ and $\gamma_i=1$,
and then by taking $N=\pm 1$ one obtains:
\be
\label{eq-P2const}
Z^{(\pm 1)} (\bm 0; Q_i,\bm 1) =
(Q_1Q_2Q_3)^{1/2} (-1)^{3/2} q^{\pm 9/8}
\cdot Z (\bm 0; q^{\pm 3}\cdot Q_i, \bm 1).
\ee
Aganagic {\em et al} proposed a similar formula for local $\bP^2$
by using a fermionic propagator in the gluing procedure,
see \cite[(5.54)]{adkmv}.
They checked their formula up to order $2$.
The structure of our formula \eqref{eq-P2const} matches with theirs.
\end{Remark}

\begin{Remark}
Even though our results are confined to the case of one-loop toric diagrams,
we hope they may shed some light on the properties of the partition function
when the mirror curve has genus greater than one.
It will be interesting to
compare our construction of the total partition function
with the non-perturbative constructions given in \cite{abdks, be, gjz}
from the viewpoints of the topological recursion \cite{eo} and background independent matrix integrals.
\end{Remark}

\subsection{Some commutation relations}

We recall some commutation relations between operators on the fermionic Fock space $\cF$ in this subsection.
These commutation relations will be useful in the rest of this paper.

First we recall the following relations between the shift operator $R$
and the operators $C,L,K$ defined by \eqref{eq-def-C&Jopr},
which can be found in \cite[Appendix A]{Ok2} and \cite[\S 2.7]{Ok1}:
\begin{Lemma}
\label{lem-comm-RN}
Let $R$ be the shift operator \eqref{eq-def-shiftop}, then:
\be
\label{eq-comm-RN}
\begin{split}
& R^N C R^{-N} = C - N,\\
& R^N L R^{-N} = L-NC + \frac{N^2}{2},\\
& R^N K R^{-N} = K -2NL +N^2 C - \frac{N(4N^2 -1)}{12},
\end{split}
\ee
for every integer $N$.
\end{Lemma}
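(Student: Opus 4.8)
The plan is to reduce the three identities to the single, elementary conjugation rule for $R$ acting on the free fermions. Directly from the definitions of $\psi_r,\psi_r^*$ in \S\ref{sec-freeferm} and of $R$ in \eqref{eq-def-shiftop}, one checks on semi-infinite wedges that
\be
R^N \psi_r R^{-N} = \psi_{r-N}, \qquad R^N \psi_r^* R^{-N} = \psi_{r+N}^*
\ee
for all $r\in\bZ+\half$ and $N\in\bZ$; the signs cause no trouble because $\psi_r^*$ deletes a wedge factor occupying a position that the global shift does not move. To pass from the fermions to the bilinears appearing in \eqref{eq-def-C&Jopr} the only point to watch is the re-normal-ordering: writing $:\psi_{-s}\psi_s^*: \,=\, \psi_{-s}\psi_s^* - \langle \psi_{-s}\psi_s^*\rangle$ (and recalling $\langle\psi_{-s}\psi_s^*\rangle=1$ if $s<0$ and $0$ if $s>0$) and conjugating gives
\be
R^N :\psi_{-s}\psi_s^*: R^{-N} = \, :\psi_{-(s+N)}\psi_{(s+N)}^*: \,+\, c_{s,N},
\ee
where $c_{s,N}= \langle\psi_{-(s+N)}\psi_{(s+N)}^*\rangle - \langle\psi_{-s}\psi_s^*\rangle$ is a scalar supported on finitely many $s$. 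All the sums over $s$ are legitimate term by term: admissibility forces $:\psi_{-s}\psi_s^*:$ to act nontrivially on any basis vector $|\bm a\rangle$ for only finitely many $s$, so $C,L,K$ are well defined and diagonal on the $|\bm a\rangle$-basis.

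Substituting into $C,L,K$ and reindexing the summation by $u=s+N$, the operator part reassembles — using $(u-N)$ and $(u-N)^2=u^2-2Nu+N^2$ — into $C$, $L-NC$, and $K-2NL+N^2C$ respectively, while the scalar corrections become the finite sums $\sum_u f(u-N)\big(\langle\psi_{-u}\psi_u^*\rangle - \langle\psi_{-(u-N)}\psi_{(u-N)}^*\rangle\big)$ for $f(s)=1,s,s^2$. For $N>0$ the bracket is supported on the half-integers $u\in\{\half,\tfrac32,\dots,N-\half\}$, and the three sums evaluate to $-N$, $\tfrac{N^2}{2}$, and $-\tfrac{N(4N^2-1)}{12}$, the last of these being the power-sum identity $\sum_{k=0}^{N-1}(k+\half)^2=\tfrac{N(4N^2-1)}{12}$. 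This gives \eqref{eq-comm-RN} for $N>0$; the case $N<0$ follows by inverting, $R^{N}=(R^{-N})^{-1}$, or by rerunning the computation with the bracket now supported on $\{N+\half,\dots,-\half\}$. Alternatively one may establish the $N=1$ case this way and then induct, composing the $N=1$ relations with the inductive hypothesis; the only nontrivial check is that the constants combine, which is the identity $\tfrac14+N+N^2+\tfrac{N(4N^2-1)}{12}=\tfrac{(N+1)(4(N+1)^2-1)}{12}$.

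The main obstacle here is purely bookkeeping: carrying the normal-ordering c-numbers correctly through the reindexing and nailing the cubic constant $\tfrac{N(4N^2-1)}{12}$ together with the half-integer power sums. Since these formulas already appear in \cite{Ok1,Ok2} one could simply cite them, but the self-contained derivation above is short enough to include.
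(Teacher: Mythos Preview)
Your proof is correct. The paper itself does not prove Lemma~\ref{lem-comm-RN}; it simply attributes the three identities to \cite[Appendix~A]{Ok2} and \cite[\S 2.7]{Ok1}. You instead give a self-contained derivation by conjugating the individual fermions, tracking the normal-ordering constants through the reindexing $u=s+N$, and evaluating the resulting finite half-integer power sums. All the steps check out, including the support of the bracket on $\{\tfrac12,\dots,N-\tfrac12\}$ for $N>0$ and the cubic constant via $\sum_{k=0}^{N-1}(k+\tfrac12)^2=\tfrac{N(4N^2-1)}{12}$. Since you already note that one could just cite the references, your write-up subsumes the paper's treatment; the only trade-off is length versus self-containedness.
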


Moreover, we have the following:
\begin{Lemma}
\label{lem-comm-RN-2}
Let $\Gamma_\pm(\mathbf{t})$ be the vertex operators defined by \eqref{eq-def-gammat},
and let $\Psi_\mu(q)$ be the operators defined by \eqref{eq-def-Psi}.
Then for every $N\in\bZ$, we have:
\be
\label{eq-comm-RN-2}
\begin{split}
& R^N \Gamma_\pm (\mathbf{t}) R^{-N} = \Gamma_\pm (\mathbf{t}), \\
& R^N \Psi_\mu (q) R^{-N} = q^{-N |\mu|}\cdot \Psi_\mu (q).
\end{split}
\ee
\end{Lemma}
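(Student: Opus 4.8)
The plan is to reduce both identities to the elementary fact that conjugation by the shift operator $R$ acts on the free fermions as an index shift. First I would check, directly from the definition \eqref{eq-def-shiftop} of $R$, that
$R\psi_r R^{-1} = \psi_{r-1}$ and $R\psi_r^* R^{-1} = \psi_{r+1}^*$ for all $r\in\bZ+\half$ — inserting (resp.\ deleting) a factor $z^r$ and then lowering all exponents by one is the same as first lowering and then inserting (resp.\ deleting) $z^{r-1}$, and the reordering signs match. Passing to the generating series \eqref{eq-gen-bos}, these shifts give $R\psi(\xi)R^{-1} = \xi^{-1}\psi(\xi)$ and $R\psi^*(\xi)R^{-1} = \xi\,\psi^*(\xi)$, hence $R\,:\!\psi(\xi)\psi^*(\xi)\!:\,R^{-1}$ equals $\alpha(\xi)$ up to the $\xi$-dependent normal-ordering constant, which only affects the $\xi^{-1}$-coefficient. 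Therefore $R\alpha_n R^{-1} = \alpha_n$ for every $n\neq 0$ (consistently with $RCR^{-1}=C-1$ in Lemma~\ref{lem-comm-RN}), and iterating, $R^N\alpha_{\pm n}R^{-N}=\alpha_{\pm n}$ for $n\geq 1$. Since $\Gamma_\pm(\mathbf{t})=\exp\big(\sum_{n\geq 1}t_n\alpha_{\pm n}\big)$ is built only from these modes, the first identity $R^N\Gamma_\pm(\mathbf{t})R^{-N}=\Gamma_\pm(\mathbf{t})$ follows at once.

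For the second identity I would use the description \eqref{eq-def-Psi} of $\Psi_\mu(q)$ as an ordered product of factors, each of which is either a $\Gamma_\pm(q^{-j-\half})$ — fixed by $R^N$-conjugation by the first part — or one of the modified operators $\tGa_+(q^{m_a+\half})$ (one for each Frobenius coordinate $m_a$ of $\mu$) or $\tGa_-(q^{n_a+\half})$ (one for each $n_a$). So it suffices to compute $R^N\tGa_\pm(z)R^{-N}$. Using the definitions \eqref{eq-def-tGamma}, $\tGa_+(z)=f_1(z)\Gamma_+(z)R^{-1}z^C$ and $\tGa_-(z)=f_2(z)z^C R\Gamma_-(z)$, together with $R^N\Gamma_\pm R^{-N}=\Gamma_\pm$, the commutation $R^N z^C R^{-N}=z^{C-N}$ coming from $R^NCR^{-N}=C-N$ in Lemma~\ref{lem-comm-RN}, and the fact that $R$ commutes with its own powers, one obtains $R^N\tGa_\pm(z)R^{-N}=z^{-N}\tGa_\pm(z)$ — here using that $z=q^k$ is a scalar, so $z^{C-N}=z^{-N}z^C$.

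Finally I would assemble the scalars: conjugating $\Psi_\mu(q)$ by $R^N$ leaves every $\Gamma_\pm$ factor untouched and multiplies each $\tGa_+(q^{m_a+\half})$ by $q^{-N(m_a+\half)}$ and each $\tGa_-(q^{n_a+\half})$ by $q^{-N(n_a+\half)}$, for a total scalar $q^{-N\sum_{a=1}^k(m_a+n_a+1)}$. The last step is to recognize that for $\mu=(m_1,\dots,m_k\mid n_1,\dots,n_k)$ in Frobenius notation one has $\sum_{a=1}^k(m_a+n_a+1)=|\mu|$ (the diagonal box plus arm plus leg at each of the $k$ diagonal cells), which gives $R^N\Psi_\mu(q)R^{-N}=q^{-N|\mu|}\Psi_\mu(q)$. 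All the computations are routine; the only point requiring care is bookkeeping — correctly tracking the inversion $z\mapsto z^{-1}$ in the arguments of the $\tGa_\pm$ as they enter \eqref{eq-def-Psi} (the factors $\tGa_+(z^{-1})$, $\tGa_-(z^{-1})$ evaluated at $z=q^{-j-\half}$, $q^{-i-\half}$) and matching the Frobenius sum with $|\mu|$, so I do not anticipate a genuine obstacle.
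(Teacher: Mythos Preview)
Your proposal is correct and follows essentially the same route as the paper: first establish $R\alpha_nR^{-1}=\alpha_n$ for $n\neq 0$ to get the $\Gamma_\pm$ invariance, then compute $R^N\tGa_\pm(z)R^{-N}=z^{-N}\tGa_\pm(z)$ from the definition \eqref{eq-def-tGamma} and Lemma~\ref{lem-comm-RN}, and finally collect the scalar $q^{-N\sum_a(m_a+n_a+1)}=q^{-N|\mu|}$. Your derivation of $R\alpha_nR^{-1}=\alpha_n$ via the fermion shifts $R\psi_rR^{-1}=\psi_{r-1}$, $R\psi_r^*R^{-1}=\psi_{r+1}^*$ is slightly more detailed than the paper (which simply states the fact), and your care to exclude $n=0$ is warranted since $RCR^{-1}=C-1$.
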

\begin{proof}
The first equality follows from the fact $R \alpha_n R^{-1} = \alpha_n$
for every $n$.
Now we prove the second one.
From the definition \eqref{eq-def-tGamma} of the operators $\tGa_\pm (z)$ we have:
\begin{equation*}
\begin{split}
R^N \tGa_+ (z) R^{-N}
=& f_1(z) R^N \Gamma_+(z) R^{-1} z^C R^{-N} \\
=& f_1(z) \Gamma_+(z) R^{N-1} z^C R^{-N} \\
=& f_1(z) \Gamma_+(z) R^{-1} z^{C-N}\\
=& z^{-N} \cdot \tGa_+ (z),
\end{split}
\end{equation*}
where we have used Lemma \ref{lem-comm-RN} and the first equality in \eqref{eq-comm-RN-2}.
Similarly,
one has:
\begin{equation*}
R^N \tGa_- (z) R^{-N} = z^{-N} \cdot \tGa_- (z).
\end{equation*}
Then for a partition $\mu = (m_1,\cdots,m_k | n_1,\cdots,n_k)$,
from \eqref{eq-def-Psi} we get:
\begin{equation*}
\begin{split}
R^N \Psi_\mu (q) R^{-N}
=& q^{\sum_{i=1}^k (-N)(m_i+\half) + \sum_{i=1}^k (-N)(n_i+\half)}  \cdot \Psi_\mu (q)
= q^{-N\cdot |\mu|} \cdot \Psi_\mu (q),
\end{split}
\end{equation*}
which completes the proof.
\end{proof}

\begin{Lemma}
Let $a$ be a formal parameter.
For every partition $\mu$,
we have:
\be
\label{eq-comm-CLPsi}
\begin{split}
& a^C \Psi_\mu (q) a^{-C} = \Psi_\mu (q).
\end{split}
\ee
\end{Lemma}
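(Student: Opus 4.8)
The strategy is to track how conjugation by $a^C$ acts on each of the three kinds of building block out of which $\Psi_\mu(q)$ is assembled in \eqref{eq-def-Psi}: the vertex operators $\Gamma_\pm$, the shift operator $R$, and the powers $z^C$ of the charge operator. Since $C=\alpha_0$ and $[\alpha_0,\alpha_n]=0$ for every $n$ by \eqref{eq-comm-boson}, the operator $a^C$ commutes with every $\Gamma_\pm(\mathbf{t})$, and it obviously commutes with every power $z^C$. So the only interaction that needs to be analysed is the one between $a^C$ and $R$, and the whole computation reduces to combining these observations.

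First I would record that, by Lemma~\ref{lem-comm-RN} applied with $N=-1$, one has $R^{-1}CR=C+1$, hence $R^{-1}a^C R=a^{C+1}$, and therefore
\[
a^C R\, a^{-C}=aR,
\qquad\qquad
a^C R^{-1}\, a^{-C}=a^{-1}R^{-1}.
\]
Substituting this into the definition \eqref{eq-def-tGamma} of $\tGa_\pm$, and moving $a^C$ past the harmless factors $\Gamma_\pm(z)$ and $z^C$, I expect to obtain
\[
a^C\,\tGa_+(z)\,a^{-C}=a^{-1}\,\tGa_+(z),
\qquad\qquad
a^C\,\tGa_-(z)\,a^{-C}=a\,\tGa_-(z).
\]
Conceptually these say nothing more than that $\tGa_+$ lowers the charge by one and $\tGa_-$ raises it by one.

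Finally I would feed this into \eqref{eq-def-Psi}. For a partition $\mu=(m_1,\dots,m_k\,|\,n_1,\dots,n_k)$ in Frobenius notation, the ordered product defining $\Psi_\mu(q)$ contains exactly $k$ factors of $\tGa_+$-type (one for each $m_i$) and exactly $k$ factors of $\tGa_-$-type (one for each $n_i$), every other factor being a plain $\Gamma_\pm$ which commutes with $a^C$. Conjugating the product by $a^C$ term by term therefore produces the overall scalar $(a^{-1})^k\cdot a^k=1$, which gives $a^C\Psi_\mu(q)a^{-C}=\Psi_\mu(q)$; the case $\mu=(\emptyset)$ is immediate because then no $\tGa_\pm$ occurs. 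I do not anticipate a genuine obstacle: the only point requiring a little care is the bookkeeping of the $\pm1$ charges of $\tGa_\pm$ together with the observation that the number of $\tGa_+$-factors equals the number of $\tGa_-$-factors — which is exactly the statement that the two lists in the Frobenius notation of $\mu$ have the same length.
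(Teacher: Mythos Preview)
Your proof is correct. The paper's own proof is the single sentence ``The conclusion follows from the fact that $\Psi_\mu(q)$ is of charge zero,'' and your computation is precisely the explicit verification of that fact: you show that each $\tGa_+$ contributes charge $-1$, each $\tGa_-$ contributes charge $+1$, every other factor has charge zero, and the Frobenius notation forces the numbers of $\tGa_+$ and $\tGa_-$ factors to coincide. So the approaches are the same in substance, with yours being the unpacked version of the paper's one-liner.
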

\begin{proof}
The conclusion follows from
the fact that $\Psi_\mu (q)$ is of charge zero.
\end{proof}

\begin{Lemma}
\label{lem-comm-LGamma}
We have:
\be
\label{eq-comm-LGamma}
a^L \Gamma_\pm (z) a^{-L} = \Gamma_\pm (a^{\mp 1} \cdot z).
\ee
\end{Lemma}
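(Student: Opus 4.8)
The plan is to reduce the statement to the single operator identity
\[
[L,\alpha_n] = -n\,\alpha_n, \qquad n\in\bZ,
\]
and then conjugate the exponential defining $\Gamma_\pm(z)$ in \eqref{def-vert-Gamma} term by term. From $[L,\alpha_n]=-n\alpha_n$ one gets, for a formal parameter $a$, the relation $a^L\alpha_n a^{-L}=a^{-n}\alpha_n$: this is the standard exponentiation of an adjoint action (write $a^L=\exp((\log a)L)$ and use $e^{sL}\alpha_n e^{-sL}=e^{-ns}\alpha_n$, which follows by differentiating in $s$ and using the commutator). Applying this inside $\Gamma_+(z)=\exp\bigl(\sum_{n\ge1}\tfrac{z^n}{n}\alpha_n\bigr)$ gives $a^L\Gamma_+(z)a^{-L}=\exp\bigl(\sum_{n\ge1}\tfrac{(a^{-1}z)^n}{n}\alpha_n\bigr)=\Gamma_+(a^{-1}z)$, and applying it inside $\Gamma_-(z)=\exp\bigl(\sum_{n\ge1}\tfrac{z^n}{n}\alpha_{-n}\bigr)$ gives $a^L\Gamma_-(z)a^{-L}=\Gamma_-(az)$; together these are exactly $a^L\Gamma_\pm(z)a^{-L}=\Gamma_\pm(a^{\mp1}z)$.

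So the only real content is the commutator $[L,\alpha_n]=-n\alpha_n$. I would establish it in one of two ways. Directly: using the definition $L=\sum_{s\in\bZ+\half}s\,{:}\psi_{-s}\psi_s^*{:}$ from \eqref{eq-def-C&Jopr} and $\alpha_n=\sum_{s\in\bZ+\half}{:}\psi_{-s}\psi_{s+n}^*{:}$ together with the Clifford relations \eqref{eq-anticomm-psi}, a short normal-ordered computation yields the result (it is convenient to first record $[L,\psi_r]=-r\,\psi_r$ and $[L,\psi_r^*]=r\,\psi_r^*$, which combine with \eqref{eq-comm-alphapsi} to give the claim after resumming). Alternatively, and perhaps most cleanly, I would transport the identity to the bosonic side via the boson–fermion correspondence $\Phi$: by \eqref{eq-eigen-C&J} the operator $L$ acts on $\Lambda$ as the degree operator $\sum_{n\ge1}np_n\,\pd/\pd p_n$ (with $\deg p_n=n$), while by \eqref{eq-boson} $\alpha_{-n}$ is multiplication by $p_n$ (degree $+n$) and $\alpha_n$ is $n\,\pd/\pd p_n$ (degree $-n$) for $n>0$; the commutator of the degree operator with an operator of homogeneous degree $d$ is multiplication by $d$, which again gives $[L,\alpha_n]=-n\alpha_n$ for all $n$.

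I do not expect a genuine obstacle here; the points to be careful about are purely bookkeeping: getting the sign of the commutator right so that the rescaling comes out as $a^{\mp1}z$ rather than $a^{\pm1}z$, handling $n<0$ on the same footing as $n>0$, and noting that since $[L,\alpha_n]=-n\alpha_n$ is an operator identity the conclusion holds on all of $\cF$, not just on the charge-zero subspace $\cF^{(0)}$. The same argument in fact gives the more general statement $a^L\Gamma_\pm(\mathbf{t})a^{-L}=\Gamma_\pm(\mathbf{t}')$ for the vertex operators \eqref{eq-def-gammat}, where $t'_n=a^{\mp n}t_n$; specializing $\mathbf{t}=\{z\}$ recovers \eqref{eq-comm-LGamma}.
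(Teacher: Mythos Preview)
Your proposal is correct and follows essentially the same approach as the paper: reduce to the commutator identity $[L,\alpha_n]=-n\alpha_n$ (established via the fermionic definitions and the Clifford relations), exponentiate via the Baker--Campbell--Hausdorff formula to get $a^L\alpha_n a^{-L}=a^{-n}\alpha_n$, and then substitute into the exponential defining $\Gamma_\pm(z)$. The paper carries out the commutator computation by expanding $[\psi_{-s}\psi_s^*,\alpha_n]$ and $[\psi_{-s}^*\psi_s,\alpha_n]$ using \eqref{eq-comm-alphapsi}, which is exactly your ``direct'' route; your alternative bosonic argument via the degree operator is a nice variant not used in the paper.
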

\begin{proof}
It suffices to check the special case $a=e$,
since the general case can be obtained easily by replacing $L$ by $\log a \cdot L$.
We need the Baker--Campbell--Hausdorff formula:
\be
\label{eq-BCH}
e^XYe^{-X} = Y + [X,Y] +\frac{1}{2!}[X,[X,Y]] + \frac{1}{3!}[X,[X,[X,Y]]] +\cdots.
\ee
First we compute the commutator of $L$ and $\alpha_n$.
By the definition of $L$ we have:
\begin{equation*}
[L,\alpha_n] =
\sum_{s>0} s\big( [\psi_{-s}\psi_s^* ,\alpha_n]
+ [\psi_{-s}^*\psi_s ,\alpha_n] \big).
\end{equation*}
Then by \eqref{eq-comm-alphapsi},
we have:
\begin{equation*}
\begin{split}
& [\psi_{-s}\psi_s^* ,\alpha_n]
=
[\psi_{-s} ,\alpha_n] \psi_s^*
+
\psi_{-s} [\psi_s^* ,\alpha_n]
= \psi_{-s}\psi_{n+s}^* - \psi_{n-s}\psi_s^*, \\
& [\psi_{-s}^*\psi_s ,\alpha_n]
=
[\psi_{-s}^* ,\alpha_n] \psi_s
+
\psi_{-s}^* [\psi_s ,\alpha_n]
= -\psi_{-s}^*\psi_{n+s} + \psi_{n-s}^* \psi_s,
\end{split}
\end{equation*}
and thus:
\begin{equation*}
[L,\alpha_n] = \sum_{s>0}s \big( \psi_{-s}\psi_{n+s}^* - \psi_{n-s}\psi_s^* \big)
+\sum_{s>0} s \big( -\psi_{-s}^*\psi_{n+s} + \psi_{n-s}^* \psi_s \big) =
-n\cdot \alpha_n.
\end{equation*}
Then by \eqref{eq-BCH} we obtain:
\begin{equation*}
e^L \alpha_n e^{-L} = e^{-n} \cdot \alpha_n,
\end{equation*}
for every $n\in \bZ$.
Notice that $\{\alpha_n\}_{n>0}$ commute with each others, and then:
\begin{equation*}
e^L \Gamma_+(z) e^{-L}
= e^{-L} \exp\Big( \sum_{n=1}^\infty \frac{z^n}{n}\alpha_n \Big) e^L
= \exp\Big( \sum_{n=1}^\infty \frac{z^n e^{-n}}{n}\alpha_n \Big)
= \Gamma_+ (e^{-1}z).
\end{equation*}
Similarly,
one has:
\begin{equation*}
e^L \Gamma_- (z) e^{-L}
= e^{-L} \exp\Big( \sum_{n=1}^\infty \frac{z^n}{n}\alpha_{-n} \Big) e^L
= \exp\Big( \sum_{n=1}^\infty \frac{z^n e^{n}}{n}\alpha_{-n} \Big)
= \Gamma_- (ez),
\end{equation*}
and thus the conclusion holds.
\end{proof}

\begin{Lemma}
\label{lem-comm-RL}
Let $a$ be a formal parameter.
We have:
\be
\begin{split}
 a^L R^{-1} a^{-L} = a^{\half -C} R^{-1}, \qquad\qquad
 a^L R a^{-L} = R a^{C-\half }.
\end{split}
\ee
\end{Lemma}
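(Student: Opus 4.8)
The plan is to prove the first identity and then deduce the second for free by taking inverses: once $a^L R^{-1} a^{-L} = a^{\half - C} R^{-1}$ is known, one has $a^L R a^{-L} = \big(a^L R^{-1} a^{-L}\big)^{-1} = \big(a^{\half - C} R^{-1}\big)^{-1} = R\, a^{C - \half}$, which is the second formula.

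For the first identity I would proceed by conjugation, exactly as in the proof of Lemma~\ref{lem-comm-LGamma}: it is enough to handle $a = e$ and then replace $L$ by $(\log a)L$, so I would set $F(t) = e^{tL} R^{-1} e^{-tL}$, note $F'(t) = e^{tL}[L, R^{-1}]\, e^{-tL}$, and reduce everything to the single commutator $[L, R^{-1}]$. That commutator is read off directly from Lemma~\ref{lem-comm-RN}: multiplying $R L R^{-1} = L - C + \half$ on the left by $R^{-1}$ gives $[L, R^{-1}] = R^{-1}\big(\half - C\big)$. Since $L$ and $C$ are simultaneously diagonal in the basis $\{|\mu\rangle\}$ of $\cF^{(0)}$ (by \eqref{eq-eigen-C&J}) they commute, so the iterated brackets are $[L,[L,\dots,[L,R^{-1}]\dots]] = R^{-1}\big(\half - C\big)^k$ with $k$ factors, and the Baker--Campbell--Hausdorff series \eqref{eq-BCH} telescopes to $F(t) = \sum_{k\ge 0}\frac{t^k}{k!}R^{-1}\big(\half - C\big)^k = R^{-1} e^{t(\half - C)}$. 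Finally I would push $R^{-1}$ through the exponential using $R^{-1} C = (C+1)R^{-1}$, i.e. \eqref{eq-comm-RN} with $N = -1$, and collect the additive constant this produces; setting $t = \log a$ then yields the stated normal form $a^{\,\bullet - C} R^{-1}$.

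As a cross-check I would also verify the identity by direct action on the basis $\{R^{-n}|\mu\rangle\}_{n\in\bZ,\,\mu\in\cP}$ of $\cF$: by \eqref{eq-comm-RN} and \eqref{eq-eigen-C&J}, $L$ acts on $R^{-n}|\mu\rangle$ by the scalar $|\mu| + \tfrac{n^2}{2}$ and $C$ by the scalar $-n$, so both sides of the first identity send $R^{-n}|\mu\rangle$ to a scalar multiple of $R^{-(n+1)}|\mu\rangle$, and matching the exponents of $a$ identifies the two operators. The argument is entirely mechanical; the only delicate point --- and the place I would double-check --- is the bookkeeping of the additive shift constants produced by the conjugation relations in \eqref{eq-comm-RN}.
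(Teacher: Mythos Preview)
Your approach is essentially the paper's: extract $[L,R^{-1}]$ from Lemma~\ref{lem-comm-RN}, use $[L,C]=0$, and sum the BCH series~\eqref{eq-BCH}; the paper writes the commutator directly as $(\tfrac12-C)R^{-1}$ and treats the second identity ``similarly'' rather than by inversion, and it does not include your cross-check on the basis $\{R^{-n}|\mu\rangle\}$. One minor point: simultaneous diagonalizability on $\cF^{(0)}$ alone does not give $[L,C]=0$ on all of $\cF$ (the paper instead notes that $L$ has charge zero), though your cross-check paragraph effectively supplies the needed global statement.
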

\begin{proof}
From the second equality in \eqref{eq-comm-RN} we know that
\begin{equation*}
[L,R^{-1}] = (\half -C) R^{-1}.
\end{equation*}
Here $L$ is of charge $0$ and thus $[L,C] = 0$.
Then by the BCH formula \eqref{eq-BCH} one obtains:
\begin{equation*}
e^L R^{-1} e^{-L} = e^{\half -C} R^{-1}.
\end{equation*}
The second equality can be proved similarly.
\end{proof}

The following corollary will be useful in the computations in \S \ref{sec-eg}.
\begin{Corollary}
\label{cor-comm-LPsi}
For a partition $\mu = (m_1,\cdots,m_k | n_1,\cdots,n_k)$,
one has:
\be
\label{eq-comm-LPsi}
a^L \Psi_\mu(q)a^{-L} =
\prod_{j\geq 0}^{\longleftarrow} \Gamma_{-,\{j,\mu\}}(q^{-j-\half};a)
\cdot \prod_{i\geq 0}^{\longrightarrow} \Gamma_{+,\{i,\mu\}}(q^{-i-\half};a),
\ee
where:
\begin{equation*}
\begin{split}
\Gamma_{-,\{j,\mu\}}(z;a) =
\begin{cases}
\Gamma_-(az), & \text{ if $j\notin \{m_1,\cdots,m_k\}$};\\
f_1(z^{-1})
\Gamma_+ (a^{-1}z^{-1})  a^{\half -C} R^{-1}  z^{-C},
 & \text{ if $j\in \{m_1,\cdots,m_k\}$},
\end{cases}
\end{split}
\end{equation*}
and
\begin{equation*}
\begin{split}
\Gamma_{+,\{i,\mu\}}(z;a) =
\begin{cases}
\Gamma_+(a^{-1}z), & \text{ if $i\notin \{n_1,\cdots,n_k\}$};\\
f_2(z^{-1})  z^{-C} R a^{C-\half}  \Gamma_-(az^{-1}),
& \text{ if $i\in \{n_1,\cdots,n_k\}$,}
\end{cases}
\end{split}
\end{equation*}
where $f_1,f_2$ are the functions \eqref{eq-def-f1f2}.
In particular,
for the empty partition $(\emptyset)$ one has:
\be
\label{eq-comm-LPsiempty}
a^L \Psi_{(\emptyset)}(q)a^{-L} =
\prod_{j\geq 0}^{\longleftarrow} \Gamma_{-}(aq^{-j-\half})
\cdot \prod_{i\geq 0}^{\longrightarrow} \Gamma_{+}(a^{-1} q^{-i-\half}).
\ee
\end{Corollary}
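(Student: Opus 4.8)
The plan is to conjugate the defining infinite product \eqref{eq-def-Psi} of $\Psi_\mu(q)$ one factor at a time. Since $a^{-L}a^L=\Id_\cF$, I would insert this identity between every pair of consecutive factors appearing in \eqref{eq-def-Psi}; because conjugation by $a^L$ is an algebra automorphism it distributes over the ordered product, and this is legitimate since on any fixed basis vector of $\cF$ only finitely many factors contribute nontrivially. The computation thus reduces to evaluating $a^L\Gamma_{-,\{j,\mu\}}(q^{-j-\half})a^{-L}$ and $a^L\Gamma_{+,\{i,\mu\}}(q^{-i-\half})a^{-L}$ for each individual factor, keeping the same left-to-right ordering as in \eqref{eq-def-Psi}.

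When the factor is an ordinary vertex operator --- that is, $\Gamma_{-,\{j,\mu\}}(z)=\Gamma_-(z)$ for $j\notin\{m_1,\dots,m_k\}$ and $\Gamma_{+,\{i,\mu\}}(z)=\Gamma_+(z)$ for $i\notin\{n_1,\dots,n_k\}$ --- Lemma \ref{lem-comm-LGamma} immediately yields $\Gamma_-(az)$ and $\Gamma_+(a^{-1}z)$ respectively, which are exactly the claimed expressions for $\Gamma_{-,\{j,\mu\}}(z;a)$ and $\Gamma_{+,\{i,\mu\}}(z;a)$. When the factor is a modified vertex operator I would first expand it via \eqref{eq-def-tGamma}: for $j\in\{m_1,\dots,m_k\}$ the factor equals $\tGa_+(z^{-1})=f_1(z^{-1})\Gamma_+(z^{-1})R^{-1}z^{-C}$, and for $i\in\{n_1,\dots,n_k\}$ it equals $\tGa_-(z^{-1})=f_2(z^{-1})z^{-C}R\,\Gamma_-(z^{-1})$. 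Conjugation by $a^L$ is then carried out piece by piece: the scalar prefactors $f_1,f_2$ are inert; $a^L\Gamma_\pm(\cdot)a^{-L}$ is handled by Lemma \ref{lem-comm-LGamma}; $a^LR^{\pm1}a^{-L}$ by Lemma \ref{lem-comm-RL}; and $a^Lz^{\pm C}a^{-L}=z^{\pm C}$ because $L$ has charge zero, so $[L,C]=0$. Collecting the pieces gives precisely $f_1(z^{-1})\Gamma_+(a^{-1}z^{-1})a^{\half-C}R^{-1}z^{-C}$ and $f_2(z^{-1})z^{-C}Ra^{C-\half}\Gamma_-(az^{-1})$, matching the stated formulas. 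The identity \eqref{eq-comm-LPsiempty} for the empty partition is then simply the special case in which no $j$ lies in $\{m_i\}$ and no $i$ lies in $\{n_i\}$, so every factor is ordinary and only Lemma \ref{lem-comm-LGamma} is needed.

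I do not anticipate a genuine obstacle: the result is a direct substitution of Lemmas \ref{lem-comm-LGamma} and \ref{lem-comm-RL} together with $[L,C]=0$. The only point demanding a little care is the bookkeeping --- confirming that the automorphism really passes through the ordered infinite product, and correctly tracking the extra powers $a^{\half-C}$ and $a^{C-\half}$ of $a$ produced by Lemma \ref{lem-comm-RL} when they sit next to the neighbouring $\Gamma_\pm$ and $z^{\pm C}$ factors.
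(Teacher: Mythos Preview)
Your proposal is correct and follows essentially the same approach as the paper's own proof: conjugate the defining product \eqref{eq-def-Psi} factor by factor, apply Lemma~\ref{lem-comm-LGamma} to the ordinary $\Gamma_\pm$ factors, and for the modified $\tGa_\pm$ factors expand via \eqref{eq-def-tGamma} and use Lemmas~\ref{lem-comm-LGamma}, \ref{lem-comm-RL} together with $[L,C]=0$. The paper's proof is slightly terser in that it first records the conjugation of $\tGa_\pm(z)$ as a standalone formula and then substitutes, but the logical content is identical to what you outline.
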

\begin{proof}
First we use Lemma \ref{lem-comm-LGamma} to compute the conjugations of the operators $\tGa_\pm(z)$
(see \eqref{eq-def-tGamma}) by $a^L$:
\begin{equation*}
\begin{split}
a^L \tGa_+ (z) a^{-L} =& f_1(z) \cdot
a^L \Gamma_+(z) a^{-L} \cdot a^L R^{-1} a^{-L}  \cdot a^L z^C a^{-L} \\
=& f_1(z) \cdot \Gamma_+ (a^{-1}z)
\cdot a^L R^{-1} a^{-L}  \cdot a^L z^C a^{-L}.
\end{split}
\end{equation*}
Here $a^L z^C a^{-L} = z^C$ since $[L,C] = 0$,
and $a^L R^{-1} a^{-L}$ has been computed in Lemma \ref{lem-comm-RL}.
Therefore one has:
\begin{equation*}
a^L \tGa_+ (z) a^{-L} = f_1(z) \cdot
\Gamma_+ (a^{-1}z) \cdot a^{\half -C} R^{-1} \cdot z^C.
\end{equation*}
Similarly one can prove that:
\begin{equation*}
a^L \tGa_- (z) a^{-L} = f_2(z) \cdot z^C \cdot R a^{C-\half} \cdot \Gamma_-(az).
\end{equation*}
Then we apply this and Lemma \ref{lem-comm-LGamma} to \eqref{eq-def-Psi},
and prove the conclusion.
\end{proof}

\subsection{Proof of Theorem \ref{thm-Ztotal-reformulate}}
\label{sec-pf-Ztotal}

We start with the computation
of the right-hand side of \eqref{eq-def-ZlambdaN}.
By Lemma \ref{lem-comm-RN} and Lemma \ref{lem-comm-RN-2} we have:
\begin{equation*}
\begin{split}
& R^N\cdot \Xi^C \cdot \Psi_\lambda (q) q^{-\half (\gamma_1+2)K} (-1)^{\gamma_1 \cdot L} Q_1^L \cdot R^{-N} \\
=& R^N \Xi^C R^{-N} R^N \Psi_\lambda (q) R^{-N}
R^N q^{-\half (\gamma_1+2)K} R^{-N}
R^N (-1)^{\gamma_1 \cdot L} R^{-N}
R^N Q_1^L R^{-N} \\
=& \Xi^{C-N}
q^{-N|\mu|} \Psi_\lambda (q)
q^{-\half (\gamma_1+2) (K-2NL+ N^2 C -\frac{N(4N^2 -1)}{12})}
  (-1)^{\gamma_1(L-NC + \frac{N^2}{2})}
 Q_1^{L-NC + \frac{N^2}{2}}\\
=& \Xi^{-N} \cdot
q^{ -N|\lambda| -\half (\gamma_1+2) (N^2 C -\frac{N(4N^2 -1)}{12})} \cdot
(-1)^{\gamma_1(-NC + \frac{N^2}{2})} \cdot
Q_1^{-NC + \frac{N^2}{2}} \\
& \cdot \Xi^C \Psi_\lambda(q)
q^{-\half (\gamma_1+2) K}
(-1)^{\gamma_1\cdot L}
( q^{(\gamma_1 +2)N} Q_1)^L,
\end{split}
\end{equation*}
where we have used the fact
\be
[C,L]=[C,K]=[L,K]=0
\ee
which can be checked directly from the definitions of these operators.
Similarly
for $i=2,3,\cdots, M$, we have:
\begin{equation*}
\begin{split}
& R^N \cdot \Psi_{(\emptyset)}(q) q^{-\half (\gamma_i+2) K}
(-1)^{\gamma_i\cdot L} Q_i^L \cdot R^{-N} \\
=& \Psi_{(\emptyset)} (q)\cdot
q^{-\half (\gamma_i+2) (K-2NL+ N^2 C -\frac{N(4N^2 -1)}{12})} \cdot
(-1)^{\gamma_i(L-NC + \frac{N^2}{2})} \cdot
Q_i^{L-NC + \frac{N^2}{2}} \\
=&
q^{ -\half (\gamma_i+2) (N^2 C -\frac{N(4N^2 -1)}{12})} \cdot
(-1)^{\gamma_i(-NC + \frac{N^2}{2})} \cdot
Q_i^{-NC + \frac{N^2}{2}} \\
& \cdot  \Psi_{(\emptyset)}(q) \cdot
q^{-\half (\gamma_i+2) K} \cdot
(-1)^{\gamma_i\cdot L} \cdot
( q^{(\gamma_i +2)N} Q_i)^L.
\end{split}
\end{equation*}
And then the total partition function $Z^{\text{total}} (\mathbf{t}; Q_i,\gamma_i; \Xi)$
defined by \eqref{eq-Ztotaldef} becomes:
\begin{equation*}
\begin{split}
Z^{\text{total}} &(\mathbf{t}; Q_i,\gamma_i; \Xi) =
 \sum_{N\in\bZ} \sum_{\lambda,\mu \in \cP}
s_\lambda(\mathbf{t} )\cdot \Xi^{-N} q^{-N|\lambda|}
\Big( \prod_{i=1}^M q^{ \frac{1}{24} (\gamma_i+2) N(4N^2 -1)}
 (-1)^{\half \gamma_i N^2}  Q_i^{\half N^2} \Big)   \\
& \cdot \Big\langle \mu \Big|
\Big( q^{-\half (\gamma_1+2) N^2 C} (-1)^{-\gamma_1 NC} Q_1^{-NC}
\Xi^C \Psi_\lambda(q)
q^{-\half (\gamma_1+2) K}
(-1)^{\gamma_1 L}
( q^{(\gamma_1 +2)N} Q_1)^L \Big)\\
& \cdot\Big(
q^{-\half (\gamma_2+2) N^2 C} (-1)^{-\gamma_2 NC} Q_2^{-NC}
 \Psi_{(\emptyset)}(q)
q^{-\half (\gamma_2+2) K}
(-1)^{\gamma_2 L}
( q^{(\gamma_2 +2)N} Q_1)^L
\Big) \cdots \\
& \cdot \Big(
q^{-\half (\gamma_M+2) N^2 C} (-1)^{-\gamma_M NC} Q_M^{-NC}
 \Psi_{(\emptyset)}(q)
q^{-\half (\gamma_M+2) K}
(-1)^{\gamma_M L}
( q^{(\gamma_M +2)N} Q_1)^L
\Big) \Big| \mu \Big\rangle. \\
\end{split}
\end{equation*}
Notice that the vectors $\langle \mu|$, $| \mu\rangle$
and the operators $L$, $K$, $\Psi_\lambda(q)$ are of charge $0$,
thus:
\begin{equation*}
\begin{split}
Z^{\text{total}} (\mathbf{t}; Q_i,\gamma_i; \Xi) =
 \sum_{N\in\bZ} \sum_{\lambda,\mu \in \cP} &
s_\lambda(\mathbf{t} )\cdot \Xi^{-N} q^{-N|\lambda|}
\Big( \prod_{i=1}^M q^{ \frac{1}{24} (\gamma_i+2) N(4N^2 -1)}
 (-1)^{\half \gamma_i N^2}  Q_i^{\half N^2} \Big)  \\
& \cdot  \Big\langle \mu \Big|
\Big( \Psi_\lambda(q)
q^{-\half (\gamma_1+2) K}
(-1)^{\gamma_1 L}
( q^{(\gamma_1 +2)N} Q_1)^L \Big)  \\
& \qquad \cdot \Big(
 \Psi_{(\emptyset)}(q)
q^{-\half (\gamma_2+2) K}
(-1)^{\gamma_2 L}
( q^{(\gamma_2 +2)N} Q_1)^L
\Big) \cdots \\
& \qquad \cdot \Big(
 \Psi_{(\emptyset)}(q)
q^{-\half (\gamma_M+2) K}
(-1)^{\gamma_M L}
( q^{(\gamma_M +2)N} Q_1)^L
\Big) \Big| \mu \Big\rangle. \\
\end{split}
\end{equation*}
Then by \eqref{eq-fermpf-Zlambda} and \eqref{eq-defZN-2} we get:
\begin{equation*}
\begin{split}
& Z^{\text{total}} (\mathbf{t}; Q_i,\gamma_i; \Xi) \\
= &
 \sum_{N\in\bZ} \sum_{\lambda \in \cP}
s_\lambda(\mathbf{t} )\cdot \Xi^{-N} q^{-N|\lambda|}
\Big( \prod_{i=1}^M q^{ \frac{1}{24} (\gamma_i+2) N(4N^2 -1)}
 (-1)^{\half \gamma_i N^2}  Q_i^{\half N^2} \Big)
 \cdot  Z_\lambda   ( Q_i,\gamma_i) \\
 =&
\sum_{N\in \bZ} Z^{(N)}(\mathbf{t}; Q_i,\gamma_i)\cdot \Xi^{-N}.
\end{split}
\end{equation*}
This completes the proof.

\section{KP Integrability of the Total Partition Function}
\label{sec:KP integrability}

In this section we derive a determinantal formula for the total partition function $Z^{\text{total}}$
of a local toric Calabi--Yau threefold,
and then show that $Z^{\text{total}}$ is a tau-function of the KP hierarchy.
We also give a formula for its affine coordinates on the Sato Grassmannian.

\subsection{A determinantal formula}

Now in this subsection we prove a determinantal formula
which will be useful in the proof of the KP integrability of the total partition function.

\begin{Lemma}
\label{lem-det}
Let $\varphi_1, \cdots, \varphi_k$
be linear (infinite) summations of the free fermions $\{\psi_r\}_{r\in \bZ +\half}$,
and let $\varphi_1^*, \cdots, \varphi_k^* $
be linear (infinite) summations of $\{\psi_r^*\}_{r\in \bZ +\half}$.
Suppose that $G_1^*,\cdots,G_k^*$ and $G_1,\cdots,G_k$ are some fermionic operators satisfying the
Hirota bilinear relation \eqref{eq-Hirota-G},
and assume that:
\begin{equation*}
\sum_{N\in \bZ}\sum_{\mu\in \cP} \langle\mu| R^N \cdot
 G_1^*  G_2^* \cdots  G_k^*
 G_k  \cdots  G_2 G_1
\cdot R^{-N} |\mu\rangle \not= 0.
\end{equation*}
Then:
\be
\label{eq-lem-det}
\begin{split}
& \frac{\sum_{N\in \bZ}\sum_{\mu\in \cP} \langle\mu| R^N \cdot
\varphi_1^* G_1^* \varphi_2^* G_2^* \cdots \varphi_k^* G_k^*
 \varphi_k G_k  \cdots \varphi_2 G_2 \varphi_1 G_1
\cdot R^{-N} |\mu\rangle}
{\sum_{N\in \bZ}\sum_{\mu\in \cP} \langle\mu| R^N \cdot
 G_1^*  G_2^* \cdots  G_k^*
 G_k  \cdots  G_2 G_1
\cdot R^{-N} |\mu\rangle} \\
=& \det(A_{ij})_{1\leq i,j\leq k},
\end{split}
\ee
where
\begin{equation*}
\begin{split}
A_{ij} =
 \frac{\sum_{N\in \bZ}\sum_{\mu\in \cP} \langle\mu| R^N
G_1^*   \cdots G_{i-1}^* \varphi_i^* G_i^*\cdots G_k^*
G_k
  \cdots  G_{j+1} \varphi_j G_j \cdots G_1
 R^{-N} |\mu\rangle}
{\sum_{N\in \bZ}\sum_{\mu\in \cP} \langle\mu| R^N
 G_1^* G_2^*  \cdots  G_k^*
 G_k  \cdots  G_2 G_1
 R^{-N} |\mu\rangle}.
\end{split}
\end{equation*}
\end{Lemma}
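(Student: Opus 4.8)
The statement is a finite-temperature (i.e., trace over all charge sectors and all partitions) version of the determinantal formula for Bogoliubov-transformed vacuum expectation values, so the plan is to reduce it to a single "thermal Wick theorem" computation and then recognize the right-hand side. First I would introduce the operator
$$
\wG = G_1^* G_2^* \cdots G_k^* \, G_k \cdots G_2 G_1,
$$
and observe that, since each $G_i, G_i^*$ satisfies the Hirota bilinear relation \eqref{eq-Hirota-G}, so does $\wG$ (the Hirota relation is a quadratic identity in the fermion bilinears that is stable under products, as follows from the group-like characterization of $\widehat{GL}(\infty)$; concretely one checks $[\wG\otimes\wG,\sum_r\psi_r^*\otimes\psi_{-r}]=0$ by telescoping through the factors). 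Thus $\wG$ is, up to scalar, a group element, and the sum $\sum_{N}\sum_\mu\langle\mu|R^N\wG R^{-N}|\mu\rangle$ in the denominator is its graded trace, which is nonzero by hypothesis; call it $Z$.

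The key step is a \emph{thermal Wick formula}: I claim that for linear fermions $\varphi_i\in\mathrm{span}\{\psi_r\}$ and $\varphi_i^*\in\mathrm{span}\{\psi_r^*\}$ one has
$$
\frac{1}{Z}\sum_{N\in\bZ}\sum_{\mu\in\cP}\langle\mu|R^N\,
\varphi_1^* G_1^* \cdots \varphi_k^* G_k^* \,\varphi_k G_k \cdots \varphi_1 G_1\, R^{-N}|\mu\rangle
= \det(A_{ij})_{1\le i,j\le k},
$$
where $A_{ij}$ is precisely the two-point "thermal contraction" written in the statement (one $\varphi$ and one $\varphi^*$ inserted, all $G$'s and $G^*$'s present, normalized by $Z$). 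To prove this I would move each $\varphi_i$ and $\varphi_i^*$ toward the vacuum/trace, commuting it past the intervening $G$'s using the defining property that conjugation by a group element sends a linear fermion to another linear fermion (this is exactly the content of the Hirota relation, c.f.\ how $G\psi(z)G^{-1}$ and $G\psi^*(z)G^{-1}$ stay linear). The $R$-conjugation is harmless since $R$ itself acts linearly on fermions (it shifts indices, $R\psi_r R^{-1}=\psi_{r-1}$ etc.). After all insertions have been brought to "contractible" position, the trace $\frac{1}{Z}\mathrm{Tr}(\,\cdot\,R^N\wG R^{-N})$ acts on the free-fermion algebra as a Gaussian state (a quasi-free state), for which the multi-point functions factor into Pfaffians/determinants of two-point functions; the mixed $\varphi$–$\varphi^*$ correlator is the determinant $\det(A_{ij})$, exactly as in the standard finite-temperature Wick theorem (e.g.\ the version of Wick's Theorem in \cite[\S4.5]{djm} promoted from the vacuum state to a general quasi-free/thermal state). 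The ordering of rows and columns in $\det(A_{ij})$ is fixed by the positions of $\varphi_i^*$ (rows) and $\varphi_j$ (columns) in the original product, and the interlacing pattern $G_1^*\cdots G_{i-1}^*\varphi_i^* G_i^*\cdots$ against $\cdots G_{j+1}\varphi_j G_j\cdots$ in the formula for $A_{ij}$ is precisely what the commutation-through-$G$'s bookkeeping produces.

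The main obstacle is making the "thermal Wick theorem" step fully rigorous in this infinite, graded setting: one must justify that the graded trace $\sum_{N,\mu}\langle\mu|R^N(\cdot)R^{-N}|\mu\rangle$ defines a genuine quasi-free state on the CAR algebra generated by the relevant fermions (convergence of the sum, cyclicity, the two-point function determining all higher ones), and that the insertion of the \emph{non-group-like} linear fermions $\varphi_i,\varphi_i^*$ does not spoil the grading bookkeeping that makes the $R^N$-sum telescope. I would handle convergence by noting that in our applications $\wG$ is built from the operators $\Psi_\mu(q)$, $q^{-cK}$, $(-1)^{\gamma L}$, $Q^L$, whose graded traces are manifestly convergent formal $q$- and $Q$-series (geometric-type products), so the state is well-defined; and I would handle the grading by checking that each $A_{ij}$ is itself charge-balanced, so that only the diagonal $N'=N$ terms survive when the $R$-sum is reinserted. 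Once these points are in place, expanding the $k$-point function by Wick and re-assembling the contractions into $\det(A_{ij})$ is routine linear algebra and I would not spell it out in detail.
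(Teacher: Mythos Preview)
Your approach is conceptually sound but takes a genuinely different route from the paper, and the place where you defer to a ``thermal Wick theorem'' is exactly where the paper does the work. The paper gives a direct, self-contained induction on $k$: it introduces the auxiliary expression
\[
B=\sum_{r\in\bZ+\frac12}\mathrm{Tr}\bigl(\psi_r\,\varphi_1^* G_1^*\cdots G_k^*G_k\cdots G_1\bigr)\cdot\mathrm{Tr}\bigl(\psi_{-r}^*\, G_1^*\varphi_2^* G_2^*\cdots\varphi_k^* G_k^*\,\varphi_kG_k\cdots\varphi_1G_1\bigr),
\]
anticommutes $\psi_r$ past $\varphi_1^*$ to split $B$ into two pieces, and then uses the Hirota relation in the raw form $\sum_r G\psi_r\otimes G\psi_{-r}^*=\sum_r\psi_rG\otimes\psi_{-r}^*G$ to slide the probe $\sum_r\psi_r\otimes\psi_{-r}^*$ through each $G_i^*$ and $G_i$ in turn, anticommuting past each $\varphi_j$ to peel off exactly the terms of a Laplace expansion along the first row. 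When the probe reaches the far right, cyclicity of the trace carries it back to the front and reproduces $B$; cancelling $B$ from both sides leaves the row expansion of $\det(A_{ij})$, and induction finishes. Crucially, this argument never inverts any $G_i$ and never invokes the structure theory of quasi-free states.

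By contrast, your plan is to conjugate the $\varphi$'s through the $G$'s (which requires the $G_i$ to be invertible) and then invoke that $A\mapsto\mathrm{Tr}_\cF(A\,\wG)/\mathrm{Tr}_\cF(\wG)$ is a quasi-free state whose correlators factorize as determinants. That last assertion is correct, but it is precisely the content of the lemma: the paper's own Remark after the proof calls the result a \emph{generalization} of Gaudin's finite-temperature Wick, not a corollary of it. Your discussion of convergence and charge balance addresses well-definedness of the trace state, not the determinantal factorization itself. To complete your argument you would need either to cite a thermal Wick theorem of sufficient generality (arbitrary group-like $\wG$, full-$\cF$ trace, fermion insertions at interior positions) or to supply a proof---and the natural proof is essentially the paper's probe-and-cyclicity argument.
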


\begin{proof}
Consider the following function in $\{\varphi_i,\varphi_i^*\}_{1\leq i\leq k}$ and $\{G_i^*, G_i\}_{1\leq i\leq k}$:
\be
\label{eq-defB-phiG}
\begin{split}
B(\varphi_i,\varphi_i^*,G_i^*,G_i) = &
\sum_{r\in \bZ+\half} \sum_{N,M\in \bZ} \sum_{\mu,\nu\in \cP}
\langle \mu | R^N \psi_r \varphi_1^* G_1^*\cdots G_k^* G_k \cdots G_1 R^{-N}|\mu\rangle \\
& \quad \cdot
\langle \lambda | R^M \psi_{-r}^* G_1^* \varphi_2^* G_2^* \cdots \varphi_k^* G_k^*
\varphi_k G_k \cdots \varphi_1 G_1 R^{-M} |\lambda\rangle.
\end{split}
\ee
Suppose $\varphi_1^* = \sum_r c_r \psi_r^*$ for some $c_r \in \bC$,
then by the anti-commutation relation \eqref{eq-anticomm-psi}
one has $[\psi_r , \varphi_1^* ]_+ = c_{-r}$.
Now replace $\psi_r \varphi_1^*$ by $c_{-r}-\psi_r \varphi_1^*$ in \eqref{eq-defB-phiG},
and then we may express $B(\varphi_i,\varphi_i^*,G_i^*,G_i)$ as follows:
\be
\label{eq-BphiG-1}
\begin{split}
B(\varphi_i,\varphi_i^*,G_i^*,G_i) = \text{(I)} - \text{(II)},
\end{split}
\ee
where the two terms (I) and (II) are:
\be
\label{eq-BphiG-2}
\begin{split}
\text{(I)} = &
 \sum_{N,M\in \bZ} \sum_{\mu,\nu\in \cP}
\langle \mu | R^N \cdot G_1^*\cdots G_k^* G_k \cdots G_1 \cdot R^{-N}|\mu\rangle \\
& \quad \cdot
\langle \lambda | R^M \cdot \varphi_1^* G_1^* \varphi_2^* G_2^* \cdots \varphi_k^* G_k^*
\varphi_k G_k \cdots \varphi_1 G_1 \cdot R^{-M} |\lambda\rangle,
\end{split}
\ee
and
\be
\label{eq-BphiG-2}
\begin{split}
\text{(II)} = &
\sum_{r\in \bZ+\half} \sum_{N,M\in \bZ} \sum_{\mu,\nu\in \cP}
\langle \mu | R^N \cdot \varphi_1^* \psi_r \cdot G_1^*\cdots G_k^* G_k \cdots G_1 \cdot R^{-N}|\mu\rangle \\
& \quad \cdot
\langle \lambda | R^M \cdot \psi_{-r}^* G_1^* \varphi_2^* G_2^* \cdots \varphi_k^* G_k^*
\varphi_k G_k \cdots \varphi_1 G_1 \cdot R^{-M} |\lambda\rangle
\end{split}
\ee
respectively.
Here $G_1^*$ satisfies the bilinear relation \eqref{eq-Hirota-G}:
\begin{equation*}
\sum_{r\in \bZ+\half } \psi_r G_1^* \otimes \psi_{-r}^* G_1^*
= \sum_{r\in \bZ+\half }  G_1^*\psi_r \otimes  G_1^*\psi_{-r}^*,
\end{equation*}
and then:
\begin{equation*}
\begin{split}
\text{(II)} = &
\sum_{r\in \bZ+\half} \sum_{N,M\in \bZ} \sum_{\mu,\nu\in \cP}
\langle \mu | R^N \cdot \varphi_1^*  G_1^* \psi_r
\cdot G_2^* \cdots G_k^* G_k \cdots G_1 \cdot R^{-N}|\mu\rangle \\
& \quad \cdot
\langle \lambda | R^M \cdot G_1^* \psi_{-r}^* \cdot \varphi_2^* G_2^* \cdots \varphi_k^* G_k^*
\varphi_k G_k \cdots \varphi_1 G_1 \cdot R^{-M} |\lambda\rangle \\
=& -\sum_{r\in \bZ+\half} \sum_{N,M\in \bZ} \sum_{\mu,\nu\in \cP}
\langle \mu | R^N \cdot \varphi_1^*  G_1^* \psi_r
\cdot G_2^* \cdots G_k^* G_k \cdots G_1 \cdot R^{-N}|\mu\rangle \\
& \quad \cdot
\langle \lambda | R^M \cdot G_1^* \varphi_2^* \cdot \psi_{-r}^* G_2^* \cdots \varphi_k^* G_k^*
\varphi_k G_k \cdots \varphi_1 G_1 \cdot R^{-M} |\lambda\rangle,
\end{split}
\end{equation*}
where we have also used $[\varphi_2^* , \psi_{-r}^*]_+ = 0$.
Similarly
we can commute $\sum_r \psi_r\otimes \psi_{-r}^*$ with $G_i^*\otimes G_i^*$
for $i=2,3,\cdots,k$ successively,
and in this way we obtain:
\begin{equation*}
\begin{split}
\text{(II)}
=& (-1)^{k-1}
\sum_{r\in \bZ+\half} \sum_{N,M\in \bZ} \sum_{\mu,\nu\in \cP}
\langle \mu | R^N  \varphi_1^*  G_1^* G_2^* \cdots G_k^* \cdot\psi_r\cdot
 G_k \cdots G_1  R^{-N}|\mu\rangle \\
& \quad \cdot
\langle \lambda | R^M \cdot G_1^*  \varphi_2^* G_2^* \cdots \varphi_k^* G_k^* \cdot \psi_{-r}^*\cdot
\varphi_k G_k \cdots \varphi_1 G_1 \cdot R^{-M} |\lambda\rangle.
\end{split}
\end{equation*}
Now assume $\varphi_k = \sum_{r}\tilde{c}_r \psi_r$,
and then $[\psi_{-r}^*,\varphi_k]_+ = \tilde  c_r$.
Therefore
\begin{equation*}
\begin{split}
\text{(II)}
=\text{(III)} -\text{(IV)}
\end{split}
\end{equation*}
where
\begin{equation*}
\begin{split}
\text{(III)}=& (-1)^{k-1}
 \sum_{N,M\in \bZ} \sum_{\mu,\nu\in \cP}
\langle \mu | R^N  \varphi_1^*  G_1^* G_2^* \cdots G_k^* \cdot\varphi_k \cdot
 G_k \cdots G_1  R^{-N}|\mu\rangle \\
& \quad \cdot
\langle \lambda | R^M \cdot G_1^*  \varphi_2^* G_2^* \cdots \varphi_k^* G_k^* \cdot
 G_k \varphi_{k-1}G_{k-1} \cdots \varphi_1 G_1 \cdot R^{-M} |\lambda\rangle, \\
\text{(IV)} = &  (-1)^{k-1}
\sum_{r\in \bZ+\half} \sum_{N,M\in \bZ} \sum_{\mu,\nu\in \cP}
\langle \mu | R^N  \varphi_1^*  G_1^* G_2^* \cdots G_k^* \cdot\psi_r\cdot
 G_k \cdots G_1  R^{-N}|\mu\rangle \\
& \quad \cdot
\langle \lambda | R^M \cdot G_1^*  \varphi_2^* G_2^* \cdots \varphi_k^* G_k^* \cdot
\varphi_k  \psi_{-r}^* G_k \varphi_{k-1}G_{k-1} \cdots \varphi_1 G_1 \cdot R^{-M} |\lambda\rangle.
\end{split}
\end{equation*}
Now we can apply the bilinear relation \eqref{eq-Hirota-G}
to (IV) to commute $\psi_r, \psi_{-r}^*$ with $G_k$,
and then deal with the product $\psi_{-r}^* \varphi_{k-1}$ in the result using the same method.
Repeat this procedure,
and finally we obtain
\begin{equation*}
\begin{split}
\text{(II)} = \text{(V)} - \text{(VI)}
\end{split}
\end{equation*}
where
\begin{equation*}
\begin{split}
&\text{(V)} =
\sum_{i=1}^k (-1)^{i-1} \sum_{N,M\in \bZ} \sum_{\mu,\nu\in \cP}
\langle \mu | R^N  \varphi_1^* \cdot G_1^* G_2^* \cdots G_k^*
 G_k \cdots G_{i+1} \cdot \varphi_i \cdot G_i \cdots G_1  R^{-N}|\mu\rangle \\
& \qquad \cdot
\langle \lambda | R^M  G_1^* \cdot \varphi_2^* G_2^* \cdots \varphi_k^* G_k^* \cdot
 \varphi_k G_k  \cdots \varphi_{i+1} G_{i+1} \cdot G_i \cdot
 \varphi_{i-1} G_{i-1} \cdots \varphi_1 G_1 R^{-M} |\lambda\rangle,\\
& \text{(VI)} =
\sum_{r\in \bZ+\half} \sum_{N,M\in \bZ} \sum_{\mu,\nu\in \cP}
\langle \mu | R^N  \varphi_1^*  G_1^* G_2^* \cdots G_k^* \cdot
 G_k G_{k-1} \cdots G_1 \cdot \psi_r R^{-N}|\mu\rangle \\
& \qquad \cdot
\langle \lambda | R^M \cdot G_1^*  \varphi_2^* G_2^* \cdots \varphi_k^* G_k^* \cdot
\varphi_k  G_k \varphi_{k-1}G_{k-1} \cdots \varphi_1 G_1 \cdot \psi_{-r}^* R^{-M} |\lambda\rangle.
\end{split}
\end{equation*}
Notice that the linear functional $\sum_{N\in \bZ} \sum_{\mu\in \bP}\langle \mu|R^N (-) R^{-N} |\mu \rangle$
is actually taking the trace on $\cF$,
and thus
we can move the operators $\psi_r$ and $\psi_{-r}^*$ in (VI)
to the far left side and rewrite (VI) as:
\begin{equation*}
\begin{split}
\text{(VI)} =&  \sum_{r\in \bZ+\half} \sum_{N,M\in \bZ} \sum_{\mu,\nu\in \cP}
\langle \mu | R^N \cdot \psi_r\cdot \varphi_1^*  G_1^* G_2^* \cdots G_k^* \cdot
 G_k G_{k-1} \cdots G_1 \cdot  R^{-N}|\mu\rangle \\
&  \cdot
\langle \lambda | R^M \cdot \psi_{-r}^* \cdot G_1^*  \varphi_2^* G_2^* \cdots \varphi_k^* G_k^* \cdot
\varphi_k  G_k \varphi_{k-1}G_{k-1} \cdots \varphi_1 G_1 \cdot  R^{-M} |\lambda\rangle,
\end{split}
\end{equation*}
which is exactly $B(\varphi_i,\varphi_i^*,G_i^*,G_i)$, see \eqref{eq-defB-phiG}.
Recall $B(\varphi_i,\varphi_i^*,G_i^*,G_i) = \text{(I)} - \text{(II)}$,
and thus we conclude that $\text{(I)} = \text{(V)}$, i.e.,
\begin{equation*}
\begin{split}
& \sum_{N,M\in \bZ} \sum_{\mu,\nu\in \cP}
\langle \mu | R^N \cdot G_1^*\cdots G_k^* \cdot G_k \cdots G_1 \cdot R^{-N}|\mu\rangle \\
& \quad \cdot
\langle \lambda | R^M \cdot \varphi_1^* G_1^* \varphi_2^* G_2^* \cdots \varphi_k^* G_k^* \cdot
\varphi_k G_k \cdots \varphi_1 G_1 \cdot R^{-M} |\lambda\rangle \\
=&
\sum_{i=1}^k (-1)^{i-1} \sum_{N,M\in \bZ} \sum_{\mu,\nu\in \cP}
\langle \mu | R^N  \varphi_1^* \cdot G_1^* G_2^* \cdots G_k^*
 G_k \cdots G_{i+1} \cdot \varphi_i \cdot G_i \cdots G_1  R^{-N}|\mu\rangle \\
&  \cdot
\langle \lambda | R^M  G_1^* \cdot \varphi_2^* G_2^* \cdots \varphi_k^* G_k^* \cdot
\varphi_k G_k  \cdots \varphi_{i+1}G_{i+1} \cdot G_i \cdot
 \varphi_{i-1} G_{i-1} \cdots \varphi_1 G_1 R^{-M} |\lambda\rangle.
\end{split}
\end{equation*}
This matches with the Laplace expansion theorem for determinants,
and thus the determinantal formula \eqref{eq-lem-det} holds by induction on $k$.
\end{proof}

\begin{Remark}
	The determinantal formula in Lemma \ref{lem-det} is a generalization
of  the so-called Wick's lemma at finite temperature in literature
(see \cite{gau} and   \cite[Lemma B.1]{bb}).
\end{Remark}

\subsection{KP integrability and affine coordinates of $Z^{\text{total}}$}
\label{sec-KPaffine}

Now we are able to prove the KP integrability of
the total partition function $Z^{\text{total}}$.
Recall that $Z^{\text{total}}$ can be represented as:
\begin{equation*}
\begin{split}
&Z^{\text{total}} (\mathbf{t}; Q_i,\gamma_i; \Xi)
= \sum_{N\in \bZ} \sum_{\lambda,\mu \in \cP}  s_\lambda(\mathbf{t})\cdot
\Big\langle \mu \Big|
R^N \Xi^C
\Big( \Psi_\lambda (q) q^{-\half (\gamma_1+2)K} (-1)^{\gamma_1  L} Q_1^L \Big)\cdot  \\
& \quad
\Big( \Psi_{(\emptyset)}(q) q^{-\half (\gamma_2+2) K}
(-1)^{\gamma_2 L} Q_2^L \Big) \cdots
\Big( \Psi_{(\emptyset)}(q) q^{-\half (\gamma_M+2) K}
(-1)^{\gamma_M L} Q_M^L \Big)
R^{-N} \Big| \mu \Big\rangle,
\end{split}
\end{equation*}
where $\Psi_\mu(z)$ is a product of operators of the form $\Gamma_\pm(z)$ or $\tGa_\pm (z)$,
see  \eqref{eq-def-Psi}.
If we plug \eqref{eq-tGa-reform} into \eqref{eq-def-Psi},
we may see that $\Psi_{\mu}(q)$ is a product of operators of the form $\psi(z),\psi^*(z)$
and $\Gamma_\pm (z)$
(multiplied by some functions $f(z)$).
From the commutation relations \eqref{eq-comm-alphapsi}
and the Baker--Campbell--Hausdorff formula \eqref{eq-BCH}
we known that the conjugations
\begin{equation*}
\Gamma_\pm(z)^{-1} \psi_r \Gamma_\pm(z),
\qquad\qquad
\Gamma_\pm(z)^{-1} \psi_r^* \Gamma_\pm(z)
\end{equation*}
of the free fermions $\psi_r,\psi_r^*$
are linear summations of $\{\psi_s\}_{s\in \bZ+\half}$ and $\{\psi_s^*\}_{s\in \bZ+\half}$ respectively.
Moreover,
it is well-known that the operators $C,L,K$ defined by \eqref{eq-def-C&Jopr}
are elements of $\widehat{gl(\infty)}$
(see e.g. \cite{Ok1})
and thus the operators $\Xi^C$, $q^K$, $(-1)^L$, $Q_i^L$ in the above fermionic representation of $Z^{\text{total}}$
all satisfy the bilinear relation \eqref{eq-Hirota-G}.
And here $C$ is the charge operator which commutes with all $\Gamma_\pm(z)$.
Therefore we can apply Lemma \ref{lem-det} to the above fermionic representation of $Z^{\text{total}}$.
The result is:

\begin{Theorem} \label{thm:Determinantal}
The total partition function $Z^{\text{total}} (\mathbf{t}; Q_i,\gamma_i; \Xi)$
is given by the following Schur polynomial expansion:
\be
Z^{\text{total}} (\mathbf{t}; Q_i,\gamma_i; \Xi)
= c_0 \cdot \sum_{\lambda \in \cP}
\det (A_{m_i,n_j}^{\text{total}})_{1\leq i,j \leq k} \cdot s_\lambda(\mathbf{t}),
\ee
where $\lambda = (m_1,\cdots,m_k| n_1,\cdots,n_k)$ is the Frobenius notation for $\lambda$,
and
\be
\label{eq-def-Amn}
\begin{split}
& A_{m,n}^{\text{total}} = \frac{1}{c_0} \cdot \sum_{N\in \bZ} \sum_{\mu\in \cP}
\Big\langle \mu \Big|
R^N \Xi^C
\Big( \Psi_{(m|n)} (q) q^{-\half (\gamma_1+2)K} (-1)^{\gamma_1  L} Q_1^L \Big)\cdot  \\
& \quad
\Big( \Psi_{(\emptyset)}(q) q^{-\half (\gamma_2+2) K}
(-1)^{\gamma_2 L} Q_2^L \Big) \cdots
\Big( \Psi_{(\emptyset)}(q) q^{-\half (\gamma_M+2) K}
(-1)^{\gamma_M L} Q_M^L \Big)
R^{-N} \Big| \mu \Big\rangle,
\end{split}
\ee
and $c_0$ is the normalization constant:
\be
\begin{split}
 c_0 =  \sum_{N\in \bZ} \sum_{\mu\in \cP} &
\Big\langle \mu \Big|
R^N \Xi^C
\Big( \Psi_{(\emptyset)} (q) q^{-\half (\gamma_1+2)K} (-1)^{\gamma_1  L} Q_1^L \Big)  \\
& \cdots
\Big( \Psi_{(\emptyset)}(q) q^{-\half (\gamma_M+2) K}
(-1)^{\gamma_M L} Q_M^L \Big)
R^{-N} \Big| \mu \Big\rangle.
\end{split}
\ee
\end{Theorem}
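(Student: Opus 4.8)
The plan is to compare coefficients of the Schur polynomials $s_\lambda(\mathbf{t})$ on both sides and reduce the identity to a single application of Lemma~\ref{lem-det}. Write
\begin{equation*}
E:=q^{-\half(\gamma_1+2)K}(-1)^{\gamma_1 L}Q_1^L\prod_{i=2}^M\Big(\Psi_{(\emptyset)}(q)\,q^{-\half(\gamma_i+2)K}(-1)^{\gamma_i L}Q_i^L\Big),
\end{equation*}
so that, by \eqref{eq-ferm-Ztotal}, the coefficient of $s_\lambda(\mathbf{t})$ in $Z^{\text{total}}(\mathbf{t};Q_i,\gamma_i;\Xi)$ is $\sum_{N\in\bZ}\sum_{\mu\in\cP}\langle\mu|R^N\Xi^C\Psi_\lambda(q)E\,R^{-N}|\mu\rangle$; and since $\{s_\lambda(\mathbf{t})\}_{\lambda\in\cP}$ is linearly independent, it suffices to prove that this coefficient equals $c_0\det\big(A_{m_i,n_j}^{\text{total}}\big)_{1\le i,j\le k}$ for every $\lambda=(m_1,\dots,m_k|n_1,\dots,n_k)$ in Frobenius notation. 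Note that $c_0$ is exactly this coefficient for $\lambda=(\emptyset)$. Throughout I will use that $\sum_N\sum_\mu\langle\mu|R^N(-)R^{-N}|\mu\rangle$ is the trace on $\cF$, hence cyclic (here $R$ is unitary, $R^\dagger=R^{-1}$), as already noted in the proof of Lemma~\ref{lem-det}.

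The first step is to make the fermionic content of $\Psi_\lambda(q)$ explicit. Substituting \eqref{eq-tGa-reform} into \eqref{eq-def-Psi} writes $\Psi_\lambda(q)$ as a scalar $c_\lambda=\prod_{i=1}^k f_1(q^{m_i+\half})\cdot\prod_{j=1}^k q^{n_j+\half}f_2(q^{n_j+\half})$ times an ordered product in which the $k$ factors $\psi^*(q^{-m_1-\half}),\dots,\psi^*(q^{-m_k-\half})$ (arising, in this order, from the $\tGa_+$'s at the indices $m_1>\cdots>m_k$) stand to the left of the $k$ factors $\psi(q^{n_k+\half}),\dots,\psi(q^{n_1+\half})$ (arising, in this order, from the $\tGa_-$'s at the indices $n_k<\cdots<n_1$), all of these interspersed among purely bosonic blocks built out of the operators $\Gamma_\pm(z)$. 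The combinatorial point, which I would verify by telescoping the ranges of the $\Gamma_\pm$-indices in \eqref{eq-def-Psi}, is that deleting all fermion factors collapses this product back to $\Psi_{(\emptyset)}(q)$, while keeping only the factor $\psi^*(q^{-m_i-\half})$ together with only the factor $\psi(q^{n_j+\half})$ collapses it to $f_1(q^{m_i+\half})^{-1}\big(q^{n_j+\half}f_2(q^{n_j+\half})\big)^{-1}\Psi_{(m_i|n_j)}(q)$.

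The second step puts the coefficient into the form of Lemma~\ref{lem-det}. The operator inside the coefficient is $\Xi^C\Psi_\lambda(q)E$; here every factor $\Gamma_\pm(z),q^K,(-1)^L,Q_i^L,\Xi^C$ lies in $\widehat{GL}(\infty)$, so any product of such factors satisfies the Hirota bilinear relation \eqref{eq-Hirota-G}, whereas $\psi^*(z)$ and $\psi(z)$ are (infinite) linear combinations of $\{\psi_r^*\}$ and $\{\psi_r\}$ respectively. Using cyclicity I would rotate so that $\psi^*(q^{-m_1-\half})$ sits at the far left, and absorb the intervening bosonic blocks, the trailing $E$, and the leading $\Xi^C$ (brought round to the end) into operators $G_1^*,\dots,G_k^*,G_k,\dots,G_1$ satisfying \eqref{eq-Hirota-G}, setting $\varphi_i^*:=\psi^*(q^{-m_i-\half})$ and $\varphi_j:=\psi(q^{n_j+\half})$. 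The coefficient is then $c_\lambda$ times the numerator appearing in Lemma~\ref{lem-det}, whose denominator is exactly $c_0=\sum_N\sum_\mu\langle\mu|R^N\Xi^C\Psi_{(\emptyset)}(q)E\,R^{-N}|\mu\rangle$. Granting $c_0\ne0$, Lemma~\ref{lem-det} gives the coefficient as $c_\lambda c_0\det(A_{ij})$ with $A_{ij}=f_1(q^{m_i+\half})^{-1}\big(q^{n_j+\half}f_2(q^{n_j+\half})\big)^{-1}A_{m_i,n_j}^{\text{total}}$, by the collapsing identity of the first step. Finally, extracting $f_1(q^{m_i+\half})^{-1}$ from the $i$-th row and $\big(q^{n_j+\half}f_2(q^{n_j+\half})\big)^{-1}$ from the $j$-th column by multilinearity of the determinant contributes precisely the factor $c_\lambda^{-1}$; this cancels the prefactor and leaves $c_0\det\big(A_{m_i,n_j}^{\text{total}}\big)$, which is the claim.

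The step I expect to require the most care is not conceptual but organizational: checking, from the precise left-to-right ordering in \eqref{eq-def-Psi} and the factorizations \eqref{eq-tGa-reform}, that the fermion insertions occur in exactly the block-separated pattern required by Lemma~\ref{lem-det} --- all $k$ factors $\psi^*$ to the left of all $k$ factors $\psi$, each adjacent only to purely bosonic blocks, each such block a product of elements of $\widehat{GL}(\infty)$ --- and that the telescoping of the $\Gamma_-$- and $\Gamma_+$-index ranges reconstructs $\Psi_{(\emptyset)}(q)$ and $\Psi_{(m_i|n_j)}(q)$ on the nose, with the scalars $f_1,f_2$ tracked correctly. A secondary point worth recording is that interchanging the sum over $N$ with the cyclic rotations and with the fermion commutations carried out inside Lemma~\ref{lem-det} is legitimate: all the traces in play are formal power series in $q$ (Laurent in $\Xi$) with well-defined coefficients, so the manipulations hold term by term.
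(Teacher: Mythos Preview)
Your proposal is correct and follows essentially the same route as the paper: substitute \eqref{eq-tGa-reform} into \eqref{eq-def-Psi} to expose the fermion insertions $\psi^*(q^{-m_i-\half})$ and $\psi(q^{n_j+\half})$ separated by blocks of $\Gamma_\pm$, observe that all non-fermionic factors lie in $\widehat{GL}(\infty)$ and hence satisfy \eqref{eq-Hirota-G}, and then apply Lemma~\ref{lem-det}. The paper's own argument is only a sketch that stops at ``therefore we can apply Lemma~\ref{lem-det}''; your write-up in fact supplies more of the bookkeeping than the paper does --- the telescoping identification of the denominator with $\Psi_{(\emptyset)}$ and of the matrix entries with $\Psi_{(m_i|n_j)}$, and the cancellation of the scalar prefactor $c_\lambda=\prod_i f_1(q^{m_i+\half})\prod_j q^{n_j+\half}f_2(q^{n_j+\half})$ via row/column multilinearity --- and these are exactly the points the paper leaves implicit.
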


\begin{Remark}
Here we regard the operators $\Psi_{(\emptyset)} (q) q^{-\half (\gamma_i+2)K} (-1)^{\gamma_i  L} Q_i^L$
as some formal series in $q$ and $Q_i$,
and in this sense one has $c_0 \not= 0$.
\end{Remark}

Now we combine this Schur polynomial expansion for $Z^{\text{total}}(\mathbf{t}; Q_i,\gamma_i ; \Xi)$
with the theory of KP hierarchy (see \S \ref{pre-KPtau}),
then we easily see that:
\begin{Corollary}
$Z^{\text{total}} (\mathbf{t}; Q_i,\gamma_i; \Xi)$ is a tau-function of the KP hierarchy
with KP-time variables $\mathbf{t} = (t_1,t_2,\cdots)$.
Moreover,
the affine coordinates of this tau-function is:
\be
a_{n,m}^{\text{total}} = (-1)^n \cdot A_{m,n}^{\text{total}},
\qquad n,m\geq 0,
\ee
where $\{A_{m,n}^{\text{total}}\}_{m,n\geq 0}$ are given by \eqref{eq-def-Amn}.
\end{Corollary}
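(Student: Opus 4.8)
The plan is to read off the corollary directly from the determinantal Schur-polynomial expansion in Theorem \ref{thm:Determinantal}, using Sato's dictionary between formal power series tau-functions and points of the big cell of the Sato Grassmannian recalled in \S\ref{pre-KPtau}. Recall from there that a formal power series $\tau(\mathbf{t})$ with $\tau(\bm 0)\neq 0$ is a tau-function of the KP hierarchy on the big cell precisely when it can be written as $\tau(\mathbf{t}) = c\sum_{\mu\in\cP}(-1)^{n_1+\cdots+n_k}\det(a_{n_i,m_j})_{1\le i,j\le k}\,s_\mu(\mathbf{t})$, where $\mu=(m_1,\dots,m_k\,|\,n_1,\dots,n_k)$ is the Frobenius notation, $c=\tau(\bm 0)$, and $\{a_{n,m}\}_{n,m\ge 0}$ are the affine coordinates of the corresponding point. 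Here the ``only if'' direction is Sato's theorem, and the ``if'' direction holds because an arbitrary doubly-indexed array $\{a_{n,m}\}$ always assembles into a point of the big cell --- equivalently, the minors $\det(a_{n_i,m_j})$ automatically satisfy the Pl\"ucker relations, since they are honest minors of a semi-infinite matrix; equivalently again, $\sum_{n,m\ge 0}a_{n,m}\psi_{-m-\half}\psi^*_{-n-\half}\in\widehat{gl(\infty)}$, so its exponential applied to the vacuum is a group orbit point and hence a tau-function. So it suffices to put the expansion of Theorem \ref{thm:Determinantal} into this normal form.

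First I would record that by Theorem \ref{thm:Determinantal} we have $Z^{\text{total}}(\mathbf{t};Q_i,\gamma_i;\Xi) = c_0\sum_{\lambda\in\cP}\det(A^{\text{total}}_{m_i,n_j})_{1\le i,j\le k}\,s_\lambda(\mathbf{t})$ with $\lambda=(m_1,\dots,m_k\,|\,n_1,\dots,n_k)$, and that $c_0\neq 0$ in the formal-series sense (the Remark after that theorem); taking $\lambda=(\emptyset)$, for which the determinant of the empty matrix is $1$, shows $Z^{\text{total}}(\bm 0;Q_i,\gamma_i;\Xi)=c_0\neq 0$, so we are genuinely on the big cell. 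It then remains to match signs. Set $a^{\text{total}}_{n,m}:=(-1)^n A^{\text{total}}_{m,n}$ and substitute into the normal form: the coefficient of $s_\mu(\mathbf{t})$ becomes $c_0(-1)^{n_1+\cdots+n_k}\det\big((-1)^{n_i}A^{\text{total}}_{m_j,n_i}\big)_{1\le i,j\le k}$. Pulling the factor $(-1)^{n_i}$ out of the $i$-th row produces $(-1)^{n_1+\cdots+n_k}$, cancelling the prefactor, and leaves $c_0\det\big(A^{\text{total}}_{m_j,n_i}\big)_{1\le i,j\le k}=c_0\det\big(A^{\text{total}}_{m_i,n_j}\big)_{1\le i,j\le k}$ by invariance of the determinant under transposition. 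This is exactly the coefficient of $s_\lambda(\mathbf{t})$ in Theorem \ref{thm:Determinantal}. Hence $Z^{\text{total}}$ agrees with the KP tau-function whose constant term is $c_0$ and whose affine coordinates are $a^{\text{total}}_{n,m}=(-1)^n A^{\text{total}}_{m,n}$, which gives both assertions of the corollary.

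As for difficulty: there is no real obstacle left at this stage --- essentially all of the work has already been carried out, in Lemma \ref{lem-det} (the finite-temperature Wick-type determinantal identity) and in Theorem \ref{thm:Determinantal}, which converts the fermionic-trace expression \eqref{eq-ferm-Ztotal} for $Z^{\text{total}}$ into a determinantal Schur expansion. Given those, the corollary is a formal consequence, and the only points needing care are the sign bookkeeping in the Frobenius-coordinate/affine-coordinate dictionary (the row-extraction computation above) and the verification that $c_0\neq 0$ so that one lands on the big cell rather than on a boundary stratum. If one preferred a self-contained argument not invoking the ``if'' direction of Sato's theorem, the alternative would be to check the Hirota bilinear relation \eqref{eq-Hirota-G} directly for the element $\exp\!\big(\sum_{n,m\ge 0}a^{\text{total}}_{n,m}\psi_{-m-\half}\psi^*_{-n-\half}\big)|0\rangle$, using that its exponent lies in $\widehat{gl(\infty)}$; but routing through the standard correspondence set up in \S\ref{pre-KPtau} is cleaner and is precisely what that subsection is designed for.
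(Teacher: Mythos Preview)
Your proposal is correct and follows exactly the approach the paper takes: the paper simply states that the corollary follows by combining the determinantal Schur expansion of Theorem \ref{thm:Determinantal} with the Sato-theory dictionary recalled in \S\ref{pre-KPtau}. Your write-up in fact supplies more detail than the paper does, in particular the explicit sign/transpose bookkeeping matching $\det(A^{\text{total}}_{m_i,n_j})$ to the normal form $(-1)^{n_1+\cdots+n_k}\det(a_{n_i,m_j})$ and the observation that $c_0\neq 0$ ensures one is on the big cell.
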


In \cite{zhou1},
the third-named author has derived a formula to compute the connected $n$-point functions
of a tau-function of the KP hierarchy using its affine coordinates on the Sato Grassmannian.
Assume $\tau(\mathbf{t})$ is a tau-function of the KP hierarchy satisfying $\tau(\bm 0) \not=0$,
and let $\{a_{n,m}\}_{n,m\geq 0}$ be its affine coordinates.
Denote by
\begin{equation*}
A(\xi,\eta)= \sum_{m,n\geq 0}
a_{n,m} \xi^{-n-1} \eta^{-m-1}
\end{equation*}
the generating series of the affine coordinates,
then for $n\geq 1$ one has (see \cite[\S 5]{zhou1}):
\be
\begin{split}
&\sum_{j_1,\cdots,j_n\geq 1}
\frac{\pd^n \log\tau (\mathbf{t})}{\pd t_{j_1}\cdots \pd t_{j_n}}\bigg|_{\mathbf{t}=0}
\cdot z_1^{-j_1-1} \cdots z_n^{-j_n-1}\\
=& (-1)^{n-1} \sum_{n\text{-cycles }\sigma}
\prod_{i=1}^{n} \widehat A (z_{\sigma^{i}(1)}, z_{\sigma^{i+1}(1)})
-\frac{\delta_{n,2}}{(z_1-z_2)^2},
\end{split}
\ee
where
\begin{equation*}
\widehat A (z_i,z_j)=\begin{cases}
i_{z_i,z_j}\frac{1}{z_i-z_j} + A(z_i,z_j), & i<j;\\
A(z_i,z_i), & i=j;\\
i_{z_j,z_i}\frac{1}{z_i-z_j} + A(z_i,z_j), & i>j,
\end{cases}
\end{equation*}
and we use the notation
$i_{\xi,\eta}\frac{1}{\xi+\eta}= \sum_{k\geq 0} (-1)^k \xi^{-1-k}\eta^k$.

If we want to compute the connected $n$-point functions for $Z^{\text{total}} (\mathbf{t}; Q_i,\gamma_i; \Xi)$
(which gives the open Gromov--Witten invariants of local toric Calabi--Yau threefolds as examples)
using Zhou's formula,
we will need some simpler explicit formulas for the affine coordinates $\{a_{n,m}^{\text{total}}\}$
or their generating series.
However,
this seems to be hard for a general local toric CY threefold
and we do not have a general closed formula yet.
In the rest of this paper,
we will consider some special examples
and carry out some explicit calculations.

\section{Example: the $(-2,\dots, -2)$-Model }
\label{sec-eg}

Notice that for arbitrarily chosen $M\in \bN^+$ and parameters $\gamma_1,\cdots,\gamma_M \in \bC$,
we have shown that the total partition function  is a
(formal power series) tau-function of the KP hierarchy.
Though here for general $(\gamma_1,\cdots,\gamma_M )$ the tau-function  $Z^{\text{total}} (\mathbf{t}; Q_i,\gamma_i; \Xi)$
may not come from the open GW theory of a concrete toric Calabi--Yau threefold,
the tau-function $Z^{\text{total}} (\mathbf{t}; Q_i,\gamma_i; \Xi)$ is  always well-defined and we can study its properties.
In this section,
we will consider the special case $\gamma_1 = \cdots =\gamma_M = -2$.
Denote the tau-function in this case by $Z^{\text{total}} (\mathbf{t}; Q_i,-2; \Xi)$.
We will give explicit formulas for its constant term
and affine coordinates,
which, together with the KP integrability,
completely determine this tau-function.
For convenience,
we will always assume that $|Q_i|<1, \forall i$ from now on.

The exceptional divisor of the minimal resolution of  an isolated elliptic Kulikov surface singularity of type $I_{M-1}$
is a cycle of $(-2)$-rational curves \cite[p. 294]{nem}.
The geometry behind the $(-2, \dots, -2)$-model might be related to such a resolution multiplied by $\bC$.

\subsection{Constant term of $Z^{\text{total}} (\textbf{t}; Q_i,-2; \Xi)$}

In this subsection we compute the constant term
of the tau-function $Z^{\text{total}} (\textbf{t}; Q_i,-2; \Xi)$.
Recall that by \eqref{eq-def-ZlambdaN} we have:
\be
\label{eq-gamma-2-ferm}
\begin{split}
Z^{\text{total}} (\mathbf{t}; Q_i,-2; \Xi)
=& \sum_{N\in \bZ} \sum_{\lambda,\mu \in \cP}  s_\lambda(\mathbf{t})\cdot
\Big\langle \mu \Big|
R^N \Xi^C
\Big( \Psi_\lambda (q)  (-1)^{-2 L} Q_1^L \Big)  \\
& \cdot
\Big( \Psi_{(\emptyset)}(q)
(-1)^{-2 L} Q_2^L \Big) \cdots
\Big( \Psi_{(\emptyset)}(q)
(-1)^{-2 L} Q_M^L \Big)
R^{-N} \Big| \mu \Big\rangle.
\end{split}
\ee
Thus the constant term of this tau-function is:
\be
\label{eqn:Z0 as vev}
\begin{split}
&Z^{\text{total}} (\bm 0; Q_i,-2; \Xi) \\
=& \sum_{N\in \bZ} \sum_{\mu \in \cP}
\Big\langle \mu \Big|
R^N \Xi^C
\Big( \Psi_{(\emptyset)} (q)  (-1)^{-2 L} Q_1^L \Big) \cdots
\Big( \Psi_{(\emptyset)}(q)
(-1)^{-2 L} Q_M^L \Big)
R^{-N} \Big| \mu \Big\rangle .
\end{split}
\ee
Now let $\Theta_3 (t;q)$ be the theta-function:
\be
\Theta_3 (t;q) = \sum_{n\in \bZ} q^{n^2}  t^n.
\ee
and denote by $\cM(z;q)$ the following MacMahon function:
\be
\cM(z;q) = \prod_{n=1}^\infty (1-z q^{-n})^n.
\ee
Then we have the following:
\begin{Theorem}
\label{prop-Z-2-const}
The constant term $Z^{\text{total}} (\bm 0; Q_i,-2; \Xi)$ of the above tau-function is:
\be
\label{eq-Z-2-const}
\begin{split}
Z^{\text{total}} (\bm 0; Q_i,-2; \Xi)= &
\Theta_3 \big( (-1)^M \cdot \Xi^{-1}; Q^{1/2} \big) \cdot
\prod_{1\leq k <l \leq M}  \frac{1}{\cM (\prod_{i=k}^{l-1}Q_i ; q)} \\
& \cdot
\prod_{i=1}^\infty \frac{1}{1-Q^i} \cdot
\prod_{k,l= 1}^M \prod_{j=0}^\infty
\frac{1}{\cM(Q^j \cdot \prod_{a=1}^{k-1} Q_a \cdot \prod_{b=l}^M Q_i ;q)},
\end{split}
\ee
where $Q = Q_1Q_2\cdots Q_M$.
\end{Theorem}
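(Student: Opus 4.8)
The plan is to start from the fermionic expression \eqref{eqn:Z0 as vev} and peel off a scalar factor depending only on $N$, leaving a single vacuum‑sector trace of vertex operators. First I would conjugate each operator inside $\langle\mu|R^N(\cdots)R^{-N}|\mu\rangle$ by $R^N$ using Lemma \ref{lem-comm-RN}, Lemma \ref{lem-comm-RN-2} and the commutativity $[C,L]=[C,K]=[L,K]=0$. Since $\gamma_i=-2$, the cut‑and‑join factors $q^{-\frac12(\gamma_i+2)K}$ are trivial; $R^N\Psi_{(\emptyset)}(q)R^{-N}=\Psi_{(\emptyset)}(q)$; $R^N(-1)^{\gamma_i L}R^{-N}=(-1)^{-2L}(-1)^{2NC}(-1)^{-N^2}$ acts as the scalar $(-1)^{-N^2}=(-1)^{N}$ between the charge‑zero vectors; $R^NQ_i^LR^{-N}=Q_i^{N^2/2}Q_i^{-NC}Q_i^L$; and $R^N\Xi^CR^{-N}=\Xi^{C-N}$. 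Because everything sandwiched between $\langle\mu|$ and $|\mu\rangle$ preserves charge zero, all the factors $Q_i^{-NC}$ and the $\Xi^{C}$–part act as the identity, and collecting scalars over the $M$ vertices produces $\Xi^{-N}(-1)^{MN}Q^{N^2/2}$ times the $N$‑independent quantity $T:=\mathrm{Tr}_{\cF^{(0)}}\bigl(\prod_{i=1}^M\Psi_{(\emptyset)}(q)Q_i^L\bigr)$, where $Q=Q_1\cdots Q_M$. Summing over $N$ and recognizing $\sum_{N\in\bZ}\bigl((-1)^M\Xi^{-1}\bigr)^NQ^{N^2/2}=\Theta_3\bigl((-1)^M\Xi^{-1};Q^{1/2}\bigr)$ gives $Z^{\text{total}}(\bm 0;Q_i,-2;\Xi)=\Theta_3\bigl((-1)^M\Xi^{-1};Q^{1/2}\bigr)\cdot T$. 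The sign $(-1)^M$ — the ``modification'' advertised in the introduction — comes precisely from $(-1)^{\gamma_iN^2/2}$ produced by the $R^N$‑conjugation, not from the (on $\cF^{(0)}$ trivial) operator $(-1)^{-2L}$.

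It then remains to evaluate $T$. Recall $\Psi_{(\emptyset)}(q)=\prod_{j\geq0}^{\longleftarrow}\Gamma_-(q^{-j-\frac12})\cdot\prod_{i\geq0}^{\longrightarrow}\Gamma_+(q^{-i-\frac12})$, a block of commuting $\Gamma_-$'s followed by a block of commuting $\Gamma_+$'s. Using \eqref{eq-comm-LPsiempty} I would push all the $Q_i^L$ to the right, rewriting
\[
\prod_{i=1}^M\Psi_{(\emptyset)}(q)Q_i^L=\Bigl(\prod_{k=1}^M P_k^{\,L}\Psi_{(\emptyset)}(q)P_k^{-L}\Bigr)Q^L,\qquad P_k=Q_1\cdots Q_{k-1},
\]
where $P_k^{\,L}\Psi_{(\emptyset)}(q)P_k^{-L}$ is the same block with $\Gamma_-$‑arguments scaled by $P_k$ and $\Gamma_+$‑arguments scaled by $P_k^{-1}$. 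Next I would commute every $\Gamma_+$‑block to the far right past the $\Gamma_-$‑blocks of later vertices via \eqref{eq-commGamma}; the $\Gamma_+$‑block of vertex $k$ crossing the $\Gamma_-$‑block of vertex $l>k$ contributes $\prod_{i,j\geq0}\bigl(1-(P_l/P_k)q^{-i-j-1}\bigr)^{-1}$, and since $\#\{(i,j)\in\bZ_{\geq0}^2:i+j+1=n\}=n$ and $P_l/P_k=\prod_{a=k}^{l-1}Q_a$, this equals $\cM\bigl(\prod_{a=k}^{l-1}Q_a;q\bigr)^{-1}$; the product over $1\le k<l\le M$ gives the second factor in \eqref{eq-Z-2-const}.

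What survives is $\mathrm{Tr}_{\cF^{(0)}}(\cA\cB Q^L)$, with $\cA$ the product of all $\Gamma_-(P_kq^{-j-\frac12})$ and $\cB$ the product of all $\Gamma_+(P_k^{-1}q^{-i-\frac12})$. For this I would use the standard vacuum‑sector trace identity: by cyclicity $\mathrm{Tr}_{\cF^{(0)}}(\cA\cB Q^L)=\mathrm{Tr}_{\cF^{(0)}}\bigl(\cB\,(Q^L\cA Q^{-L})\,Q^L\bigr)$, then \eqref{eq-comm-LGamma} and \eqref{eq-commGamma} give the recursion $\mathrm{Tr}_{\cF^{(0)}}(\cA\cB Q^L)=\bigl(\prod_{\alpha,\beta}(1-Qa_\alpha b_\beta)^{-1}\bigr)\mathrm{Tr}_{\cF^{(0)}}(\cA^{(1)}\cB Q^L)$ with $\cA^{(1)}$ the $Q$‑rescaled $\cA$; iterating, passing to the $Q$‑adic limit $\cA^{(\infty)}=\Id$, and using $\mathrm{Tr}_{\cF^{(0)}}(\cB Q^L)=\sum_{\mu\in\cP}Q^{|\mu|}=\prod_{n\geq1}(1-Q^n)^{-1}$ (the diagonal matrix element $\langle\mu|\cB|\mu\rangle=1$ since $\cB$ is built from the degree‑lowering $\alpha_{n}$, $n>0$), I get $\mathrm{Tr}_{\cF^{(0)}}(\cA\cB Q^L)=\prod_{n\geq1}(1-Q^n)^{-1}\cdot\prod_{m\geq1}\prod_{\alpha,\beta}(1-Q^ma_\alpha b_\beta)^{-1}$. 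Substituting $a_\alpha b_\beta=(P_k/P_l)q^{-i-j-1}$, the same reindexing yields $\prod_{m\geq1}\prod_{k,l=1}^M\cM(Q^mP_k/P_l;q)^{-1}$, and writing $Q^mP_k/P_l=Q^{m-1}\prod_{a=1}^{k-1}Q_a\prod_{b=l}^M Q_b$ (with $j=m-1\ge0$) gives exactly the last product in \eqref{eq-Z-2-const}; assembling the three factors proves the theorem. I expect the main obstacles to be (i) bookkeeping the half‑integer exponents carefully so the sign $(-1)^M$ emerges correctly, and (ii) justifying the infinite rearrangements and the limit $\cA^{(\infty)}=\Id$ in the trace recursion, which I would do by working in the ring of formal power series in the $Q_i$ (legitimate under the standing assumption $|Q_i|<1$), where all the displayed infinite products converge coefficientwise.
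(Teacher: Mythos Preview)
Your proof is correct and follows essentially the same route as the paper: conjugate by $R^N$ via Lemmas \ref{lem-comm-RN} and \ref{lem-comm-RN-2} to extract the scalar $\Xi^{-N}(-1)^{MN}Q^{N^2/2}$, sum over $N$ to produce the theta factor, then rewrite the remaining charge-zero trace using \eqref{eq-comm-LPsiempty} and normal-order the $\Gamma_\pm$'s via \eqref{eq-commGamma} to peel off the first MacMahon product. The only substantive difference is that the paper moves the accumulated $Q^L$ to the left and then quotes \cite[Lemma 3.2]{ya} for the resulting trace $\mathrm{Tr}_{\cF^{(0)}}\bigl(\cA\,Q^L\,\cB\bigr)$, whereas you keep $Q^L$ on the right and supply a self-contained evaluation of $\mathrm{Tr}_{\cF^{(0)}}(\cA\cB Q^L)$ by the cyclicity-plus-iteration argument; your argument is in effect a direct proof of the cited lemma and yields the identical product over $j\ge0$ after the reindexing $Q^m P_k/P_l = Q^{m-1}\prod_{a=1}^{k-1}Q_a\prod_{b=l}^{M}Q_b$. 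Your handling of the convergence/cyclicity issues (working coefficientwise in the $Q_i$, noting that at each fixed $Q$-degree only finitely many $|\mu\rangle$ contribute) is adequate and matches the paper's implicit standing assumption $|Q_i|<1$.
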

\begin{proof}
By the commutation relations in Lemma \ref{lem-comm-RN} and Lemma \ref{lem-comm-RN-2} we have:
\begin{equation*}
\begin{split}
& R^N \Xi^C R^{-N} = \Xi^{C-N}, \\
& R^N (-1)^{-2L} Q_i^L R^{-N} = (-1)^{-2L+2NC-N^2} Q_i^{L-NC+N^2/2}, \\
& R^N \Psi_{(\emptyset)}(q) R^{-N} = \Psi_{(\emptyset)}(q),
\end{split}
\end{equation*}
and thus the second line of equation \eqref{eqn:Z0 as vev} is equal to
\begin{equation*}
\begin{split}
& \sum_{N\in \bZ} \sum_{\mu \in \cP}
\Big\langle \mu \Big|
R^N \Xi^C
\Big( \Psi_{(\emptyset)} (q)  (-1)^{-2 L} Q_1^L \Big) \cdots
\Big( \Psi_{(\emptyset)}(q)
(-1)^{-2 L} Q_M^L \Big)
R^{-N} \Big| \mu \Big\rangle \\
=&
\sum_{N\in \bZ} \sum_{\mu \in \cP} \Xi^{-N} \cdot
\Big\langle \mu \Big|
\Big( \Psi_{(\emptyset)} (q)  (-1)^{-2 L +2NC -N^2 } Q_1^{L-NC+N^2/2} \Big) \cdots \\
& \qquad\qquad\qquad\qquad \cdot
\Big( \Psi_{(\emptyset)}(q)
(-1)^{-2 L +2NC -N^2} Q_M^{L-NC+N^2/2} \Big)
 \Big| \mu \Big\rangle \\
 =& \sum_{N\in \bZ} \sum_{\mu \in \cP}
  \Xi^{-N} (-1)^{MN} Q^{N^2 /2}
\Big\langle \mu \Big|
\Big( \Psi_{(\emptyset)} (q)  (-1)^{-2 L } Q_1^{L} \Big) \cdots
\Big( \Psi_{(\emptyset)}(q)
(-1)^{-2 L } Q_M^{L} \Big)
 \Big| \mu \Big\rangle \\
= &
\Theta_3 \big( (-1)^M\Xi^{-1}; Q^{1/2} \big)\cdot
\sum_{\mu\in \cP }
\Big\langle \mu \Big|
\Big( \Psi_{(\emptyset)} (q)  (-1)^{-2 L } Q_1^{L} \Big) \cdots
\Big( \Psi_{(\emptyset)}(q)
(-1)^{-2 L } Q_M^{L} \Big)
 \Big| \mu \Big\rangle.
\end{split}
\end{equation*}
Here we have used the facts that $|\mu\rangle$, $\Psi_{(\emptyset)}(q)$ are of charge $0$
and $[C,L]=0$.
Notice that $L$ acts on
every basis vector $|\mu\rangle \in \cF^{(0)}$
by multiplying by an integer (see \eqref{eq-eigen-C&J}),
and thus the term $(-1)^{2L}$ in the above equality is actually the identity.
Therefore:
\be
\label{eq-Z-2-constpf}
\begin{split}
 Z^{\text{total}} (\bm 0; Q_i,-2; \Xi)
= \Theta_3 \big( (-1)^M\Xi^{-1}; Q^{1/2} \big)
\sum_{\mu\in \cP }
\Big\langle \mu \Big|
\Big( \Psi_{(\emptyset)} (q) Q_1^{L} \Big) \cdots
\Big( \Psi_{(\emptyset)}(q)
 Q_M^{L} \Big)
 \Big| \mu \Big\rangle.
\end{split}
\ee
By Corollary \ref{cor-comm-LPsi}, we have:
\begin{equation*}
\begin{split}
\Big( \Psi_{(\emptyset)} (q) Q_1^{L} \Big) \cdots
\Big( \Psi_{(\emptyset)}(q)
 Q_M^{L} \Big)
=& Q^L
\overset{\longrightarrow}{\prod_{l=1}^M}
\Big( \big(\prod_{a=l}^M Q_a \big)^{-L}
\cdot \Psi_{(\emptyset)}(q) \cdot \big(\prod_{a=l}^M Q_a \big)^{L}
\Big) \\
=& Q^L
\overset{\longrightarrow}{\prod_{l=1}^M} \Big(
\prod_{j\geq 0}^{\longleftarrow} \Gamma_{-} \big(\prod_{a=l}^M Q_a^{-1}  q^{-j-\half} \big)
\cdot \prod_{i\geq 0}^{\longrightarrow} \Gamma_{+} \big(\prod_{a=l}^M Q_a  q^{-i-\half} \big)
\Big).
\end{split}
\end{equation*}
To simplify the above operator,
we use \eqref{eq-commGamma} to move the operators $\Gamma_-$ to the left
and move $\Gamma_+$ to the right.
Notice here we also have $[\Gamma_-(z),\Gamma_-(w)] =[\Gamma_+(z),\Gamma_+(w)] = 0$.
Thus,
\begin{equation*}
\begin{split}
&\overset{\longrightarrow}{\prod_{l=1}^M} \Big(
\prod_{j\geq 0}^{\longleftarrow} \Gamma_{-} \big(\prod_{a=l}^M Q_a^{-1} \cdot q^{-j-\half} \big)
\cdot \prod_{i\geq 0}^{\longrightarrow} \Gamma_{+} \big(\prod_{a=l}^M Q_a \cdot q^{-i-\half} \big)
\Big)\\
=& c_1 \cdot
\Big( \prod_{l=1}^M \prod_{j\geq 0}\Gamma_{-} \big(\prod_{a=l}^M Q_a^{-1} \cdot q^{-j-\half} \big)
\Big) \Big(
\prod_{l=1}^M \prod_{i\geq 0} \Gamma_{+} \big(\prod_{a=l}^M Q_a \cdot q^{-i-\half} \big)
\Big)
\end{split}
\end{equation*}
where the coefficient $c_1$ is:
\begin{equation*}
\begin{split}
c_1 =& \prod_{1\leq k <l \leq M} \prod_{i,j\geq 0}
\frac{1}{1- (\prod_{a=k}^M Q_a) \cdot q^{-i-\half} \cdot (\prod_{b=l}^M Q_b^{-1} ) \cdot q^{-j-\half}} \\
= & \prod_{1\leq k <l \leq M}  \frac{1}{\cM (\prod_{i=k}^{l-1}Q_i ; q)}.
\end{split}
\end{equation*}
Then we apply \eqref{eq-comm-LGamma} to commute the operators $\Gamma_-$ with $Q^L$ and obtain:
\begin{equation*}
\begin{split}
&\Big( \Psi_{(\emptyset)} (q) Q_1^{L} \Big) \cdots
\Big( \Psi_{(\emptyset)}(q)
 Q_M^{L} \Big) \\
=& c_1 \cdot Q^L
\Big( \prod_{l=1}^M \prod_{j\geq 0}\Gamma_{-} \big(\prod_{a=l}^M Q_a^{-1} \cdot q^{-j-\half} \big)
\Big) \Big(
\prod_{l=1}^M \prod_{i\geq 0} \Gamma_{+} \big(\prod_{a=l}^M Q_a \cdot q^{-i-\half} \big)
\Big) \\
=& c_1 \cdot
\Big( \prod_{l=1}^M \prod_{j\geq 0}\Gamma_{-} \big(\prod_{a=1}^{l-1} Q_a \cdot q^{-j-\half} \big)
\Big) Q^L \Big(
\prod_{l=1}^M \prod_{i\geq 0} \Gamma_{+} \big(\prod_{a=l}^M Q_a \cdot q^{-i-\half} \big)
\Big).
\end{split}
\end{equation*}
In \cite[Lemma 3.2]{ya} the following identity has been proved for $|Q_i|<1$:
\begin{equation*}
\begin{split}
& \sum_{\mu\in \cP}
\Big\langle\mu \Big|
\Big( \prod_{l=1}^M \prod_{j\geq 0}\Gamma_{-} \big(\prod_{a=1}^{l-1} Q_a \cdot q^{-j-\half} \big)
\Big) Q^L \Big(
\prod_{l=1}^M \prod_{i\geq 0} \Gamma_{+} \big(\prod_{a=l}^M Q_a \cdot q^{-i-\half} \big)
\Big| \mu \Big\rangle
\Big) \\
=& \prod_{i=1}^\infty \frac{1}{1-Q^i} \cdot
\prod_{k,l= 1}^M \prod_{j=0}^\infty
\frac{1}{\cM(Q^j \cdot \prod_{a=1}^{k-1} Q_a \cdot \prod_{b=l}^M Q_i ;q)},
\end{split}
\end{equation*}
and thus:
\begin{equation*}
\begin{split}
&\sum_{\mu\in \cP }
\Big\langle \mu \Big|
\Big( \Psi_{(\emptyset)} (q) Q_1^{L} \Big) \cdots
\Big( \Psi_{(\emptyset)}(q)
 Q_M^{L} \Big)
 \Big| \mu \Big\rangle \\
 =&
\prod_{1\leq k <l \leq M}  \frac{1}{\cM (\prod_{i=k}^{l-1}Q_i ; q)} \cdot
\prod_{i=1}^\infty \frac{1}{1-Q^i} \cdot
\prod_{k,l= 1}^\infty \prod_{j=0}^\infty
\frac{1}{\cM(Q^j \cdot \prod_{a=1}^{k-1} Q_a \cdot \prod_{b=l}^M Q_i ;q)}.
\end{split}
\end{equation*}
Plug this into equation \eqref{eq-Z-2-constpf},
and then we obtain the conclusion of this theorem.
\end{proof}

\subsection{Affine coordinates of $Z^{\text{total}} (\textbf{t}; Q_i,-2; \Xi)$}

Now we compute the affine coordinates of the tau-function $Z^{\text{total}} (\mathbf{t}; Q_i,-2; \Xi)$
on the Sato Grassmannian.
The tau-function is completely determined by
the affine coordinates together with the constant term computed above.

\begin{Theorem}\label{prop:affine coord}
The affine coordinates $\{a_{n,m}^{(-2)}\}_{n,m\geq 0}$ of the tau-function
$Z^{\text{total}} (\mathbf{t}; Q_i,-2; \Xi)$ on the big cell of the Sato Grassmannian are:
\begin{equation*}
\begin{split}
a_{n,m}^{(-2)} =
& \frac{\Theta_3 \big( (-1)^M (q^{m+n+1}  \Xi)^{-1} ; Q^{\half} \big)}
{\Theta_3 \big( (-1)^M  \Xi^{-1}; Q^{\half} \big) }
\cdot
\frac{(-1)^{m+1} q^{\half m^2 +m +\half n +\half} }
{(1-q^{m+n+1}) \prod_{j=0}^{m-1} (1-q^{m-j}) \cdot \prod_{i=0}^{n-1}(1-q^{n-i}) } \\
&
\cdot \prod_{j= 0}^\infty \prod_{l=2}^M
\frac{  1-\prod_{a=1}^{l-1}Q_a \cdot q^{-n-j-1}}
{1-\prod_{a=1}^{l-1}Q_a \cdot q^{m-j}}
 \cdot \prod_{j=1}^\infty \frac{(1-Q^j)^2}{(1-Q^j q^{m+n+1})(1-Q^j q^{-m-n-1})}
\\
& \cdot
\prod_{j=0}^\infty \prod_{l=1}^M \prod_{i\geq 0} \frac{1-Q^j \prod_{b=l}^M Q_b \cdot q^{-m-i-1}}
{1-Q^j \prod_{b=l}^M Q_b \cdot q^{n-i}}
\cdot
\prod_{j=1}^\infty\prod_{k=1}^M \prod_{i\geq 0} \frac{1-Q^j  \prod_{a=1}^{k-1} Q_a \cdot q^{-n-i-1}}
{1-Q^j \prod_{a=1}^{k-1} Q_a \cdot q^{m-i}},
\end{split}
\end{equation*}
where $Q = Q_1Q_2\cdots Q_M$.
\end{Theorem}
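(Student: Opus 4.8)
The plan is to evaluate $a^{(-2)}_{n,m} = (-1)^n A^{\text{total}}_{m,n}$ (with $\gamma_1=\dots=\gamma_M=-2$) directly from \eqref{eq-def-Amn}. When $\gamma_i=-2$ the factors $q^{-\frac12(\gamma_i+2)K}$ are trivial and each $(-1)^{\gamma_i L}$ acts as the identity on the charge-zero subspace it sees, so \eqref{eq-def-Amn} becomes a charge-zero trace of a product of operators $\Psi_{(m|n)}(q)Q_1^L,\ \Psi_{(\emptyset)}(q)Q_2^L,\ \dots,\ \Psi_{(\emptyset)}(q)Q_M^L$, while its denominator $c_0$ is exactly $Z^{\text{total}}(\bm 0;Q_i,-2;\Xi)$, already computed in Theorem \ref{prop-Z-2-const}. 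First I would repeat the opening step of the proof of Theorem \ref{prop-Z-2-const}, with the first $\Psi_{(\emptyset)}$ replaced by $\Psi_{(m|n)}$: conjugate $R^N$ and $\Xi^C$ through the product using Lemmas \ref{lem-comm-RN} and \ref{lem-comm-RN-2}, with $R^N\Psi_{(m|n)}(q)R^{-N}=q^{-N(m+n+1)}\Psi_{(m|n)}(q)$ since $|(m|n)|=m+n+1$. Summing over $N$ extracts the theta quotient $\Theta_3\big((-1)^M(q^{m+n+1}\Xi)^{-1};Q^{1/2}\big)\big/\Theta_3\big((-1)^M\Xi^{-1};Q^{1/2}\big)$ — which is the first factor in the claimed formula — and reduces the rest to a ratio $T_{m,n}/T_0$ of charge-zero traces on $\cF^{(0)}$, where $T_0=\sum_\mu\langle\mu|(\Psi_{(\emptyset)}(q)Q_1^L)\cdots(\Psi_{(\emptyset)}(q)Q_M^L)|\mu\rangle$ is the trace evaluated in the proof of Theorem \ref{prop-Z-2-const} and $T_{m,n}$ is the same trace with the first $\Psi_{(\emptyset)}(q)$ replaced by $\Psi_{(m|n)}(q)$.

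Next I would compute $T_{m,n}/T_0$. By \eqref{eq-tGa-reform}, $\Psi_{(m|n)}(q)$ equals $\Psi_{(\emptyset)}(q)$ with two fermionic fields inserted: a factor $f_1(q^{m+1/2})\,\psi^*(q^{-m-1/2})$ adjacent to the $\Gamma_-(q^{-m-1/2})$ occupying the $j=m$ slot in \eqref{eq-def-Psi}, and a factor $q^{n+1/2}f_2(q^{n+1/2})\,\psi(q^{n+1/2})$ adjacent to the $\Gamma_+(q^{-n-1/2})$ in the $i=n$ slot. By \eqref{eq-def-f1f2} the scalar prefactor is $f_1(q^{m+1/2})\cdot q^{n+1/2}f_2(q^{n+1/2})=(-1)^{m+n+1}q^{\frac12 m^2+\frac12 n}$; after multiplication by the overall $(-1)^n$ this already produces the $(-1)^{m+1}$ and part of the $q$-power in the statement. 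I would then rerun the reorganization from the proof of Theorem \ref{prop-Z-2-const} — collecting the $Q_i^L$ and $Q^L$ to one side, moving the $\Gamma_-$'s past the $\Gamma_+$'s via \eqref{eq-commGamma}, and rescaling arguments via \eqref{eq-comm-LGamma} — now carrying the two fermionic fields along. Each time $\psi^*(q^{-m-1/2})$ or $\psi(q^{n+1/2})$ is commuted past a $\Gamma_\pm$ or one of the dilations $Q_i^L$, it picks up a rational factor of the form $(1-(\cdot)\,q^{\pm(\cdot)})^{\pm 1}$, which is read off from \eqref{eq-comm-alphapsi} or, equivalently, from \eqref{eq-psi-inGamma}. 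This reduces $T_{m,n}$ to $(-1)^{m+n+1}q^{\frac12 m^2+\frac12 n}$ times an explicit product of such factors times a trace $\sum_\mu\langle\mu|(\prod\Gamma_-)\,\psi^*(u_0)\,\psi(v_0)\,Q^L\,(\prod\Gamma_+)|\mu\rangle$ with explicit $u_0,v_0$.

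The remaining trace is a finite-temperature one-point function of $\psi^*\psi$ in the operator $(\prod\Gamma_-)Q^L(\prod\Gamma_+)$. I would evaluate it by a computation parallel to \cite[Lemma 3.2]{ya} (equivalently, a free-fermion Wick argument together with \eqref{eq-commGamma}): commuting $\psi^*(u_0)$ and $\psi(v_0)$ to the far left past the $\Gamma_-$'s and past $Q^L$ produces a further product of $(1-(\cdot)\,q^{\pm(\cdot)})^{\pm 1}$ factors and leaves the trace already computed in Theorem \ref{prop-Z-2-const} multiplied by a bare-propagator contribution; the latter is proportional to $1/(u_0-v_0)$ and supplies the denominators $1-q^{m+n+1}$, $\prod_{j=0}^{m-1}(1-q^{m-j})$, $\prod_{i=0}^{n-1}(1-q^{n-i})$ together with $\prod_{j\ge 1}(1-Q^j)^2/((1-Q^j q^{m+n+1})(1-Q^j q^{-m-n-1}))$ and the missing $q^{m+1/2}$. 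Dividing by $T_0$ cancels the ``$Q$-only'' and vacuum contributions, and assembling the theta quotient, the scalar $(-1)^{m+1}q^{\frac12 m^2+m+\frac12 n+\frac12}$, and the surviving rational factors reproduces the four infinite products displayed in the statement.

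The hard part will be the bookkeeping in the last two steps: tracking the infinitely many rational factors generated when the two fermionic fields are moved across the doubly-indexed families of vertex operators $\Gamma_\pm$ and the dilations $Q_i^L$, and grouping them into the stated products over $l,j,i$ and over $1\le k<l\le M$ with all signs and $q$-exponents correct. A secondary point is to check convergence of the infinite products as formal power series in $q,Q_1,\dots,Q_M$ (recall that $|Q_i|<1$ is assumed throughout \S\ref{sec-eg}) and to verify that the normalization by $T_0$ removes exactly the closed-string factors of the cycle, so that only the one-point contribution remains.
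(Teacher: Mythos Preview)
Your proposal is correct and follows essentially the same route as the paper's proof: extract the theta quotient by conjugating $R^N$ through (using Lemmas \ref{lem-comm-RN}--\ref{lem-comm-RN-2} and $|(m|n)|=m+n+1$), reduce to a ratio of charge-zero traces with the denominator supplied by Theorem \ref{prop-Z-2-const}, reorganize via \eqref{eq-commGamma} and \eqref{eq-comm-LGamma}, and finish by the method of \cite[Lemma 3.2]{ya}. The only cosmetic difference is in how the two modified slots of $\Psi_{(m|n)}$ are handled: the paper uses the representation \eqref{eq-def-tGamma} of $\tGa_\pm$, eliminates the $R^{\pm 1}$ and $z^{C}$ by the charge-zero observation, and thus replaces the hook insertion by the ratio $\Gamma_-(q^{n+\half})\Gamma_-(q^{-m-\half})^{-1}\cdot\Gamma_+(q^{m+\half})\Gamma_+(q^{-n-\half})^{-1}$, whereas you use the fermionic form \eqref{eq-tGa-reform} and carry $\psi^*(q^{-m-\half}),\psi(q^{n+\half})$ through; by \eqref{eq-psi-inGamma} these are the same computation.
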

\begin{proof}
From the discussions in \S \ref{sec-KPaffine} we know that the affine coordinates of the tau-function $Z^{\text{total}} (\mathbf{t}; Q_i,-2; \Xi)$ are given by
$a_{n,m}^{(-2)} = (-1)^n \cdot A_{m,n}^{(-2)}$, where
\begin{equation*}
\begin{split}
 A_{m,n}^{(-2)} = &
 \frac{1}{Z^{\text{total}} (\bm 0; Q_i,-2; \Xi)} \cdot
 \sum_{N\in \bZ} \sum_{\mu\in \cP}
\Big\langle \mu \Big|
R^N \Xi^C
\Big( \Psi_{(m|n)} (q)  (-1)^{-2  L} Q_1^L \Big)\cdot  \\
& \qquad\qquad \Big( \Psi_{(\emptyset)}(q)
(-1)^{-2 L} Q_2^L \Big) \cdots
\Big( \Psi_{(\emptyset)}(q)
(-1)^{-2 L} Q_M^L \Big)
R^{-N} \Big| \mu \Big\rangle.
\end{split}
\end{equation*}
We need to compute the vacuum expectation value in the above formula.
Similar to the case in Theorem \ref{prop-Z-2-const},
we apply Lemma \ref{lem-comm-RN} and Lemma \ref{lem-comm-RN-2} to obtain:
\begin{equation*}
\begin{split}
& \sum_{N\in \bZ} \sum_{\mu\in \cP}
\Big\langle \mu \Big|
R^N \Xi^C
\Big( \Psi_{(m|n)} (q)  (-1)^{-2  L} Q_1^L \Big)\cdot
\overset{\longrightarrow}{\prod_{i=2}^M}
 \Big( \Psi_{(\emptyset)}(q)
(-1)^{-2 L} Q_i^L \Big) \cdot
R^{-N} \Big| \mu \Big\rangle \\
=& \sum_{N\in \bZ} \sum_{\mu\in \cP} \Xi^{-N} q^{-N(m+n+1)} (-1)^{MN} Q^{N^2 /2}
\cdot
\Big\langle \mu \Big|
\Psi_{(m|n)} (q)   Q_1^L \cdot
\overset{\longrightarrow}{\prod_{i=2}^M}
 \Big( \Psi_{(\emptyset)}(q) Q_i^L \Big)\Big| \mu \Big\rangle \\
=& \Theta_3 \big( (-1)^M (q^{m+n+1} \cdot \Xi)^{-1} ; Q^{1/2} \big)
\cdot \sum_{\mu\in \cP}
\Big\langle \mu \Big|
\Psi_{(m|n)} (q)   Q_1^L \cdot
\overset{\longrightarrow}{\prod_{i=2}^M}
 \Big( \Psi_{(\emptyset)}(q) Q_i^L \Big)\Big| \mu \Big\rangle,
\end{split}
\end{equation*}
where
\be
\label{eq--2affine-pf2}
\begin{split}
& \Psi_{(m|n)} (q)   Q_1^L \cdot
\overset{\longrightarrow}{\prod_{i=2}^M}
 \Big( \Psi_{(\emptyset)}(q) Q_i^L \Big)\\
=& Q^L \cdot
\Big( Q^{-L} \Psi_{(m|n)}(q) Q^L \Big)\cdot
\overset{\longrightarrow}{\prod_{l=2}^M}
\Big( \big(\prod_{a=l}^M Q_a \big)^{-L}
\cdot \Psi_{(\emptyset)}(q) \cdot \big(\prod_{a=l}^M Q_a \big)^{L}
\Big).
\end{split}
\ee
By Corollary \ref{cor-comm-LPsi}, we have:
\begin{equation*}
\big(\prod_{a=l}^M Q_a \big)^{-L}
 \Psi_{(\emptyset)}(q)  \big(\prod_{a=l}^M Q_a \big)^{L}
\Big) =
\Big(
\prod_{j\geq 0}^{\longleftarrow} \Gamma_{-} \big(\prod_{a=l}^M Q_a^{-1}  q^{-j-\half} \big)
\cdot \prod_{i\geq 0}^{\longrightarrow} \Gamma_{+} \big(\prod_{a=l}^M Q_a  q^{-i-\half} \big)
\Big),
\end{equation*}
and
\be
\label{eq--2affine-pf1}
\begin{split}
& Q^{-L} \Psi_{(m|n)}(q) Q^L \\
=& f_1(q^{m+\half})\prod_{j>m} \Gamma_- (Q^{-1}q^{-j-\half}) \cdot
\Gamma_+ (Qq^{m+\half}) Q^C R^{-1} q^{(m+\half)C} \cdot
\prod_{j=0}^{m-1} \Gamma_- (Q^{-1}q^{-j-\half})  \\
& \cdot f_2(q^{n+\half})
\prod_{i=0}^{n-1} \Gamma_+ (Qq^{-i-\half}) \cdot
q^{(n+\half) C} R Q^{-C} \Gamma_-(Q^{-1}q^{n+\half}) \cdot
\prod_{i>n} \Gamma_+ (Qq^{-i-\half}).
\end{split}
\ee
Now we apply Lemma \ref{lem-comm-RN} and rewrite:
\begin{equation*}
\begin{split}
 R^{-1} q^{(m+\half)C} = q^{(m+\half)(C+1)} R^{-1},\qquad
 q^{(n+\half) C}  R = R q^{(n+\half) (C+1)},
\end{split}
\end{equation*}
and then apply the first equality in \eqref{eq-comm-RN-2}.
Thus  \eqref{eq--2affine-pf1} equals to:
\begin{equation*}
\begin{split}
& f_1(q^{m+\half})\prod_{j>m} \Gamma_- (Q^{-1}q^{-j-\half}) \cdot
\Gamma_+ (Qq^{m+\half}) Q^C q^{(m+\half) (C+1)} \cdot
\prod_{j=0}^{m-1} \Gamma_- (Q^{-1}q^{-j-\half})  \\
& \cdot f_2(q^{n+\half})
\prod_{i=0}^{n-1} \Gamma_+ (Qq^{-i-\half}) \cdot
q^{(n+\half) (C+1)} Q^{-C}  \Gamma_-(Q^{-1}q^{n+\half}) \cdot
\prod_{i>n} \Gamma_+ (Qq^{-i-\half}).
\end{split}
\end{equation*}
And since $\Gamma_\pm (z)$ and $|\mu\rangle$ are all of charge $0$,
we can simply replace $C$ by $0$ in this expression and rewrite \eqref{eq--2affine-pf1} as:
\begin{equation*}
\begin{split}
& q^{m+n+1} \cdot f_1(q^{m+\half})f_2(q^{n+\half})
\prod_{j>m} \Gamma_- (Q^{-1}q^{-j-\half}) \cdot
\Gamma_+ (Qq^{m+\half}) \cdot
\prod_{j=0}^{m-1} \Gamma_- (Q^{-1}q^{-j-\half})  \\
& \cdot
\prod_{i=0}^{n-1} \Gamma_+ (Qq^{-i-\half}) \cdot
  \Gamma_-(Q^{-1}q^{n+\half}) \cdot
\prod_{i>n} \Gamma_+ (Qq^{-i-\half}),
\end{split}
\end{equation*}
where
\begin{equation*}
q^{m+n+1} Q^{-1}f_1(q^{m+\half})f_2(q^{n+\half})
=  (-1)^{m+n+1} \cdot q^{\half m^2 +m +\half n +\half} \cdot Q^{-1}.
\end{equation*}
Plug these into \eqref{eq--2affine-pf2},
and we conclude that:
\be
\label{eq--2affine-pf3}
\begin{split}
& \sum_{\mu\in \cP}\Big\langle \mu \Big|
\Psi_{(m|n)} (q)   Q_1^L \cdot
\overset{\longrightarrow}{\prod_{i=2}^M}
 \Big( \Psi_{(\emptyset)}(q) Q_i^L \Big)\Big| \mu \Big\rangle \\
=& (-1)^{m+n+1} q^{\half m^2 +m +\half n +\half}
\sum_{\mu\in \cP}
\Big\langle \mu \Big| Q^L
\prod_{j>m} \Gamma_- (Q^{-1}q^{-j-\half}) \cdot
\Gamma_+ (Qq^{m+\half})  \\
& \quad \cdot \prod_{j=0}^{m-1} \Gamma_- (Q^{-1}q^{-j-\half})
 \cdot
\prod_{i=0}^{n-1} \Gamma_+ (Qq^{-i-\half}) \cdot
  \Gamma_-(Q^{-1}q^{n+\half}) \cdot
\prod_{i>n} \Gamma_+ (Qq^{-i-\half})  \\
& \quad \cdot \overset{\longrightarrow}{\prod_{l=2}^M}
\Big(
\prod_{j\geq 0}^{\longleftarrow} \Gamma_{-} \big(\prod_{a=l}^M Q_a^{-1} \cdot  q^{-j-\half} \big)
\cdot \prod_{i\geq 0}^{\longrightarrow} \Gamma_{+} \big(\prod_{a=l}^M Q_a \cdot  q^{-i-\half} \big)
\Big)
\Big| \mu \Big\rangle.
\end{split}
\ee
Again we use the commutation relation \eqref{eq-commGamma} to move the operators $\Gamma_-$ to the left
and $\Gamma_+$ to the right,
and in this way we can rewrite \eqref{eq--2affine-pf3} as:
\be
\label{eq--2affpr4}
\begin{split}
& c_2 \cdot \sum_{\mu\in \cP}
\Big\langle \mu\Big| Q^L \cdot
\Gamma_-(Q^{-1}q^{n+\half}) \Gamma_-(Q^{-1}q^{-m-\half})^{-1} \cdot
\prod_{k=1}^M \prod_{j\geq 0} \Gamma_{-} \big(\prod_{a=k}^M Q_a^{-1}\cdot  q^{-j-\half} \big) \\
& \qquad \cdot \prod_{l=1}^M \prod_{i\geq 0} \Gamma_{+} \big(\prod_{b=l}^M Q_b \cdot  q^{-i-\half} \big)
\cdot \Gamma_+ (Qq^{m+\half}) \cdot \Gamma_+(Qq^{-n-\half})^{-1}
\Big|\mu\Big\rangle,
\end{split}
\ee
where the coefficient $c_2$ is:
\begin{equation*}
\begin{split}
 c_2 =&
\frac{(-1)^{m+n+1} q^{\half m^2 +m +\half n +\half} }
{\prod_{j=0}^{m-1} (1-q^{m-j}) \cdot \prod_{i=0}^{n-1}(1-q^{n-i}) \cdot (1-q^{m+n+1})} \\
& \cdot \frac{1}{\prod_{1\leq k<l\leq M} \cM (\prod_{i=k}^{l-1}Q_i ;q)}
\cdot \frac{\prod_{l=2}^M \prod_{j\geq 0} (1-\prod_{a=1}^{l-1}Q_a \cdot q^{-n-j-1})}
{\prod_{l=2}^M \prod_{j\geq 0} (1-\prod_{a=1}^{l-1}Q_a \cdot q^{m-j})}.
\end{split}
\end{equation*}
Then by Lemma \ref{lem-comm-LGamma} we see that
\eqref{eq--2affpr4} equals to:
\be
\label{eq--2aff-pf5}
\begin{split}
& c_2 \cdot \sum_{\mu\in \cP}
\Big\langle \mu\Big|
\Gamma_-(q^{n+\half}) \Gamma_-(q^{-m-\half})^{-1} \cdot
\prod_{k=1}^M \prod_{j\geq 0} \Gamma_{-} \big(\prod_{a=1}^{k-1} Q_a \cdot  q^{-j-\half} \big) \\
& \qquad \cdot Q^L \cdot \prod_{l=1}^M \prod_{i\geq 0} \Gamma_{+} \big(\prod_{b=l}^M Q_b \cdot  q^{-i-\half} \big)
\cdot \Gamma_+ (Qq^{m+\half}) \Gamma_+(Qq^{-n-\half})^{-1}
\Big|\mu\Big\rangle.
\end{split}
\ee
This can be computed by exactly the same method used in the proof of \cite[Lemma 3.2]{ya},
and here we omit the details.
As a consequence,
\eqref{eq--2aff-pf5} equals to:
\begin{equation*}
\begin{split}
& c_2 \cdot \prod_{j=1}^\infty \frac{1-Q^j}{(1-Q^j q^{m+n+1})(1-Q^j q^{-m-n-1})}
\cdot
\prod_{j=0}^\infty \prod_{k,l=1}^M \frac{1}{\cM(Q^j  \prod_{b=l}^M Q_b \cdot \prod_{a=1}^{k-1} Q_a ; q)}
\\
&
\cdot
\prod_{j=0}^\infty \prod_{l=1}^M \prod_{i\geq 0} \frac{1-Q^j \prod_{b=l}^M Q_b \cdot q^{-m-i-1}}
{1-Q^j \prod_{b=l}^M Q_b \cdot q^{n-i}}
\cdot
\prod_{j=1}^\infty\prod_{k=1}^M \prod_{i\geq 0} \frac{1-Q^j  \prod_{a=1}^{k-1} Q_a \cdot q^{-n-i-1}}
{1-Q^j \prod_{a=1}^{k-1} Q_a \cdot q^{m-i}}.
\end{split}
\end{equation*}
Combining these computations with \eqref{eq-Z-2-const},  we complete the proof.
\end{proof}

\subsection{Quantum spectral curve for $Z^{\text{total}} (\textbf{t}; Q_i,-2; \Xi)$}

In this subsection,
we consider the principal specialization $t_k = \frac{ z^{-k}}{k}$ of the
tau-function $Z^{\text{total}} (\mathbf{t}; Q_i,-2; \Xi)$
and derive a difference equation for it.
Such a principal specialization of a tau-function of the KP hierarchy
is equivalent to the evaluation at $\mathbf{t} = \bm 0$ of the wave-function
associated with this tau-function \cite{djm},
which coincides with the first basis vector of the corresponding point on the Sato Grassmannian.
It is also the fermionic one-point function
from the viewpoint of boson-fermion correspondence.
In the literature,
this principal specialization of a tau-function together with a differential or difference operator annihilating it
is understood as the quantum spectral curve of
the corresponding model in theoretical or mathematical physics,
see e.g. \cite{al, gs}.

Now in our case, the principle specialization gives
\begin{align*}
	\Psi(z)
	=\frac{Z^{\text{total}} (\mathbf{t}; Q_i,-2; \Xi)|_{t_k = z^{-k}/k}}
	{Z^{\text{total}} (\bm 0; Q_i,-2; \Xi)}.
\end{align*}
Moreover,
from the KP integrability of the tau-function $Z^{\text{total}} (\mathbf{t}; Q_i,-2; \Xi)$,
the $\Psi(z)$ is also equal to the first basis vector of the point on the Sato Grassmannian
corresponding to this tau-function
and is also called the fermionic one-point function.
Thus, it can be also represented as the following Laurent series:
\be
\begin{split}
\Psi (z) =&
1+ \sum_{m\geq 0}  a_{0,m}^{(-2)} \cdot z^{-m-1},
\end{split}
\ee
where, from Theorem \ref{prop:affine coord},
\begin{equation*}
\begin{split}
a_{0,m}^{(-2)} =
& \frac{\Theta_3 \big( (-1)^M (q^{m+1}  \Xi)^{-1} ; Q^{\half} \big)}
{\Theta_3 \big( (-1)^M  \Xi^{-1}; Q^{\half} \big) }
\cdot
\frac{(-1)^{m+1} q^{\half m^2 +m +\half} }
{\prod_{j=1}^{m+1} (1-q^j) }
\cdot \prod_{l=2}^M
\frac{ 1 }
{\prod_{i=0}^m ( 1-\prod_{a=1}^{l-1}Q_a \cdot q^{i}) } \\
& \cdot \prod_{j=0}^\infty \frac{(1-Q^{j+1})^2 \cdot
(1-Q^{j+1} q^{m+1})^{-1} \cdot (1-Q^{j+1} q^{-m-1})^{-1}}
{\prod_{l=1}^M  \prod_{i=0}^m ( 1-Q^j \prod_{b=l}^M Q_b \cdot q^{-i} ) \cdot
\prod_{k=1}^M \prod_{i=0}^m  ( 1-Q^{j+1} \prod_{a=1}^{k-1} Q_a \cdot q^{i}) }.
\end{split}
\end{equation*}
We then derive the following difference equation satisfied by $\Psi(z)$,
which is understood as a quantum spectral curve for this model:
\begin{Proposition}
Let $\wP$  be the following $q$-difference operator:
\be
\label{eq-def-oprP}
\begin{split}
	\wP =
	&  \Theta_3 \big( (-1)^M  \Xi^{-1} q^{z\pd_z -1} ; Q^{\half} \big)
	\cdot
	q^{-z\pd_z +\half}
	\cdot \prod_{j=0}^\infty
	(1-Q^{j+1} q^{-z\pd_z}) (1-Q^{j+1} q^{z\pd_z}) \\
	&+  z \cdot \Theta_3 \big( (-1)^M  \Xi^{-1} q^{z\pd z +1 } ; Q^{\half} \big)
	\cdot
	( 1-q^{-z\pd_z } )
	\cdot \prod_{l=2}^M
	( 1-\prod_{a=1}^{l-1}Q_a \cdot q^{-z\pd_z-1} ) \\
	& \cdot
	\prod_{j=0}^\infty \prod_{l=1}^M   ( 1-Q^j \prod_{b=l}^M Q_b \cdot q^{z\pd_z +1} ) \cdot
	\prod_{j=0}^\infty  \prod_{k=1}^M   ( 1-Q^{j+1} \prod_{a=1}^{k-1} Q_a \cdot q^{-z\pd_z -1}) \\
	& \cdot \prod_{j=0}^\infty
	(1-Q^{j+1} q^{-z\pd_z}) \cdot (1-Q^{j+1} q^{z\pd_z}).
\end{split}
\ee
The fermionic one-point function
$\Psi(z)$ of the tau-function $Z^{\text{total}} (\textbf{t}; Q_i,-2; \Xi)$
satisfies the following difference equation:
\be
\label{eq-Z-2-qsc}
\wP \Psi(z) =0.
\ee
\end{Proposition}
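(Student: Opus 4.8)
The plan is to extract a three-term recursion for the coefficients $a_{0,m}^{(-2)}$ and reorganize it as an annihilating $q$-difference operator. The key observation is that the fermionic one-point function $\Psi(z) = 1 + \sum_{m\geq 0} a_{0,m}^{(-2)} z^{-m-1}$ acts as a generating series on which $z\partial_z$ behaves diagonally: $z\partial_z$ applied to $z^{-m-1}$ returns $-(m+1) z^{-m-1}$, so a $q$-difference operator of the form $F(q^{z\partial_z})$ multiplies the coefficient of $z^{-m-1}$ by $F(q^{-m-1})$, and multiplication by $z$ shifts $z^{-m-1}$ to $z^{-m}$, i.e. sends index $m$ to $m-1$. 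Thus the operator equation $\wP \Psi(z) = 0$ is, coefficient by coefficient, exactly a linear relation connecting the quantity at level $m$ (coming from the $z$-free part of $\wP$, evaluated at $q^{-m-1}$) to the quantity at level $m-1$ (coming from the $z$-part of $\wP$, evaluated at $q^{-m}$). So the whole statement reduces to establishing one explicit recursion in $m$ for the closed-form product appearing in Theorem~\ref{prop:affine coord} with $n=0$.

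First I would write down $a_{0,m}^{(-2)}$ and $a_{0,m-1}^{(-2)}$ from the formula in Theorem~\ref{prop:affine coord} (specialized to $n=0$, as displayed just before the Proposition) and form the ratio $a_{0,m}^{(-2)} / a_{0,m-1}^{(-2)}$. Each factor in that ratio is a finite or infinite product of simple binomials $1 - (\text{monomial in } Q_i)\cdot q^{\pm}$, and passing from $m$ to $m-1$ either adds or removes exactly one such binomial in each product; the theta-function prefactor $\Theta_3((-1)^M(q^{m+1}\Xi)^{-1}; Q^{1/2})$ shifts its argument by a factor of $q$. Telescoping these products and collecting the $q$-powers from the factor $(-1)^{m+1} q^{\frac12 m^2 + m + \frac12}$ yields a clean closed form for $a_{0,m}^{(-2)} / a_{0,m-1}^{(-2)}$ as a ratio of two evaluations of elementary expressions. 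Comparing this ratio with the ratio of the two summands of $\wP$ — the first summand evaluated at the eigenvalue $q^{-m-1}$ of $q^{z\partial_z}$, the second evaluated at $q^{-m}$ and accompanied by the shift $z$ — then verifies \eqref{eq-Z-2-qsc} for every $m \geq 0$, and the $m = -1$ case (constant term, where the coefficient of $z^{0}$ must vanish) is checked separately and corresponds to $\Psi(z) \to 1$ as $z \to \infty$ together with the leading binomials $1 - Q^{j+1} q^{z\partial_z}$ killing the relevant term.

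Concretely, I would match $\wP$ against the recursion in this order: (i) the block $q^{-z\partial_z + 1/2}$ in the first summand versus $z\cdot(1 - q^{-z\partial_z})\prod_{l=2}^M(1 - \prod_{a=1}^{l-1}Q_a\, q^{-z\partial_z - 1})$ in the second, which accounts for the elementary $q$-power and the two finite products $\prod_{j=1}^{m+1}(1-q^j)$ and $\prod_{l=2}^M\prod_{i=0}^m(1 - \prod_{a=1}^{l-1}Q_a\, q^i)$ in the denominator of $a_{0,m}^{(-2)}$; (ii) the infinite products $\prod_{j=0}^\infty(1-Q^{j+1}q^{-z\partial_z})(1-Q^{j+1}q^{z\partial_z})$ common to both summands versus the factor $\prod_{j=0}^\infty (1-Q^{j+1}q^{m+1})^{-1}(1-Q^{j+1}q^{-m-1})^{-1}$ together with the numerator $(1-Q^{j+1})^2$; (iii) the two remaining infinite double products in the second summand versus the two double products in $a_{0,m}^{(-2)}$ over $\prod_b Q_b$ and $\prod_a Q_a$; (iv) the two theta-factors against the theta-ratio $\Theta_3((-1)^M(q^{m+1}\Xi)^{-1};Q^{1/2})/\Theta_3((-1)^M\Xi^{-1};Q^{1/2})$, using $q^{z\partial_z - 1}\cdot z^{-m-1} = q^{-m-2}z^{-m-1}$ and $q^{z\partial_z+1}\cdot z^{-m} = q^{-m+1}z^{-m}$ to align the arguments. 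The main obstacle I expect is purely bookkeeping: keeping the infinite products convergent and correctly indexed when peeling off a single binomial, and making sure the two theta-function evaluations in the two summands of $\wP$ line up with the \emph{same} shift of the theta-argument after the $z$-shift — i.e. checking that the recursion is genuinely a ratio of the \emph{same} theta evaluated at consecutive points rather than two unrelated theta values. Once that alignment is confirmed, the identity \eqref{eq-Z-2-qsc} follows termwise and the proof is complete.
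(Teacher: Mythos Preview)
Your approach is correct and essentially identical to the paper's: both compute the ratio $a_{0,m+1}^{(-2)}/a_{0,m}^{(-2)}$ by telescoping the product formula, then reinterpret that two-term recursion as an operator identity by using that $q^{z\partial_z}$ acts diagonally on $z^{-m-1}$ while multiplication by $z$ shifts the index. The paper packages this via two auxiliary operators $\wB,\wC$ with $\wP=\wB+\wC\circ z$ and the equivalent form $\wC^{-1}\wB\,\Psi=z-z\Psi$, but the underlying computation is the same.

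One small correction on the boundary case: the vanishing of the $z^0$ coefficient is \emph{not} caused by any factor $(1-Q^{j+1}q^{z\partial_z})$ being zero (on $z^0$ those evaluate to $1-Q^{j+1}\neq 0$). Rather, it is a genuine identity: the first summand of $\wP$ applied to $1$ must cancel against the second summand applied to $a_{0,0}^{(-2)}z^{-1}$, which amounts to checking that $a_{0,0}^{(-2)}$ equals minus the ratio of the two operator pieces evaluated at the eigenvalue $1$ --- this is the ``$m=-1$'' instance of the same recursion and has to be verified directly from the closed formula. What \emph{does} vanish trivially is the $z^1$ coefficient, because the factor $(1-q^{-z\partial_z})$ in the second summand annihilates the constant $1$ before the shift by $z$.
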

\begin{proof}
Notice that for every $m\geq 0$ one has:
\begin{equation*}
\begin{split}
\frac{a_{0,m+1}^{(-2)} }{a_{0,m}^{(-2)}}=
& - \frac{\Theta_3 \big( (-1)^M (q^{m+2}  \Xi)^{-1} ; Q^{\half} \big)}
{\Theta_3 \big( (-1)^M (q^{m+1}  \Xi)^{-1} ; Q^{\half} \big)}
\cdot
\frac{q^{m +\frac{3}{2}} }
{1-q^{m+2} }
\cdot \prod_{l=2}^M
\frac{ 1 }
{ 1-\prod_{a=1}^{l-1}Q_a \cdot q^{m+1} } \\
& \cdot \prod_{j=0}^\infty \frac{ 1 }
{\prod_{l=1}^M   ( 1-Q^j \prod_{b=l}^M Q_b \cdot q^{-m-1} ) \cdot
\prod_{k=1}^M   ( 1-Q^{j+1} \prod_{a=1}^{k-1} Q_a \cdot q^{m+1}) } \\
& \cdot \prod_{j=0}^\infty
\frac{(1-Q^{j+1} q^{m+1}) \cdot (1-Q^{j+1} q^{-m-1})}
{(1-Q^{j+1} q^{m+2}) \cdot (1-Q^{j+1} q^{-m-2})}.
\end{split}
\end{equation*}
Now denote by $\wB$, $\wC$ the following operators:
\begin{equation*}
\begin{split}
\wB =
&  \Theta_3 \big( (-1)^M  \Xi^{-1} q^{z\pd_z -1} ; Q^{\half} \big)
\cdot
q^{-z\pd_z +\half}
\cdot \prod_{j=0}^\infty
(1-Q^{j+1} q^{-z\pd_z}) (1-Q^{j+1} q^{z\pd_z}) ,
\end{split}
\end{equation*}
and
\begin{equation*}
\begin{split}
\wC =
&  \Theta_3 \big( (-1)^M  \Xi^{-1} q^{z\pd z} ; Q^{\half} \big)
\cdot
( 1-q^{-z\pd_z +1} )
\cdot \prod_{l=2}^M
( 1-\prod_{a=1}^{l-1}Q_a \cdot q^{-z\pd_z} ) \\
& \cdot
\prod_{j=0}^\infty \prod_{l=1}^M   ( 1-Q^j \prod_{b=l}^M Q_b \cdot q^{z\pd_z} ) \cdot
\prod_{j=0}^\infty \prod_{k=1}^M   ( 1-Q^{j+1} \prod_{a=1}^{k-1} Q_a \cdot q^{-z\pd_z}) \\
& \cdot \prod_{j=0}^\infty
 (1-Q^{j+1} q^{-z\pd_z+1}) \cdot (1-Q^{j+1} q^{z\pd_z-1}).
\end{split}
\end{equation*}
Then for every $m\geq 0$,
\begin{equation*}
\begin{split}
\wB (z^{-m-1}) =
 \Theta_3 \big( (-1)^M  (\Xi q^{m+2})^{-1} ; Q^{\half} \big)
q^{m+\frac{3}{2} }
 \prod_{j=0}^\infty
(1-Q^{j+1} q^{+1}) (1-Q^{j+1} q^{-m-1})
\cdot z^{-m-1}
\end{split}
\end{equation*}
is a constant multiple of $z^{-m-1}$,
and thus
\begin{equation*}
\begin{split}
&\wC^{-1} \wB (z^{-m-1}) \\ =&
 \Theta_3 \big( (-1)^M  (\Xi q^{m+2})^{-1} ; Q^{\half} \big)
q^{m+\frac{3}{2} } \cdot
 \prod_{j=0}^\infty
(1-Q^{j+1} q^{+1}) (1-Q^{j+1} q^{-m-1})
\cdot \wB^{-1}( z^{-m-1}) \\
=& - \frac{a_{0,m+1}^{(-2)} }{a_{0,m}^{(-2)}} \cdot z^{-m-1}.
\end{split}
\end{equation*}
Moreover,
one has $\wC^{-1} \wB (1) = - a_{0,0}^{(-2)}$,
and thus:
\begin{equation*}
\wC^{-1} \wB \big(\Psi(z)\big) = -a_{0,0}^{(-2)} - \sum_{m\geq 0} a_{0,m+1}^{(-2)} z^{-m-1}
 = z - z \cdot \Psi(z),
\end{equation*}
which implies:
\begin{equation*}
\wC (\wC^{-1}\wB + z) \Psi(z) = \wC(z) =0.
\end{equation*}
Here $\wC(z) = 0$ since $( 1-q^{-z\pd_z +1} )(z)  = z-z =0$.
Therefore the conclusion follows from the fact $\wP = \wB + \wC\circ z$
as operators acting on formal series in $z$.
\end{proof}

\begin{Example}
When $M=1$ and $Q_1 = 0$,
the fermionic one-point function $\Psi(z)$ becomes the quantum dilogarithm
\ben
\Psi(z) = \sum_{n \geq 0} \frac{(-1)^n\cdot q^{n^2/2}}{\prod_{j=1}^n (1-q^j)} z^{-n} ,
\een
and the $q$-difference equation \eqref{eq-Z-2-qsc} reduces to:
\begin{equation*}
\Big( q^{-z\pd_z + \half} + z(1-q^{-z \pd_z}) \Big) \Psi(z) =0.
\end{equation*}
This is equivalent to:
\begin{equation}
\Big( q^{-\half} z^{-1} + q^{z\pd_z} -1 \Big) \Psi(z) =0,
\end{equation}
which matches with the discussions for the one-legged topological vertex in \cite[\S 5.10]{adkmv}.
\end{Example}

\section{Example: the Local $\mathbb{P}^2$ Model}
\label{sec:P2}
In this section,
we use the topological vertex and our explicit construction in \S \ref{sec:def Ztotal}
to compute some leading terms of the total partition function of the local $\mathbb{P}^2$ model.
We use them to check the simplest P\"ucker relations for the coefficients of the Schur expansion of the total partition function.

For the local $\mathbb{P}^2$ model,
we have $M=3$ and $\gamma_1=\gamma_2=\gamma_3=1$,
and the K\"ahler parameters are $Q_1=Q_2=Q_3=Q$.
Then by the gluing rule \cite{akmv},
the open string partition function of the local $\mathbb{P}^2$ model is:
\begin{equation*}
	\begin{split}\label{eqn:ZP2}
		Z^{\mathbb{P}^2} (\mathbf{t};Q)
		= \sum_{\lambda \in \cP} \sum_{\mu^1,\mu^2, \mu^3 \in \cP}
		(-Q)^{\sum_{i=1}^3 |\mu^i|}
		q^{\sum_{i=1}^3 \kappa_{\mu^i}}
		\cdot
		W_{\mu^{3,t},\lambda,\mu^1} W_{\mu^{1,t},\emptyset,\mu^2}
		W_{\mu^{2,t},\emptyset,\mu^3} \cdot s_\lambda(\mathbf{t}).
	\end{split}
\end{equation*}
Consider the coefficients of its Schur function expansion:
\begin{equation*}
	\begin{split}
		Z_{\lambda}^{\mathbb{P}^2} (Q) =  \sum_{\mu^1,\mu^2, \mu^3 \in \cP}
		(-Q)^{\sum_{i=1}^3 |\mu^i|}
		q^{\sum_{i=1}^3 \kappa_{\mu^i}}
		\cdot
		W_{\mu^{3,t},\lambda,\mu^1} W_{\mu^{1,t},\emptyset,\mu^2}
		 W_{\mu^{2,t},\emptyset,\mu^3}.
	\end{split}
\end{equation*}
One can verify that the coefficients $\{Z_{\lambda}^{\mathbb{P}^2} (Q)\}_{\lambda\in\cP}$
do not satisfy Pl\"ucker relation
since the open string partition function $Z^{\mathbb{P}^2} (\mathbf{t};Q)$ itself
is not a tau-function of the KP hierarchy (see the discussion in \cite[\S 5.13]{adkmv}).

From the construction in \S \ref{sec:def Ztotal} of this paper,
the total partition function of the local $\mathbb{P}^2$ model is given by
\be\label{eqn:P2 total}
Z^{\mathbb{P}^2,\text{total}} (\mathbf{t};Q) =
\sum_{N\in \bZ} Z^{\mathbb{P}^2,(N)}(\mathbf{t}; Q)\cdot \Xi^{-N}.
\ee
For convenience,
for each $N\in\mathbb{Z}$
we expand the $N$-component part $Z^{\mathbb{P}^2,(N)}(\mathbf{t}; Q_i,\gamma_i)$ in terms of Schur functions as
\be
Z^{\mathbb{P}^2,(N)}(\mathbf{t};Q_i,\gamma_i)
= \sum_{\lambda \in \cP}
Z^{\mathbb{P}^2,(N)}_\lambda (Q_i,\gamma_i) \cdot s_\lambda(\mathbf{t}).
\ee
Then from equation \eqref{eq-defZN-2},
we have
\be\label{eqn:Znla from Zla}
Z^{\mathbb{P}^2,(N)}_\lambda (\mathbf{t};Q_i,\gamma_i)
=q^{-N| \lambda |}  Q^{\frac{N^2}{2}} (-1)^{\frac{3N^2}{2}}
q^{\frac{3N(4N^2 -1)}{8}} \cdot
Z^{\mathbb{P}^2}_\lambda(q^{3N}Q).
\ee

In what follows we list some data for this model,
and these data are useful when we verify the Pl\"ucker equation of this model at the end of this section.
We will use the notation:
\begin{align}
	[k] =q^{k/2}-q^{-k/2}.
\end{align}

\begin{Example}\label{ex:0}
	For the constant part of the open string partition function, i.e. the part corresponding to $\lambda=\emptyset$,
	we have:
	\begin{align*}
		Z_{\emptyset}^{\mathbb{P}^2,(0)} (Q)
		=&1
		+Q\cdot(-3[1]^{-2})
		+Q^2\cdot(-\frac{3}{2}[2]^{-2}+\frac{9}{2}[1]^{-4}+6[1]^{-2})\\
		&+Q^3\cdot(-[3]^{-2}+\frac{9}{2}[2]^{-2}[1]^{-2}-\frac{9}{2}[1]^{-6}-18[1]^{-4}-27[1]^{-2}-10)+O(Q^4).
	\end{align*}
	Then from equation \eqref{eqn:Znla from Zla},
	for $N=1,2$, we have
	\begin{align*}
		& Z_{\emptyset}^{\mathbb{P}^2,(1)} (Q)
		=-\sqrt{-1} Q^{1/2} q^{9/8}
		\cdot Z_{\emptyset}^{\mathbb{P}^2,(0)} (q^3 Q),\\
		& Z_{\emptyset}^{\mathbb{P}^2,(2)} (Q)
		=q^{45/4} Q^2
		\cdot Z_{\emptyset}^{\mathbb{P}^2,(0)} (q^6 Q).
	\end{align*}
	For $N=-1$,
	we have
	\begin{align*}
	Z_{\emptyset}^{\mathbb{P}^2,(-1)} (Q)
	=-\sqrt{-1} Q^{1/2} q^{-9/8}
	\cdot Z_{\emptyset}^{\mathbb{P}^2,(0)} (q^{-3} Q).
\end{align*}
\end{Example}

\begin{Example}\label{ex:1}
	For the degree $1$ part,
	there is only one partition $\lambda=(1)$.
	We have
	\begin{align*}
		Z_{(1)}^{\mathbb{P}^2,(0)} (Q)
		=&[1]^{-1}
		+Q\cdot(-2[1]^{-1}-3[1]^{-3})\\
		&+Q^2\cdot(-\frac{3}{2}[2]^{-2}[1]^{-1}+\frac{9}{2}[1]^{-5}+12[1]^{-3}+5[1]^{-1})+O(Q^3).
	\end{align*}
	Then from equation \eqref{eqn:Znla from Zla},
	for $N=\pm1$, we have
	\begin{align*}
		Z_{(1)}^{\mathbb{P}^2,(1)} (Q)
		=&-\sqrt{-1} Q^{1/2} q^{1/8}
		\cdot Z_{(1)}^{\mathbb{P}^2,(0)} (q^3 Q),\\
		Z_{(-1)}^{\mathbb{P}^2,(1)} (Q)
		=&-\sqrt{-1} Q^{1/2} q^{-1/8}
		\cdot Z_{(1)}^{\mathbb{P}^2,(0)} (q^{-3} Q).
	\end{align*}
\end{Example}

\begin{Example}\label{ex:2}
	For the degree $2$ part,
	there are two partitions $\lambda=(2), (1,1)$.
	For the first case $\lambda=(2)$,
	we have
	\begin{align*}
		Z_{(2)}^{\mathbb{P}^2,(0)} (Q)
		=&\frac{1}{2}[2]^{-1}+\frac{1}{2}[1]^{-2}
		+Q\cdot\Big([2]^{-1}\big(-\frac{3}{2}[1]^{-2}-2-\frac{1}{2}[1]^{2}\big)
		-\frac{3}{2}[1]^{-4}-2[1]^{-2}-\frac{1}{2}\Big)\\
		&+Q^2\cdot\Big(-\frac{3}{4}[2]^{-3}-\frac{3}{4}[2]^{-2}[1]^{-2}
		+2[2]^{-1}\big(\frac{9}{4}[1]^{-4}+9[1]^{-2}+\frac{17}{2}+[1]^2\big)\\
		&\qquad+\frac{9}{4}[1]^{-6}+9[1]^{-4}+\frac{17}{2}[1]^{-2}+2\Big)
		+O(Q^3).
	\end{align*}
	Then from equation \eqref{eqn:Znla from Zla},
	for $N=\pm1$, we have
	\begin{align*}
		& Z_{(2)}^{\mathbb{P}^2,(1)} (Q)
		=-\sqrt{-1} Q^{1/2} q^{-7/8}
		\cdot Z_{(2)}^{\mathbb{P}^2,(0)} (q^3 Q),\\
		& Z_{(2)}^{\mathbb{P}^2,(-1)} (Q)
		=-\sqrt{-1} Q^{1/2} q^{7/8}
		\cdot Z_{(2)}^{\mathbb{P}^2,(0)} (q^{-3} Q).
	\end{align*}
	
	For the second case $\lambda=(1,1)$,
	we have
	\begin{align*}
		Z_{(1,1)}^{\mathbb{P}^2,(0)} (Q)
		=&-\frac{1}{2}[2]^{-1}+\frac{1}{2}[1]^{-2}
		+Q\cdot\Big([2]^{-1}\big(\frac{3}{2}[1]^{-2}+2+\frac{1}{2}[1]^{2}\big)
		-\frac{3}{2}[1]^{-4}-2[1]^{-2}-\frac{1}{2}\Big)\\
		&+Q^2\cdot\Big(\frac{3}{4}[2]^{-3}-\frac{3}{4}[2]^{-2}[1]^{-2}
		+[2]^{-1}\big(-\frac{9}{4}[1]^{-4}-9[1]^{-2}-\frac{17}{2}-2[1]^2\big)\\
		&\qquad+\frac{9}{4}[1]^{-6}+9[1]^{-4}+\frac{17}{2}[1]^{-2}+2\Big)
		+O(Q^3).
	\end{align*}
	Then from equation \eqref{eqn:Znla from Zla},
	for $N=\pm1$, we have
	\begin{align*}
		& Z_{(1,1)}^{\mathbb{P}^2,(1)} (Q)
		=-\sqrt{-1} Q^{1/2} q^{-7/8}
		\cdot Z_{(1,1)}^{\mathbb{P}^2,(0)} (q^3 Q),\\
		& Z_{(1,1)}^{\mathbb{P}^2,(-1)} (Q)
		=-\sqrt{-1} Q^{1/2} q^{7/8}
		\cdot Z_{(1,1)}^{\mathbb{P}^2,(0)} (q^{-3} Q).
	\end{align*}
\end{Example}

\begin{Example}\label{ex:3}
	For the degree 3 part,
	consider the case $\lambda=(2,1)$.
	We have
	\begin{align*}
		Z_{(2,1)}^{\mathbb{P}^2,(0)} (Q)
		=&-\frac{1}{3}[3]^{-1}
		+\frac{1}{3}[1]^{-3}\\
		&+Q\cdot\Big([3]^{-1}\big([1]^{-2}+2+\frac{2}{3}[1]^2\big)-[1]^{-5}-2[1]^{-3}-\frac{2}{3}[1]^{-1}\Big)
		+O(Q^2).
	\end{align*}
	Then from equation \eqref{eqn:Znla from Zla},
	for $N=\pm1$, we have
	\begin{align*}
		& Z_{(2,1)}^{\mathbb{P}^2,(1)} (Q)
		=-\sqrt{-1} Q^{1/2} q^{-15/8}
		\cdot Z_{(2,1)}^{\mathbb{P}^2,(0)} (q^3 Q),\\
		& Z_{(2,1)}^{\mathbb{P}^2,(-1)} (Q)
		=-\sqrt{-1} Q^{1/2} q^{15/8}
		\cdot Z_{(2,1)}^{\mathbb{P}^2,(0)} (q^{-3} Q).
	\end{align*}
\end{Example}

\begin{Example}\label{ex:4}
	For the degree $4$ part,
	we consider the case $\lambda=(2,2)$,
	which will appear in the first nontrivial Pl\"ucker relation
the total partition function of the local $\mathbb{P}^2$ model satisfies.
We have:
	\begin{align*}
		Z_{(2,2)}^{\mathbb{P}^2,(0)} (Q)
		=[3]^{-1}[2]^{-2}[1]^{-1}
		+Q\cdot\big([3]^{-1}[2]^{-2}[1]^{-1}
		\cdot(-3[1]^{-2}-8-2[1]^2)\big)
		+O(Q^2).
	\end{align*}
	Then from equation \eqref{eqn:Znla from Zla},
	for $N=\pm1$, we have
	\begin{align*}
		& Z_{(2,2)}^{\mathbb{P}^2,(1)} (Q)
		=-\sqrt{-1} Q^{1/2} q^{-23/8}
		\cdot Z_{(2,2)}^{\mathbb{P}^2,(0)} (q^3 Q),\\
		& Z_{(2,2)}^{\mathbb{P}^2,(-1)} (Q)
		=-\sqrt{-1} Q^{1/2} q^{23/8}
		\cdot Z_{(2,2)}^{\mathbb{P}^2,(0)} (q^{-3} Q).
	\end{align*}
\end{Example}

Using the above data
we now verify the first Pl\"ucker relation of the total partition function of the local $\mathbb{P}^2$ model.
Recall that
for a tau-function of the KP hierarchy with the following Schur function expansion
\begin{align*}
	Z(\mathbf{t})
	=\sum_{\lambda\in\mathcal{P}}
	c_{\lambda} \cdot s_{\lambda}(\mathbf{t}),
\end{align*}
the first nontrivial equation for these coefficients $\{c_{\lambda}\}_{\lambda\in\mathcal{P}}$ reads
\begin{align}\label{eqn:1st H}
	c_{(2,2)}\cdot c_{\emptyset}=
	\left|\begin{matrix}
		c_{(2,1)} & c_{(2)}\\
		c_{(1,1)} & c_{(0,0)}
	\end{matrix}
	\right|.
\end{align}
Now for convenience,
we expand the $\lambda$-part of the total partition function of this model as:
\begin{align*}
	Z^{\mathbb{P}^2,\text{total}}_{\lambda} (Q) =
	\sum_{N\in \bZ} Z^{\mathbb{P}^2,(N)}_{\lambda}(Q)\cdot \Xi^{-N},
\end{align*}
where $Z^{\mathbb{P}^2,\text{total}}_{\lambda} (Q)$ is
the coefficient of $s_{\lambda}(\mathbf{t})$ in the total partition function of the local $\mathbb{P}^2$ model in \eqref{eqn:P2 total}.
For the right-hand side of \eqref{eqn:1st H},
by Examples \ref{ex:1}-\ref{ex:3} we have:
\begin{align}\label{eqn:H r}
	\begin{split}
	&\left|\begin{matrix}
		Z_{(2,1)}^{\mathbb{P}^2,\text{total}} (Q) & Z_{(2)}^{\mathbb{P}^2,\text{total}} (Q)\\
		Z_{(1,1)}^{\mathbb{P}^2,\text{total}} (Q) & Z_{(1)}^{\mathbb{P}^2,\text{total}} (Q)
	\end{matrix}
	\right|\\
	=&\frac{q^{4}}{(q-1)^{4} (q^{2}+q+1) (q+1)^{2}}
	-\sqrt{-1}Q^{1/2}\cdot\frac{(q^{4}+1)\big(q^{\frac{9}{8}}\Xi^{-1}+q^{\frac{23}{8}}\Xi\big)}{(q+1)^{2} (q-1)^{4} (q^{2}+q+1)}\\
	&-Q \Big(\frac{q^{\frac{9}{4}}\Xi^{-2}+q^{\frac{23}{4}}\Xi^{2}}{(q+1)^{2} (q-1)^{4} (q^{2}+q+1)} +\frac{q^{10}-2 q^{9}+q^{8}+2 q^{7}+2 q^{5}+2 q^{3}+q^{2}-2 q+1}{(q-1)^{6} (q^{2}+q+1) (q+1)^{2}}\Big)\\
	&+\sqrt{-1}Q^{3/2}\cdot
	\frac{(3 q^{4}-4 q^{3}+4 q^{2}-4 q+3)\big(q^{\frac{17}{8}}\Xi^{-1}+q^{\frac{7}{8}}\Xi\big)}{(q+1) (q-1)^{6}}
	+O(Q^2).
	\end{split}
\end{align}
And for the left-hand side of \eqref{eqn:1st H},
by Examples \ref{ex:0} and \ref{ex:4} we have:
\begin{align}
	\begin{split}
	&Z_{(2,2)}^{\mathbb{P}^2,\text{total}} (Q)
	\cdot Z_{\emptyset}^{\mathbb{P}^2,\text{total}} (Q)\\
	=&\frac{q^{4}}{(q-1)^{4} (q^{2}+q+1) (q+1)^{2}}
	-\sqrt{-1}Q^{1/2}\cdot\frac{(q^{4}+1)\big(q^{\frac{9}{8}}\Xi^{-1}+q^{\frac{23}{8}}\Xi\big)}{(q+1)^{2} (q-1)^{4} (q^{2}+q+1)}\\
	&-Q \Big(\frac{q^{\frac{9}{4}}\Xi^{-2}+q^{\frac{23}{4}}\Xi^{2}}{(q+1)^{2} (q-1)^{4} (q^{2}+q+1)} +\frac{q^{10}-2 q^{9}+q^{8}+2 q^{7}+2 q^{5}+2 q^{3}+q^{2}-2 q+1}{(q-1)^{6} (q^{2}+q+1) (q+1)^{2}}\Big)\\
	&+\sqrt{-1}Q^{3/2}\cdot
	\frac{(3 q^{4}-4 q^{3}+4 q^{2}-4 q+3)\big(q^{\frac{17}{8}}\Xi^{-1}+q^{\frac{7}{8}}\Xi\big)}{(q+1) (q-1)^{6}}
	+O(Q^2),
	\end{split}
\end{align}
which matches with \eqref{eqn:H r} and verifies the leading terms of the equation \eqref{eqn:1st H}.

\vspace{.2in}

{\em Acknowledgements}.
The authors thank Prof. Qingsheng Zhang and Prof. Huaiqing Zuo for helpful discussions.
The first and second authors thank Prof. Huijun Fan, Prof. Xiaobo Liu, and Prof. Xiangyu Zhou for encouragement.
The second author is partly supported by the NSFC grant (No. 12401079).
The third author is partly supported by the NSFC grants (No. 12371254, 11890662, 12061131014).

\end{document}